\newcommand{\Vir}{\mathrm{Vir}}
\newcommand{\Z}{\mathbb{Z}}
\newcommand{\C}{\mathbb{C}}
\newcommand{\Lam}{|\Lambda\rangle}
\theoremstyle{plain}
\newtheorem{thm}{Theorem}[section]
\newtheorem{cor}[thm]{Corollary}
\newtheorem{lem}[thm]{Lemma}
\newtheorem{prop}[thm]{Proposition}
\newtheorem{conj}[thm]{Conjecture}
\newtheorem{dfn}[thm]{Definition}
\newtheorem{re}[thm]{Remark}
\numberwithin{equation}{section}
\begin{document}

\title{Irregular conformal blocks, with an application to the fifth and fourth Painlev\'e equations}
\author{Hajime Nagoya}
\address{Rikkyo University}
\email{nagoya.hajime@rikkyo.ac.jp}
\begin{abstract}
We develop the theory of irregular conformal 
blocks of the Virasoro algebra. 
In previous studies,  expansions of 
irregular conformal 
blocks at regular singular points were 
obtained as degeneration limits of regular conformal blocks; however,  
such expansions  
at irregular singular points were not 
clearly understood. This is because 
precise definitions of irregular vertex 
operators had not been provided previously. 
In this paper, 
we present precise definitions of 
irregular vertex operators of two types 
and we prove that one of our vertex operators 
 exists uniquely. 
Then, we define irregular 
conformal blocks with at most two 
irregular singular points as 
expectation values of given 
irregular vertex operators.  
Our definitions  
provide an understanding of 
 expansions of irregular conformal blocks 
and enable us to obtain  
expansions  
at irregular singular points.

As an application, we propose conjectural formulas 
of series expansions of the tau functions of the fifth and fourth Painlev\'e equations, 
using expansions  of irregular conformal blocks  at an irregular singular point.  
\end{abstract}
\maketitle

\section{Introduction}

%In this paper, we deal with the fourth and fifth Painlev\'e functions 
%and Virasoro conformal field theory. 

Conformal blocks  are 
building blocks of the correlation functions 
of two-dimensional 
conformal field theory \cite{BPZ}.  
They are defined as the expectation values 
\begin{equation*}
\left\langle
\Phi(z_N)\cdots\Phi(z_1)
\right\rangle
\end{equation*}
of the vertex operators $\Phi(z_i)$
on the Verma 
modules of the Virasoro algebra satisfying 
the commutation relations
\begin{equation*}
\left[L_n, \Phi(z_i)\right]=z_i^n\left(z_i\frac{\partial}{\partial z_i}
+(n+1)\Delta_i \right)\Phi(z_i),  
\end{equation*} 
where $L_n$ ($n\in \Z$) are generators of the  Virasoro algebra, and the complex parameters 
$\Delta_i$ ($i=1,\ldots,N$) are conformal dimensions. The conformal blocks 
can be viewed as special functions 
with Virasoro symmetry.

Conformal blocks can be calculated 
from the definition. They are formal power 
series in $z_i/z_{i+1}$ ($i=1,\ldots, N-1$) 
and are believed to be absolutely convergent 
in $|z_1|<\cdots <|z_N|$.  
As a special case, the 4-point conformal block with one null vector condition 
becomes the hypergeometric series. 
In the general case, 
however, their 
explicit series expansions had not been 
determined.  
Recently, 
Alday, Gaiotto and Tachikawa proposed that 
the Virasoro conformal 
blocks correspond to the instanton parts of 
the Nekrasov partition functions of a particular class of four-dimensional  supersymmetric 
gauge theories \cite{AGT}, and 
their proposed correspondence was proved in \cite{AFLT}. 
 Given this so-called AGT correspondence, we have  
explicit series expansions of 
the conformal blocks. 

Conformal blocks with null vector 
conditions satisfy 
partial differential systems 
with regular singularities, so-called
Belavin-Polyakov-Zamolodchikov equations (BPZ), whereas 
the general conformal blocks whose 
central charge $c=1$ appear in series 
expansions of the tau functions of 
 Garnier systems \cite{Iorgov Lisovyy Teschner}, which describe isomonodromy deformations of $2\times 2$ 
Fuchsian systems. Therefore,  
conformal blocks can be viewed 
as special functions with regular 
singularities.

Irregular versions of conformal blocks have 
been studied in relation to four-dimensional 
supersymmetric gauge theories and 
quantum Painlev\'e equations. 
Note that canonical quantization 
of the sixth Painlev\'e equation 
is the BPZ equation for five points. 
Particular 
irregular vertex operators of Wess-Zumino-Novikov-Witten conformal 
field theory were presented by free field realizations,  and  
a recursion rule for constructing integral representations of 
irregular  conformal blocks  was provided 
in \cite{Nagoya Sun 2010}. 
These realizations 
%of the irregular vertex operators and conformal blocks 
%in WZNW conformal
% field theory
are also applicable to free field realizations of irregular vertex operators 
and irregular conformal blocks in 
 Virasoro conformal field theory. 
%Hence, we know for special case, 
%what is the irregular vertex operators 
%and conformal blocks.  
Hence, in a special case, we have integral formulas of irregular conformal blocks.

Another method of obtaining irregular conformal blocks, initiated by Gaiotto \cite{Gaiotto}, 
is to take pairings of 
irregular vectors $|\Lambda\rangle$ and $\langle \Lambda' |$ 
embedded in a Verma module and a dual Verma module, respectively, 
such that for non-negative integers $r,s,$ and tuples $\Lambda=(\Lambda_r,\Lambda_{r+1},\ldots, 
\Lambda_{2r})$, $\Lambda'=(\Lambda'_s,\Lambda'_{s+1},\ldots, 
\Lambda'_{2s})$, 
\begin{align}
&L_n |\Lambda\rangle=\Lambda_n |\Lambda\rangle \quad (n=r,\ldots, 2r),
\quad L_n |\Lambda\rangle=0\quad (n>2r), \label{eq_irregular_vector}
\\
&\langle \Lambda' |L_n=\Lambda'_n\langle \Lambda' |\quad (n=-s,\ldots, -2s),
\quad \langle \Lambda' |L_n=0 \quad (n<-2s).  \nonumber
\end{align} 
Another type of irregular vector was 
also introduced in \cite{BMT}. 
The existence of irregular vectors in a Verma module of the Virasoro algebra 
of the $r=1$ case was verified in \cite{MMM}, \cite{Yanagida}. For the $r>1$ case, 
the construction of irregular vectors was performed in \cite{FJK} and it was 
 revealed that the first-order irregular vectors are uniquely determined 
by the condition \eqref{eq_irregular_vector}, up to a scalar.  However, the higher-order irregular vectors contain 
an infinite number of parameters, and thus they are not unique.  

On the other hand, as the confluent hypergeometric series is obtained from 
the hypergeometric series by confluence, 
irregular conformal blocks can be derived from ordinary conformal blocks 
using particular delicate
 limiting procedures \cite{AFKMY 2010}, 
 \cite{GT}, \cite{Nishinaka Rim 2012}, \cite{Choi Rim 2015}. 
 %A general scheme of taking collision limits was proposed in \cite{GT}. 
 We emphasize  that some properties of irregular conformal blocks might be 
 limits of regular conformal blocks; however, it might be easier to derive the results 
 directly than to justify the limiting procedures, as is true for the hypergeometric function 
 and  the confluent hypergeometric function.

%Summarizing the above, 
%in the previous studies, 
%irregular conformal blocks were obtained 
%as
%\begin{enumerate}
%\item norms of irregular vectors \cite{Gaiotto},
%\item collision limits of regular conformal blocks \cite{AFKMY 2010}, \cite{Nishinaka Rim 2012}, \cite{GT}, \cite{Choi Rim 2015}
%\item integral formulas by free field 
%realizations \cite{Nagoya Sun 2010}. 
%\end{enumerate}

%We had been missing a direct (or straight) 
%definition of irregular vertex 
%operators and irregular conformal blocks. 
%Recall that in the regular case, the vertex 
%operators are uniquely determined by the 
%commutation relations and 
%the conformal blocks are defined 
%as expectation values of the vertex 
%operators. 

In this study,  we provide precise  
definitions of irregular vertex 
operators of two types  
and we prove that one of our vertex operators 
exists uniquely. 
Then, we directly 
define irregular conformal blocks with at most two 
irregular singular points, as 
expectation values of the newly introduced 
vertex 
operators. 
Recall that in the regular case, the vertex 
operators are uniquely determined from the 
commutation relations and three conformal dimensions, 
and 
the conformal blocks are defined 
as expectation values of the vertex 
operators.  
 
Our definitions  
provide an understanding of 
series expansions of irregular conformal blocks 
and  enable us to obtain series 
expansions of irregular conformal blocks 
at irregular singular points.

As an application, we present conjectural formulas 
of series expansions of the tau functions of the fifth and fourth Painleve equations, 
using our irregular conformal blocks, 
following a recent remarkable discovery 
by Gamayun, Iorgov and Lisovyy \cite{GIL1} that 
the tau function of the sixth Painlev\'e 
equation is expressed as 
a series expansion in terms of 
a 4-point conformal block. By taking 
scaling limits, short-distance expansions 
of the tau functions of $\mathrm{P_V}$, 
$\mathrm{P_{III}}$ in terms of 
irregular conformal blocks 
were obtained in \cite{GIL2}. Furthermore, 
it was conjectured in 
\cite{Its Lisovyy Tykhyy 2014} that 
a long-distance expansion of 
the tau function of a special case of 
the third Painlev\'e equation $\mathrm{P_{III}}(D_8)$ \cite{OKSO} is 
also a series expansion in terms of some function  
 determined from the differential equation
 $\mathrm{P_{III}}(D_8)$.

 In the $\mathrm{P_V}$ case, the tau function 
is expected to be a series expansion in terms  
of 3-point irregular conformal 
blocks  with two regular singular points and one irregular 
singular point of rank 1. Naively, this is expressed 
as 
\begin{equation*}
\langle \Delta | \Phi(z) |\Lambda\rangle,
\end{equation*}
where $\langle \Delta |$ is the 
highest weight vector of the 
dual Verma module $V^*_{\Delta}$, 
$\Phi(z)$ is a usual vertex operator,  
and $|\Lambda\rangle$ is 
the  irregular vector of rank 1 with $\Lambda=(\Lambda_1,\Lambda_2)$. 
Our definitions of irregular conformal blocks distinguish between 
\begin{equation}\label{PV regular}
\left( \langle \Delta | \Phi^{*,\Delta_2}_{\Delta,\Delta_1}(z)\right)\cdot  |\Lambda\rangle, 
\end{equation}
and 
\begin{equation}\label{PV irregular}
\langle \Delta| \cdot \left(\Phi^{\Delta_2}_{\Lambda', \Lambda}(z) |\Lambda\rangle\right), 
\end{equation}
where $\Phi^{*,\Delta_2}_{\Delta,\Delta_1}(z)$: $V^*_\Delta\to V^*_{\Delta_1}$ 
is the usual vertex operator,  $\Phi^{\Delta_2}_{\Lambda', \Lambda}(z)$: $V^{[1]}_{\Lambda}
\to V^{[1]}_{\Lambda'}$ is 
the newly introduced rank $0$ vertex operator from a rank $1$ Verma module 
to another rank $1$ Verma module such that $\Lambda'=(\Lambda_1',\Lambda_2)$. 
The expectation values \eqref{PV regular}, \eqref{PV irregular} are computed by the pairings 
$V^*_{\Delta_1}\times V^{[1]}_\Lambda\to\C$, 
$V^*_{\Delta}\times V^{[1]}_{\Lambda'}\to\C$, respectively. 
The irregular conformal block \eqref{PV regular} is the expansion at a regular singular point 
and, in fact,  is equal to the one  
derived by the collision limit of the 4-point conformal block and 
the building block of 
the short-distance expansion of the fifth Painlev\'e tau function obtained 
in \cite{GIL2}. Using the irregular conformal block \eqref{PV irregular},  
the expansion at an irregular singular point, we present   
a conjectural long-distance expansion of the fifth Painlev\'e tau function. 

In the $\mathrm{P_{IV}}$ case, as a building block of a series expansion of the tau function,
 we use the 2-point irregular conformal 
blocks  with one regular singular point and one irregular 
singular point of rank 2. This is expressed as 
follows: 
\begin{equation*}
\langle 0| \cdot \left( \Phi^{\Delta_2}_{\Lambda',\Lambda}(z) |\Lambda\rangle\right), 
\end{equation*}
where $\Phi^{\Delta_2}_{\Lambda',\Lambda}(z)$: $V^{[2]}_{\Lambda}
\to V^{[2]}_{\Lambda'}$ is 
the newly introduced rank $0$ vertex operator from a rank $2$ Verma module 
to another rank $2$ Verma module, $\Lambda=(\Lambda_2,\Lambda_3,\Lambda_4)$ 
and $\Lambda'=(\Lambda_2',\Lambda_3,\Lambda_4)$. 
The expectation value is computed by the pairing  
$V^*_{0}\times V^{[2]}_{\Lambda'}\to\C$. 

%It remains an open questions to 
%find Fourier expansions of the tau functions for the other cases
%in terms of irregular conformal blocks. 

The remainder of this paper is organized as follows. 
In Section 2, we introduce a rank $r$ Verma module of 
the Virasoro algebra as a special Whittaker module of 
that algebra. Then, generalizing  
the confluent primary field introduced in \cite{Nagoya Sun 2010}, 
we present definitions of irregular vertex operators of two types, 
and we prove that one of our vertex operators 
 exists uniquely. 
In Section 3, we define irregular conformal blocks 
with at most two 
irregular singular points, using the newly introduced irregular vertex operators 
in the previous section. 
In Section 4, we review  the series expansions of the tau functions 
of the Painlev\'e equations. Then, we propose 
series expansions of the tau functions of $\mathrm{P_V}$ and $\mathrm{P_{IV}}$ 
in terms of our irregular conformal blocks. 

\section{Irregular vertex operators}

\subsection{Modules}

The Virasoro algebra 
\begin{equation*}
\Vir=\bigoplus_{n\in\Z}\C L_n\oplus \C C
\end{equation*}
is the Lie algebra satisfying the following commutation relations:
\begin{align*}
&[L_m, L_n]=(m-n)L_{m+n}+\frac{1}{12}(m^3-m)\delta_{m+n,0}C,
\\
&[\Vir, C]=0,
\end{align*}
where $\delta_{i,j}$ stands for Kronecker's delta. 

Let $r$ be a non-negative integer and $\Vir_r$ the subalgebra generated by $C$ and 
$L_n$ ($n\in\Z_{\ge r}$). 
For a complex number $c$ and an ($r+1$)-tuple of parameters 
$\Lambda=(\Lambda_r,\Lambda_{r+1},\ldots,\Lambda_{2r})\in \C^{r}\times \C^\times$, 
let $\C \Lam$ be the one-dimensional $\Vir_r$-module defined by 
\begin{align*}
&C\Lam=c\Lam, 
\\
&L_n\Lam=\Lambda_n\Lam \quad (n=r,r+1,\ldots, 2r),
\\
&L_n\Lam=0\quad (n>2r). 
\end{align*} 

\begin{dfn}
A rank $r$ Verma module $V^{[r]}_{\Lambda}$ with parameters $\Lambda$ is the induced module 
\begin{equation*}
V^{[r]}_{\Lambda}=\mathrm{Ind}_{\Vir_r}^\Vir \C \Lam\ (=U(\Vir)\otimes_{\Vir_r}\C \Lam). 
\end{equation*}

\end{dfn}
 
 Denote the dual module of $V^{[r]}_{\Lambda}$ by $V^{*,[r]}_{\Lambda}$ such that  
\begin{align*}
&\langle \Lambda |C=c\langle \Lambda |, 
\\
&\langle \Lambda |L_n=\Lambda_n\langle \Lambda | \quad (n=-r,-r-1,\ldots, -2r),
\\
&\langle\Lambda|L_n=0\quad (n<-2r). 
\end{align*} 
  The module $V^{*,[r]}_{\Lambda}$ is spanned by linearly independent vectors 
  of the form
\begin{equation*}
\langle \Lambda |L_{i_1}\cdots L_{i_k}\quad (-r<i_1\le \cdots\le i_k). 
\end{equation*}

\begin{re}
When $r=0$, a rank $0$ Verma module $V^{[0]}_\Lambda$ is a standard Verma module of 
the Virasoro algebra. It is usual to denote $V^{[0]}_\Lambda$ as $V_\Delta$ in  conformal 
field theory. In the case of $r\ge 1$, a rank $r$ Verma module is equal to the 
universal Whittaker module discussed in \cite{OW},  \cite{LGZ} and \cite{FJK}.   
It has been shown that the universal Whittaker modules are irreducible,  if $\Lambda_{2r}\neq 0$ or 
$\Lambda_{2r-1}\neq 0$  \cite{LGZ} 
and \cite{FJK}. Hence, a rank $r$ Verma module $V^{[r]}_{\Lambda}$ 
for $r\ge 1$  
is irreducible. 
\end{re}

% Denote by $V^{[0]}_{\Delta}$, 
%  the irreducible highest weight representation satisfying 
%  \begin{equation*}
%  | \Delta\rangle =\Delta| \Delta\rangle,\quad L_n| \Delta\rangle =0
%  \quad (n>0). 
%  \end{equation*} 
  A bilinear pairing $\langle \cdot \rangle$: $ V^{*,[0]}_\Delta
  \times V^{[1]}_\Lambda\to \C$ is 
uniquely defined by 
\begin{align*}
&\langle \Delta| \cdot |\Lambda\rangle=1,
\\
&\langle u|L_n \cdot |v\rangle =\langle u| \cdot L_n |v\rangle \equiv 
\langle u| L_n |v\rangle, 
\end{align*}
where $\langle u|\in V^{*,[0]}_\Delta$ and $|v\rangle\in V^{[1]}_\Lambda$. 
Denote by $\bar{V}^{*,[0]}_0$ an irreducible 
highest weight module.   
A bilinear pairing $\langle \rangle$: $ \bar{V}^{*,[0]}_0
\times V^{[r]}_\Lambda\to \C$ is 
uniquely defined by 
\begin{align*}
&\langle 0| \cdot |\Lambda\rangle=1,
\\
&\langle u|L_n \cdot |v\rangle =\langle u| \cdot L_n |v\rangle \equiv 
\langle u| L_n |v\rangle, 
\end{align*}
where $\langle u|\in \bar{V}^{*,[0]}_0$ 
and $|v\rangle\in V^{[2]}_\Lambda$, because 
$\langle 0 |L_{1}=0$.   
The bilinear pairings 
on $ V^{*,[1]}_\Lambda
  \times V^{[0]}_\Delta\to \C$ 
  and 
  $ V^{*,[2]}_\Lambda
  \times \bar{V}^{[0]}_0\to \C$ 
  are defined in the same manner.

 For cases other than  the above,  it is possible to define a 
 bilinear pairing using the Heisenberg algebra and its Fock spaces. 
 Let  
 \begin{equation*}
 \mathcal{H}=\bigoplus_{n\in\Z}\C a_n,
 \end{equation*}
 be the Lie algebra with commutation relations
 \begin{equation*}
 [a_m,a_n]=m\delta_{m+n,0}. 
 \end{equation*}
 We call this the Heisenberg algebra. 
 \begin{dfn}
  A (bosonic) rank $r$ Fock space of $\mathcal{H}$ is  
  a representation $\mathcal{F}^{[r]}_{\lambda}$ such that for a given tuple of 
  parameters $\lambda=(\lambda_0,\ldots,\lambda_r)\in \C^r\times \C^\times$, 
\begin{equation*}
a_n|\lambda\rangle=\lambda_n|\lambda\rangle\quad (n=0,1,\ldots, r),
\quad a_n|\lambda\rangle=0\quad (n>r)
\end{equation*}  
  and $\mathcal{F}^{[r]}_{\lambda}$ is the linear span of linearly independent 
  vectors of the form
  \begin{equation*}
  a_{-i_k}\cdots a_{-i_1}|\lambda\rangle\quad (0<i_1\le\cdots\le i_k). 
  \end{equation*}
%  \begin{align*}
%  \mathcal{F}^{[r]}_{\lambda}=&\C[x_1,x_2,\ldots,]\otimes \lam, 
% \\
% \lam=&e^{\sum_{n=0}^r\lambda_nx_n},
% \\
% a_n=&\frac{\partial}{\partial x_n}\quad (n\in\Z_{\ge 0}), 
% \\
% a_{-n}=&nx_n\quad (n\in\Z_{\ge 1}). 
%  \end{align*}
 \end{dfn}
\begin{dfn} 
  A rank $r$ dual Fock space of $\mathcal{H}$ is  
  a representation $\mathcal{F}^{*,[r]}_{\lambda}$ such that for a given tuple of 
  parameters $\lambda=(\lambda_0,\ldots,\lambda_r)\in \C^r\times \C^\times$, 
\begin{equation*}
\langle \lambda|a_n=\lambda_n\langle \lambda|\quad (n=0,-1,\ldots, -r),
\quad \langle \lambda|a_n=0\quad (n<-r)
\end{equation*}  
  and $\mathcal{F}^{*,[r]}_{\lambda}$ is the linear span of linearly independent 
  vectors of the form
  \begin{equation*}
 \langle \lambda| a_{i_1}\cdots a_{i_k}\quad (0<i_1\le\cdots\le i_k). 
  \end{equation*}
 \end{dfn}
 For $\lambda=(\lambda_0,\ldots,\lambda_r)\in \C^r\times \C^\times$ 
 and $\mu=(\lambda_0,\mu_1\ldots,\mu_s)\in \C^s\times \C^\times$, 
  a bilinear pairing $\langle \cdot \rangle$: $ \mathcal{F}^{*,[s]}_\mu
 \times \mathcal{F}^{[r]}_\lambda\to \C$ is 
 defined by 
\begin{align*}
&\langle \mu| \cdot |\lambda\rangle=1,
\\
&\langle u|a_n \cdot |v\rangle =\langle u| \cdot a_n |v\rangle \equiv 
\langle u| a_n |v\rangle, 
\end{align*}
where $\langle u|\in \mathcal{F}^{*,[s]}_\mu$ and $|v\rangle\in \mathcal{F}^{[r]}_\lambda$. 
Note that the first elements of $\lambda$ and $\mu$ must be the same. 
% When $r=0$, it is well known that the Fock space $\mathcal{F}_\lambda$ is 
% irreducible. 
% \begin{prop}
% For a positive integer $r$, 
%the rank $r$ Fock space $\mathcal{F}_\lambda$ is irreducible. 
%  \end{prop}
  
  On a rank $r$ (dual) Fock space, we can define the structure of 
  a $\Vir$-module by 
  the following formulas: 
  \begin{align}
  &L_n=\frac{\epsilon}{2}a^2_{n/2}+\sum_{m>-n/2}a_{-m}a_{m+n}
  -(n+1)\rho a_n,\label{eq L a}
  \\ &C=1-12\rho^2,
  \end{align}
  where $\epsilon=0$ if $n$ is odd and $\epsilon=1$ if $n$ is even. 
  
 \begin{prop}
For $|\lambda\rangle\in\mathcal{F}_\lambda$, 
the operators $L_n$ ($n\ge r$) defined as \eqref{eq L a} act on $|\lambda\rangle$ as follows:
\begin{equation*}
L_n |\lambda\rangle=0\quad (n>2r),\quad 
L_n|\lambda\rangle=\left(\frac{1}{2}\sum_{m=0}^n\lambda_m\lambda_{n-m}-\delta_{n,r}(n+1)\rho\lambda_n\right)|\lambda\rangle\quad (n=r,r+1,\ldots,2r),
\end{equation*}
where $\lambda_i=0$ for $i<0$ or $i>r$. 
\end{prop}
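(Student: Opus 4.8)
The plan is to compute $L_n\lam$ directly from \eqref{eq L a}, handling its three pieces—the diagonal term $\tfrac{\epsilon}{2}a_{n/2}^2$, the normal-ordered sum $\sum_{m>-n/2}a_{-m}a_{m+n}$, and the linear term $-(n+1)\rho a_n$—separately, invoking only the defining relations $a_k\lam=\lambda_k\lam$ for $0\le k\le r$ and $a_k\lam=0$ for $k>r$. First I would examine the sum. Since $n\ge r\ge 0$, each index obeys $m>-n/2$, so $m+n>n/2\ge 0$; hence $a_{m+n}$ never acts as a raising operator on $\lam$ and we may write $a_{m+n}\lam=\lambda_{m+n}\lam$, which vanishes once $m+n>r$. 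The surviving range is therefore $-n/2<m\le r-n$. On it one has $m\le 0$, while $m>-n/2$ and $n\le 2r$ give $-m<n/2\le r$; thus $a_{-m}\lam=\lambda_{-m}\lam$ as well, and each surviving term is already proportional to $\lam$. Reindexing by $k=m+n$ turns the sum into $\sum_{n/2<k\le r}\lambda_{n-k}\lambda_k\,\lam$, and since $\lambda_k=0$ for $k>r$ the upper limit may be raised to $k\le n$.

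Next I would dispose of the case $n>2r$ and treat the linear term. When $n>2r$ the range $-n/2<m\le r-n$ is empty, and $n/2>r$, $n>r$ force $a_{n/2}\lam=0$ and $a_n\lam=0$, so all three pieces vanish and $L_n\lam=0$. For $r\le n\le 2r$ the linear term gives $-(n+1)\rho a_n\lam$, which is $0$ for $r<n\le 2r$ and equals $-(n+1)\rho\lambda_n\lam$ for $n=r$; this is exactly the contribution $-\delta_{n,r}(n+1)\rho\lambda_n\lam$ appearing in the statement.

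The step requiring care, which I regard as the crux, is to see that the sum together with the diagonal term reconstitutes the symmetric coefficient $\tfrac12\sum_{m=0}^n\lambda_m\lambda_{n-m}$. Setting $S=\sum_{n/2<k\le n}\lambda_k\lambda_{n-k}$, the involution $k\mapsto n-k$ yields $\sum_{k=0}^n\lambda_k\lambda_{n-k}=2S+\lambda_{n/2}^2$ when $n$ is even and $=2S$ when $n$ is odd, so $S=\tfrac12\sum_{k=0}^n\lambda_k\lambda_{n-k}$ in the odd case and $S=\tfrac12\sum_{k=0}^n\lambda_k\lambda_{n-k}-\tfrac12\lambda_{n/2}^2$ in the even case. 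In the even case with $r\le n\le 2r$ one has $n/2\le r$, so the diagonal term contributes $\tfrac{\epsilon}{2}a_{n/2}^2\lam=\tfrac12\lambda_{n/2}^2\lam$, which precisely restores the subtracted half-diagonal; in the odd case $\epsilon=0$ and no correction is needed. Either way, sum plus diagonal equals $\tfrac12\sum_{m=0}^n\lambda_m\lambda_{n-m}\,\lam$, and adding the linear contribution gives the asserted formula. The one subtlety to watch throughout is the strict inequality $m>-n/2$, which excludes the diagonal index $m=-n/2$ from the normal-ordered sum and thereby assigns it the correct weight $\tfrac12$ through the separate $a_{n/2}^2$ term.
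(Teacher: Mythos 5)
Your computation is correct and complete: the paper states this proposition without proof, treating it as a routine verification, and your direct evaluation of the three pieces of \eqref{eq L a} on $\lam$ is exactly the intended argument. You also correctly identify and resolve the one genuine subtlety — that the strict bound $m>-n/2$ excludes the diagonal index, whose weight $\tfrac12\lambda_{n/2}^2$ is restored by the separate $\tfrac{\epsilon}{2}a_{n/2}^2$ term — so nothing is missing.
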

The operators $L_n$ ($n\le -r$) defined as in \eqref{eq L a} act on $\langle\lambda|\in\mathcal{F}^*_\lambda$ similarly. 
As a result of this proposition, it is possible to define a bilinear pairing 
$\langle \cdot \rangle$: $V^{*,[s]}_{\Lambda'}\times V^{[r]}_\Lambda\to\C$.

\subsection{Vertex operators}  
  
  The free boson field $\varphi(z)$ is defined by 
  \begin{equation*}
  \varphi(z)=q+a_0\log(z)-\sum_{n\neq 0}\frac{a_n}{n}z^{-n},   
  \end{equation*}
  where $[a_m,q]=\delta_{m,0}$. 
  The commutation relations of the Heisenberg algebra imply the 
  operator product expansion at $|z|>|w|$ 
  \begin{equation*}
\varphi(z)\varphi(w)=\log(z-w)+:\varphi(z)\varphi(w):. 
\end{equation*}
Here, the normal order is defined by 
\begin{equation*}
:a_m a_n:=\left\{\begin{matrix}
a_m a_n\quad (n\ge 0),
\\
a_n a_m \quad (n<0),
\end{matrix}\right.
\quad :q a_n:  =\left\{\begin{matrix}
q a_n\quad (n\ge 0),
\\
a_n q \quad (n< 0). 
\end{matrix}\right.
\end{equation*}  
  
  Define $T(z)$ by $T(z)=:\varphi'(z)^2/2:+\rho\varphi^{(2)}(z)$. Then, the coefficients $L_n$ of $T(z)=\sum_{n\in\mathbb{Z}}L_nz^{-n-2}$ 
coincide with \eqref{eq L a}. 

The confluent primary field introduced in \cite{Nagoya Sun 2010} is defined 
  as 
\begin{equation*}
\Phi^{[r]}_\lambda(z)=:\exp\left(\sum_{n=0}^r \frac{\lambda_n}{n!}\varphi^{(n)}(z)\right): 
\end{equation*}
  for an ($r+1$)-tuple of parameters $\lambda=(\lambda_0,\ldots,
  \lambda_r)\in\C^r\times\C^\times$. 

\begin{prop}[\cite{Nagoya Sun 2010}]
If $|z|>|w|$, then we have 
\begin{align*}
T(z)\Phi^{[r]}_\lambda(w)=&\frac{1}{ 2}\left(\sum_{i=0}^r\frac{\lambda_i}{ (z-w)^{i+1}}\right)^2\Phi^{[r]}_\lambda(w)
+\frac{1}{ z-w}\partial_w\Phi^{[r]}_\lambda(w)
+\sum_{i=1}^r\frac{1}{ (z-w)^{i+1}}D_{i-1}\Phi^{[r]}_\lambda(w)
\\
&-\rho\sum_{i=0}^r\frac{(i+1)\lambda_i}{(z-w)^{i+2}}\Phi^{[r]}_\lambda(w)
+:T(z)\Phi^{[r]}_\lambda(w):,
\end{align*}
where $D_k=\sum_{p=1}^{r-k}p\lambda_{p+k}\partial/\partial \lambda_p$. 
\end{prop}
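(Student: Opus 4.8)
The plan is to compute the OPE $T(z)\Phi^{[r]}_\lambda(w)$ directly from the free-field realizations, using Wick's theorem and the basic two-point contraction $\varphi(z)\varphi(w)=\log(z-w)+{:}\varphi(z)\varphi(w){:}$ stated in the excerpt. Since $T(z)={:}\varphi'(z)^2/2{:}+\rho\varphi^{(2)}(z)$ and $\Phi^{[r]}_\lambda(w)={:}\exp\bigl(\sum_{n=0}^r\frac{\lambda_n}{n!}\varphi^{(n)}(w)\bigr){:}$, the entire calculation reduces to contracting the two derivative fields $\varphi'(z)$ appearing in $T(z)$ against the exponential of derivatives of $\varphi(w)$. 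The key elementary inputs are the contractions obtained by differentiating the basic OPE: for the single contraction one gets
\begin{equation*}
\varphi'(z)\,\frac{\lambda_n}{n!}\varphi^{(n)}(w)\ \longrightarrow\ \frac{\lambda_n}{n!}\,\partial_z\partial_w^{\,n}\log(z-w)=\frac{\lambda_n}{n!}\cdot\frac{(-1)^{n}\,n!}{(z-w)^{n+1}}\cdot(-1)=\frac{\lambda_n}{(z-w)^{n+1}},
\end{equation*}
after fixing the sign bookkeeping in $\partial_z\partial_w^{\,n}\log(z-w)$. I would record this as the fundamental lemma $\varphi'(z)\varphi^{(n)}(w)\sim (-1)\,\partial_w^{\,n}(z-w)^{-1}$ up to the normal-ordered remainder, and likewise $\varphi^{(2)}(z)\varphi^{(n)}(w)\sim$ a pole of order $n+2$.

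First I would expand $T(z)\Phi^{[r]}_\lambda(w)$ by Wick's theorem, organizing the result by the number of contractions between $T(z)$ and $\Phi^{[r]}_\lambda(w)$. Because $T(z)$ is quadratic in $\varphi'$, only zero, one, or two contractions can occur with the $:\varphi'(z)^2/2:$ term, while the linear $\rho\varphi^{(2)}(z)$ term contributes only zero or one contraction. The double contraction of $:\varphi'(z)^2/2:$ produces the fully singular term $\frac12\bigl(\sum_{i=0}^r \lambda_i(z-w)^{-i-1}\bigr)^2\Phi^{[r]}_\lambda(w)$ — the square of the total single-field contraction — since both copies of $\varphi'(z)$ each contract against the exponential, leaving $\Phi^{[r]}_\lambda(w)$ intact with its generating coefficient $\sum_i \lambda_i/(z-w)^{i+1}$. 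The single contraction of one $\varphi'(z)$ (times the factor $2\cdot\tfrac12=1$) brings down a factor $\sum_{i=0}^r \lambda_i(z-w)^{-i-1}$ acting on a field in which one $\varphi^{(n)}(w)$ has been ``used up,'' which after resumming the exponential yields a differential operator in $w$ and in the $\lambda$'s hitting $\Phi^{[r]}_\lambda(w)$.

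The delicate bookkeeping step — and what I expect to be the main obstacle — is reorganizing the single-contraction contribution into exactly the three stated pieces: the $(z-w)^{-1}\partial_w$ term, the operators $\sum_{i=1}^r (z-w)^{-i-1}D_{i-1}$ with $D_k=\sum_{p=1}^{r-k}p\lambda_{p+k}\,\partial/\partial\lambda_p$, and the $-\rho\varphi^{(2)}$-contraction term $-\rho\sum_{i=0}^r (i+1)\lambda_i(z-w)^{-i-2}$. The point is that contracting $\varphi'(z)$ with $\varphi^{(n)}(w)$ and then re-expanding $\partial_w^{\,n}(z-w)^{-1}=\sum_k \binom{n}{k}(\cdots)(z-w)^{-1-k}\partial_w^{\,n-k}$ mixes the pole orders, so matching the coefficient of each $(z-w)^{-i-1}$ requires carefully collecting, for fixed $i$, all contributions from the various $\lambda_n$. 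I would verify that the $i=0$ pole reconstitutes $\partial_w$ acting through the chain rule on $\Phi^{[r]}_\lambda(w)$ (since $\partial_w\varphi^{(n)}(w)=\varphi^{(n+1)}(w)$), while the higher poles $i\ge 1$ reassemble into the scaling operators $D_{i-1}$ in the parameters $\lambda_p$, using the identity that differentiating $\varphi^{(n)}(w)$ and reindexing is equivalent to the action of $p\lambda_{p+k}\partial/\partial\lambda_p$ on the exponential. The $\rho$-term is the most mechanical: it is just the single contraction of $\rho\varphi^{(2)}(z)$ against the exponential, giving $\rho\sum_i \lambda_i \partial_z\partial_w^{\,i+1}\log(z-w)/i!$, whose sign and factorials collapse to $-\rho\sum_{i=0}^r(i+1)\lambda_i(z-w)^{-i-2}$. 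Collecting the double contraction, the three single-contraction terms, and the zero-contraction remainder ${:}T(z)\Phi^{[r]}_\lambda(w){:}$ then yields the claimed formula.
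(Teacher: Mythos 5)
Your Wick-theorem computation is correct and is exactly the derivation behind this proposition, which the present paper only cites from \cite{Nagoya Sun 2010} without reproving: the double contraction of $:\varphi'(z)^2/2:$ with the exponential produces the squared term $\frac{1}{2}\bigl(\sum_{i=0}^r\lambda_i(z-w)^{-i-1}\bigr)^2$, the single contraction, after Taylor-expanding the surviving $\varphi'(z)$ at $w$ and collecting pole orders (the pole $(z-w)^{-1}$ reassembling $\partial_w$ by the chain rule, the pole $(z-w)^{-j-1}$ with $j\ge 1$ reassembling $D_{j-1}$ via the substitution $p=n-j+1$ and $p\lambda_{p+j-1}\partial_{\lambda_p}\Phi=\frac{\lambda_{p+j-1}}{(p-1)!}\,{:}\varphi^{(p)}(w)\Phi{:}$), yields the middle terms, and the single contraction of $\rho\varphi^{(2)}(z)$ gives $-\rho\sum_{i=0}^r(i+1)\lambda_i(z-w)^{-i-2}$. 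One slip worth fixing: your recorded lemma $\varphi'(z)\varphi^{(n)}(w)\sim(-1)\,\partial_w^{\,n}(z-w)^{-1}$ carries a spurious sign, since $\partial_z\partial_w^{\,n}\log(z-w)=\partial_w^{\,n}(z-w)^{-1}=n!\,(z-w)^{-n-1}$ with no factor $(-1)$ (similarly the $\rho$-term is $\partial_z^2\partial_w^{\,i}\log(z-w)=-(i+1)!\,(z-w)^{-i-2}$, not $\partial_z\partial_w^{\,i+1}$), but as you in fact use the correct values $\lambda_n(z-w)^{-n-1}$ and $-\rho(i+1)\lambda_i(z-w)^{-i-2}$ in assembling the formula, nothing downstream is affected.
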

The above operator product expansion corresponds to the following 
  commutation relations. 
\begin{cor}
 \begin{align*}
\left[L_n,\Phi^{[r]}_\lambda(w)\right]=&w^{n+1}\partial_w\Phi^{[r]}_\lambda(w)
+\sum_{i=0}^{r-1}
\begin{pmatrix}
n+1
\\
i+1
\end{pmatrix}
w^{n-i}D_{i}\Phi^{[r]}_\lambda(w)
\\
&+\frac{1}{ 2}\sum_{i,j=0}^r \lambda_i\lambda_j\begin{pmatrix}n+1\\i+j+1\end{pmatrix}w^{n-i-j}\Phi^{[r]}_\lambda(w)
-\rho\sum_{i=0}^r\begin{pmatrix}
n+1
\\
i+1
\end{pmatrix}(i+1)\lambda_iw^{n-i}\Phi^{[r]}_\lambda(w). 
\end{align*}
\end{cor}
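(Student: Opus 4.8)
The plan is to obtain these commutation relations by extracting modes from the operator product expansion of the preceding Proposition, via the standard contour-integral argument. Since $T(z)=\sum_{n\in\Z}L_nz^{-n-2}$, the generator is recovered as $L_n=\oint\frac{dz}{2\pi i}\,z^{n+1}T(z)$, the contour encircling the origin. Interpreting products as radially ordered, one has
\[
[L_n,\Phi^{[r]}_\lambda(w)]=\oint_{C_w}\frac{dz}{2\pi i}\,z^{n+1}\,T(z)\Phi^{[r]}_\lambda(w),
\]
where $C_w$ is a small loop around $z=w$ and only the singular part of the OPE survives: the difference of the contours $|z|>|w|$ and $|z|<|w|$ (computing $L_n\Phi^{[r]}_\lambda(w)$ and $\Phi^{[r]}_\lambda(w)L_n$, respectively) collapses to $C_w$, and the regular term $:\!T(z)\Phi^{[r]}_\lambda(w)\!:$ contributes no residue.

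First I would substitute the four singular terms of the Proposition and evaluate each residue. The key device is the expansion $z^{n+1}=\sum_{k\ge 0}\binom{n+1}{k}(z-w)^k w^{n+1-k}$, so that $\oint_{C_w}\frac{dz}{2\pi i}\,z^{n+1}(z-w)^{-\ell}=\binom{n+1}{\ell-1}w^{n+2-\ell}$ selects a single term for each pole order $\ell$. Applied term by term: the squared sum $\tfrac12\sum_{i,j}\lambda_i\lambda_j(z-w)^{-(i+j+2)}$ has $\ell=i+j+2$ and yields $\tfrac12\sum_{i,j=0}^r\lambda_i\lambda_j\binom{n+1}{i+j+1}w^{n-i-j}$; the simple pole $\frac{1}{z-w}\partial_w$ has $\ell=1$ and gives $w^{n+1}\partial_w$; the term $\sum_{i=1}^r(z-w)^{-(i+1)}D_{i-1}$ has $\ell=i+1$ and, relabelling $i-1$ as $i$, produces $\sum_{i=0}^{r-1}\binom{n+1}{i+1}w^{n-i}D_i$; finally $-\rho\sum_{i=0}^r(i+1)\lambda_i(z-w)^{-(i+2)}$ has $\ell=i+2$ and gives $-\rho\sum_{i=0}^r\binom{n+1}{i+1}(i+1)\lambda_iw^{n-i}$. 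Summing these four contributions reproduces the claimed formula verbatim.

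The main point requiring care is not any single residue but the justification of the contour manipulation in the present algebraic setting: that radial ordering exhibits $L_n\Phi^{[r]}_\lambda(w)$ and $\Phi^{[r]}_\lambda(w)L_n$ as boundary values of one expression analytic on the annulus $|z|\neq|w|$, so that their difference is exactly $\oint_{C_w}$, and that the normal-ordered term is genuinely regular at $z=w$ and hence drops out. Once this is granted, the rest is pure bookkeeping, the only subtle step being the index shift that converts $D_{i-1}$ into $D_i$ together with the fact, inherited from the Proposition, that the derivations $D_k=\sum_{p=1}^{r-k}p\lambda_{p+k}\,\partial/\partial\lambda_p$ act on $\Phi^{[r]}_\lambda(w)$ as required. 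I expect no genuine obstacle beyond verifying the contour argument.
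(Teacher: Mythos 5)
Your proposal is correct and is precisely the argument the paper intends: the paper states the corollary as an immediate consequence of the operator product expansion ("The above operator product expansion corresponds to the following commutation relations"), i.e.\ the standard mode-extraction via $[L_n,\Phi^{[r]}_\lambda(w)]=\oint_{C_w}\frac{dz}{2\pi i}\,z^{n+1}T(z)\Phi^{[r]}_\lambda(w)$, and your term-by-term residue bookkeeping, including the index shift from $D_{i-1}$ to $D_i$, reproduces the stated formula exactly.
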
  
  
\begin{prop}  
For $\lambda=(\lambda_0,\ldots,
  \lambda_r)\in\C^r\times\C^\times$ and 
  $\mu=(\mu_0,\ldots,
  \mu_s)\in\C^s\times\C^\times$, 
the confluent primary field $\Phi^{[r]}_\lambda(z)$ 
acts on $|\mu\rangle\in\mathcal{F}_\mu$ as follows. 
\begin{align*}
\Phi^{[r]}_\lambda(z)|\mu\rangle
=&z^{\lambda_0\mu_0}
\left(e^{\sum_{n=0}^r\lambda_na_{-n}/n}
e^{\sum_{m=0}^s\mu_ma_{-m}/m}+O(z)\right). 
\end{align*}
  \end{prop}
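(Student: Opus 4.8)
The plan is to reduce the statement to a free-field normal-ordering computation. Writing $X(z)=\sum_{n=0}^r\frac{\lambda_n}{n!}\varphi^{(n)}(z)$, I would first expand each derivative in Heisenberg modes. Since $\varphi'(z)=\sum_{k\in\Z}a_kz^{-k-1}$, differentiating $n-1$ more times gives $\varphi^{(n)}(z)=(-1)^{n-1}\sum_{k\in\Z}a_k(k+1)(k+2)\cdots(k+n-1)z^{-k-n}$ for $n\ge1$, while the $n=0$ term retains the $q$ and $a_0\log z$ pieces. I would then split $X(z)=X_-(z)+X_+(z)$ into its creation part $X_-$ (the zero mode $q$ together with the modes $a_{-m}$, $m>0$) and its annihilation part $X_+$ (the modes $a_k$, $k\ge0$), and invoke $:\!e^{X(z)}\!:=e^{X_-(z)}e^{X_+(z)}$, which is the defining property of the normal-ordered exponential.

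The second step is to let $e^{X_+(z)}$ act on $|\mu\rangle$. All non-negative modes commute, so $e^{X_+(z)}$ factors over $k\ge0$; since $a_k|\mu\rangle=\mu_k|\mu\rangle$ for $0\le k\le s$ and $a_k|\mu\rangle=0$ for $k>s$, the result is a scalar function of $z$ times $|\mu\rangle$. The coefficient of $a_0$ in $X_+$ carries the term $\lambda_0\log z$ coming from $a_0\log z$ in $\varphi$, producing the announced branch factor $z^{\lambda_0\mu_0}$. I would flag here that the coefficients of $a_0,\dots,a_s$ in $X_+$ also contain genuinely negative powers of $z$, e.g. $\lambda_1\mu_0z^{-1}$ already for $r=1$; these give the essential-singularity scalar prefactor intrinsic to an irregular point, which I would absorb into the normalization of the leading coefficient, the clean $z^{\lambda_0\mu_0}$ being the branch-point part.

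The third step is to expand $e^{X_-(z)}$ as a power series in $z$. I would compute the coefficient $c_{-m}(z)$ of each $a_{-m}$ by collecting the contributions of $\varphi^{(n)}$ for $n=0,\dots,r$: the $n$-th term contributes a multiple of $z^{m-n}$, and the factor $(-m+1)(-m+2)\cdots(-m+n-1)$ vanishes as soon as $n\ge m+1$. Hence the lowest power of $z$ in $c_{-m}(z)$ is $z^{\max(0,m-r)}$, so only $1\le m\le r$ contribute at order $z^0$, and a short factorial simplification (the $n=m$ term gives $\frac{\lambda_m}{m!}(-1)^{m-1}(-1)^{m-1}(m-1)!$) shows that constant term equals exactly $\lambda_m/m$. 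Thus $e^{X_-(z)}=e^{\lambda_0q}\exp\!\big(\sum_{m=1}^r(\lambda_m/m)a_{-m}\big)\big(1+O(z)\big)$, where $e^{\lambda_0q}$ implements the charge shift $\mu_0\mapsto\mu_0+\lambda_0$ into the appropriate Fock space. Realizing $|\mu\rangle$ as $\exp\!\big(\sum_{m=1}^s(\mu_m/m)a_{-m}\big)$ applied to the charged vacuum supplies the second exponential, and assembling the three steps yields $z^{\lambda_0\mu_0}\big(e^{\sum_{n=1}^r\lambda_na_{-n}/n}e^{\sum_{m=1}^s\mu_ma_{-m}/m}+O(z)\big)$.

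The main obstacle is the bookkeeping in the third step: verifying that, after summing the contributions of all $\varphi^{(n)}$ to the coefficient of $a_{-m}$, the constant-in-$z$ part collapses to the single clean value $\lambda_m/m$ and that nothing of order $z^0$ survives for $m>r$. The accompanying delicate point is the precise handling of the scalar prefactor in the second step, namely isolating the branch factor $z^{\lambda_0\mu_0}$ from the essential-singularity part and tracking the $q$-zero-mode charge shift correctly, since this is exactly where the irregular nature of the vertex operator enters.
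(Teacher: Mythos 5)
Your computation is correct, and it is in fact the only argument available for comparison: the paper states this proposition without proof, taking it as immediate from the free-field realization $\Phi^{[r]}_\lambda(z)=\,:\exp\bigl(\sum_{n=0}^r\tfrac{\lambda_n}{n!}\varphi^{(n)}(z)\bigr):$ of \cite{Nagoya Sun 2010}. Your steps are exactly the computation being implicitly invoked: the mode expansion $\varphi^{(n)}(z)=(-1)^{n-1}\sum_k a_k(k+1)\cdots(k+n-1)z^{-k-n}$, the factorization $:e^{X}:\,=e^{X_-}e^{X_+}$ (valid here since the paper's normal ordering places $a_{-m}$ and $q$ to the left of all $a_k$, $k\ge 0$, and each of the two groups is mutually commuting), the vanishing of $(-m+1)\cdots(-m+n-1)$ for $n\ge m+1$, the collapse of the $z^0$-coefficient of $a_{-m}$ to $\lambda_m/m$ for $1\le m\le r$ with lowest power $z^{m-r}$ for $m>r$, and the realization of $|\mu\rangle$ as $\exp\bigl(\sum_{m=1}^s\mu_m a_{-m}/m\bigr)$ applied to the charged vacuum, which is the same realization the paper itself uses later for $|\Lambda\rangle$. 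You also correctly decode the paper's abuse of notation in the $n=0$ term of $e^{\sum_{n=0}^r\lambda_n a_{-n}/n}$ as the charge shift $e^{\lambda_0 q}$.

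One point should be stated more carefully. The scalar factor $\exp\bigl(\sum_{j=1}^{r+s}c_j z^{-j}\bigr)$ with $c_j$ bilinear in $(\lambda,\mu)$ that you isolate in your second step (e.g.\ $e^{\lambda_1\mu_0/z}$ already for $r=1$, $s=0$) is genuinely present and is genuinely absent from the proposition's right-hand side as printed; since $e^{c/z}$ is unbounded as $z\to 0$, it cannot literally be ``absorbed into the normalization'' inside an expression of the form $z^{\lambda_0\mu_0}(\,\cdots+O(z))$. What your argument proves is therefore the statement modulo such a scalar essential-singularity prefactor — which is evidently the intended reading, since the definitions immediately following the proposition, \eqref{eq_VO_act_0r} and \eqref{eq_VO_act_rr}, equip every irregular vertex operator with exactly such a factor $z^\alpha\exp\bigl(\sum_n\beta_n z^{-n}\bigr)$. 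In a final write-up, record this prefactor explicitly (or note the emendation of the statement) rather than folding it into a normalization; with that caveat made explicit, your proof is complete.
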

  Given these facts, we define general vertex operators as follows. 
  \begin{dfn}
  The rank $r$ vertex operator $\Phi^{[r],\lambda}_{\Delta,\Lambda}(z)$: 
$V^{[0]}_{\Delta}\to V^{[r]}_{\Lambda}$ with $\lambda=(\lambda_0,\ldots, \lambda_r)$, 
($\lambda_r\neq 0$),  
$\Lambda=(\Lambda_r,\ldots, \Lambda_{2r})$ ($\Lambda_{2r}\neq 0$) is  defined by
\begin{align}
\left[L_n,\Phi^{[r],\lambda}_{\Delta,\Lambda}(z)\right]=&z^{n+1}\partial_z\Phi^{[r],\lambda}_{\Delta,\Lambda}(z)
+\sum_{i=0}^{r-1}
\begin{pmatrix}
n+1
\\
i+1
\end{pmatrix}
z^{n-i}D_{i}\Phi^{[r],\lambda}_{\Delta,\Lambda}(z)\nonumber 
\\
&+\frac{1}{ 2}\sum_{i,j=0}^r \lambda_i\lambda_j\begin{pmatrix}n+1\\i+j+1\end{pmatrix}
z^{n-i-j}\Phi^{[r],\lambda}_{\Delta,\Lambda}(z)
-\rho\sum_{i=0}^r\begin{pmatrix}
n+1
\\
i+1
\end{pmatrix}(i+1)\lambda_iz^{n-i}\Phi^{[r],\lambda}_{\Delta,\Lambda}(z), \label{comrel_rankr}
\end{align}
where $D_k=\sum_{p=1}^{r-k}p\lambda_{p+k}\frac{\partial}{\partial \lambda_p}$
and 
\begin{equation}\label{eq_VO_act_0r}
\Phi^{[r],\lambda}_{\Delta,\Lambda}(z)|\Delta\rangle=z^\alpha \exp\left(\sum_{n=0}^r 
\frac{\beta_n}{z^n}\right)\sum_{m=0}^\infty v_mz^m,
\end{equation}
where $v_0=|\Lambda\rangle$, $v_m\in V^{[r]}_{\Lambda}$ ($m\ge 1$). 
  \end{dfn}
   \begin{dfn}
  For non-negative $r$, define a rank $0$ vertex operator 
$\Phi^{\Delta}_{\Lambda,\Lambda'}(z): 
V^{[r]}_{\Lambda}\to V^{[r]}_{\Lambda'}$  by 
\begin{align}
&[L_n, \Phi^{ \Delta}_{\Lambda,\Lambda'}(z)]=z^n \left(z\frac{\partial}{\partial z}+
(n+1)\Delta\right)\Phi^{ \Delta}_{\Lambda,\Lambda'}(z), \label{comrel_rank0}
\\
&\Phi^{\Delta}_{\Lambda,\Lambda'}(z)\Lam=z^\alpha \exp\left(\sum_{n=0}^r 
\frac{\beta_n}{z^n}\right)
\sum_{m=0}^\infty v_mz^m,\label{eq_VO_act_rr}
\end{align}
where $v_0=|\Lambda'\rangle$, $v_m\in V^{[r]}_{\Lambda'}$ ($m\ge 1$). 
  \end{dfn}
  When $r=0$, the rank $0$ vertex operator 
   is known to  
   exist uniquely if the Verma module $V_\Lambda'^{[0]}$ 
   is irreducible.  
The dual rank $r$ vertex operator $\Phi^{*,[r],\lambda}_{\Delta,\Lambda}(z)$: 
$V^{*,[0]}_{\Delta}\to V^{*,[r]}_{\Lambda}$ and 
the dual rank $0$ vertex operator $\Phi^{*,\Delta}_{\Lambda,\Lambda'}(z)$: 
$V^{*,[r]}_{\Lambda}\to V^{*,[r]}_{\Lambda'}$ are defined 
in the same manner. 
In the general case, it seems that a vertex operator $\Phi^{[p], \lambda}_{\Lambda,\Lambda'}(z)$: 
$V^{[r]}_{\Lambda}\to V^{[s]}_{\Lambda'}$  can be defined 
by the commutation relations \eqref{comrel_rankr} and 
\begin{equation}\label{act_VO_rs}
\Phi^{[p],\lambda}_{\Lambda,\Lambda'}(z)\Lam=z^\alpha \exp\left(\sum_{n=0}^p 
\frac{\beta_n}{z^n}\right)(|\Lambda'\rangle+O(z)), 
\end{equation}
where $p=\max(r,s)$. However, when $r=s=1$, it is observed 
that vertex operators satisfying \eqref{comrel_rankr} and \eqref{act_VO_rs} 
exist but are not unique. Hence, if we want to have uniqueness, 
we must add additional conditions.

  \begin{conj}\label{conj_VO_0r}
  The rank $r$ vertex operator $\Phi^{[r],\lambda}_{\Delta,\Lambda}(z)$: 
$V^{[0]}_{\Delta}\to V^{[r]}_{\Lambda}$ exists and is  uniquely determined by the given 
parameters $\Delta$,  $\lambda$, $a$ with $\beta_r=a \lambda_r$, $\alpha=\alpha(\lambda_0, a, \Delta)$, $\beta_i(\lambda_0,\ldots,\lambda_r, a, \Delta)$
and 
\begin{align*}
&\Lambda_n=\frac{1}{2}\sum_{i=0}^r \lambda_i\lambda_{n-i}+\delta_{n,r}\left((-1)^{r+1}r\beta_r-(r+1)\rho\lambda_r\right)\quad (n=r,\ldots, 2r),
\\
&D_i(\beta_k)=(-1)^i(k+i)\beta_{k+i}\quad (k=0,1,\ldots,r). 
\end{align*}
  \end{conj}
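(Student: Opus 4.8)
The plan is to obtain \emph{existence} by exhibiting a concrete operator through the free-field (Fock space) realization, and to obtain \emph{uniqueness} by a triangular recursion for $\Phi^{[r],\lambda}_{\Delta,\Lambda}(z)|\Delta\rangle$; the asserted closed forms for $\Lambda_n$, $\alpha$, $\beta_i$ and the identity $D_i(\beta_k)=(-1)^i(k+i)\beta_{k+i}$ will appear as the consistency conditions of that recursion. First I would reduce everything to the action on the highest weight vector. Because $V^{[0]}_\Delta$ is generated by $|\Delta\rangle$ under the lowering operators $L_{-k}$ ($k\ge1$), and because \eqref{comrel_rankr} rewrites each commutator $[L_n,\Phi^{[r],\lambda}_{\Delta,\Lambda}(z)]$ as an explicit first-order differential operator in $z$ and in the $\lambda_p$ (through the $D_i$) applied to $\Phi^{[r],\lambda}_{\Delta,\Lambda}(z)$, the value of $\Phi^{[r],\lambda}_{\Delta,\Lambda}(z)$ on any $L_{-k_1}\cdots L_{-k_m}|\Delta\rangle$ is computed from \eqref{eq_VO_act_0r} by moving the $L_{-k_j}$ to the left one at a time. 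Thus the operator is completely determined by \eqref{eq_VO_act_0r}, and it remains to pin down the scalar data and the coefficients $v_m$.

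To extract the scalar data I would write $\Phi^{[r],\lambda}_{\Delta,\Lambda}(z)|\Delta\rangle=G(z)W(z)$ with $G(z)=z^\alpha\exp(\sum_{n=0}^r\beta_nz^{-n})$ and $W(z)=\sum_{m\ge0}v_mz^m$, $v_0=\Lam$, and apply \eqref{comrel_rankr} to $|\Delta\rangle$ for $n=0,1,\ldots,2r$, using $L_n|\Delta\rangle=0$ ($n\ge1$). Since $\Lam$ is a genuine $L_n$-eigenvector for $r\le n\le 2r$, projecting the $n$-th equation onto $\Lam$ yields $\Lambda_n$: the quadratic term $\tfrac12\sum\lambda_i\lambda_j\binom{n+1}{i+j+1}z^{n-i-j}$ supplies $\tfrac12\sum_i\lambda_i\lambda_{n-i}$, the $\rho$-term supplies $-(r+1)\rho\lambda_r$ at $n=r$, and the derivative term $z^{n+1}\partial_z$ combines with the $D_i$-terms — using $\beta_r=a\lambda_r$, the relation $D_i(\beta_k)=(-1)^i(k+i)\beta_{k+i}$, and the identity $\sum_{i=0}^{n}(-1)^i\binom{n+1}{i+1}=1$ — to supply the remaining $(-1)^{r+1}r\beta_r$ at $n=r$, reproducing exactly the stated formula for $\Lambda_n$. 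Crucially, since $L_n(GW)=G\,L_n(W)$ has only non-negative powers of $z$, the right-hand sides must too, and the cancellation of each negative power forces, order by order, precisely the identities $D_i(\beta_k)=(-1)^i(k+i)\beta_{k+i}$ (the case $n=0$ gives $D_0(\beta_s)=s\beta_s$ directly, and the case $n$ gives the $D_i$-relation by induction); together with $\beta_r=a\lambda_r$ these determine all $\beta_i$, while $\alpha=\alpha(\lambda_0,a,\Delta)$ is fixed by the order-matching of the $n=0$ relation. With $\alpha,\beta_i,\Lambda$ so determined, the coefficient of $z^m$ ($m\ge1$) is a linear equation $\mathcal{L}_mv_m=(\text{expression in }v_0,\ldots,v_{m-1})$, which I would solve uniquely by induction on $m$, provided $\mathcal{L}_m$ is invertible.

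For existence I would pass to Fock modules. Choosing $\mu_0$ with $\Delta=\tfrac12\mu_0^2-\rho\mu_0$ and $\lambda'=(\lambda_0+\mu_0,\lambda_1,\ldots,\lambda_r)$, under the irreducibility hypotheses one has Virasoro-module isomorphisms $V^{[0]}_\Delta\cong\mathcal{F}_{\mu_0}$ and $V^{[r]}_\Lambda\cong\mathcal{F}^{[r]}_{\lambda'}$, the latter irreducible by $\Lambda_{2r}\ne0$. The confluent primary $\Phi^{[r]}_\lambda(z)$ satisfies \eqref{comrel_rankr} by the Corollary, and when applied to $|\mu_0\rangle$ the zero modes $a_0$ occurring in $\varphi^{(n)}(z)$ ($n\ge1$) generate exactly the essential singularity $\exp(\sum_{n=1}^r\beta_nz^{-n})$ with $\beta_n=\tfrac{(-1)^{n-1}}{n}\lambda_n\mu_0$ and the prefactor $z^{\lambda_0\mu_0}$, while the creation modes assemble the leading term into the rank-$r$ highest weight vector $|\lambda'\rangle$, identified with $\Lam$. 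The Proposition computing $L_n|\lambda\rangle$ on a Fock highest weight vector then returns the asserted $\Lambda_n$ and confirms the identity $D_i(\beta_k)=(-1)^i(k+i)\beta_{k+i}$ and the dependence of $\alpha,\beta_i$ on $(\lambda,a,\Delta)$. This exhibits an operator with all the required properties, and combined with the recursion it identifies $\Phi^{[r],\lambda}_{\Delta,\Lambda}(z)$.

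I expect two steps to be the main obstacles. The first is the invertibility of $\mathcal{L}_m$ in the abstract (non-Fock) setting, hence genuine uniqueness for all admissible parameters: for $r\ge1$ the target $V^{[r]}_\Lambda$ is a Whittaker module carrying no $L_0$-eigenspace decomposition, so the grading argument available in the $r=0$ case is lost, and one must instead work with the filtration coming from the Whittaker structure or transport nondegeneracy through the Fock isomorphism, which only reaches generic parameters. The second is that the free-field construction ties $a$ to $\Delta$ (both arising from a single $\mu_0$ via $\beta_r=\tfrac{(-1)^{r-1}}{r}\lambda_r\mu_0$ and $\Delta=\tfrac12\mu_0^2-\rho\mu_0$), so it only covers the compatible locus; realizing $\Phi^{[r],\lambda}_{\Delta,\Lambda}(z)$ for $a$ and $\Delta$ prescribed independently is exactly what the abstract recursion must settle, and this is the crux of the statement.
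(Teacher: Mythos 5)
You should first be aware of the status of this statement in the paper: it is explicitly a \emph{conjecture}, not a theorem. The author does not prove it; what the paper establishes is (i) the relations $D_0(\beta_i)=i\beta_i$, $D_i(\beta_k)=(-1)^i(k+i)\beta_{k+i}$ and the stated formula for $\Lambda_n$, derived as \emph{necessary} conditions under the assumption that the operator exists (Propositions \ref{prop n=0 rank r}, \ref{prop 0<n<r rank r}, \ref{prop r-1<n<2r+1 rank r}, obtained by substituting the negative indices $m=-r,\ldots,-1$ into \eqref{L_n action part 1}); (ii) full existence and uniqueness only in the special case $\Delta=0$ with $V_0^{[0]}$ irreducible, where $\alpha=\beta_1=\cdots=\beta_r=0$ and $v_m=L_{-1}^m|\Lambda\rangle/m!$; and (iii) a computational check for $r=1,2,3$ up to $v_{10}$. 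Your consistency analysis, projecting \eqref{comrel_rankr} applied to $|\Delta\rangle$ onto negative powers of $z$ to force the $D_i(\beta_k)$ identities and reading off $\Lambda_n$, reproduces the content of those propositions correctly; and your free-field computation $\beta_n=(-1)^{n-1}\lambda_n\mu_0/n$ is the same mechanism that underlies the paper's $\Delta=0$ case, where $|\Lambda\rangle$ is realized as $\exp\bigl(\lambda_0 q+\sum_{n=1}^r\lambda_n a_{-n}/n\bigr)$.

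However, the two ``obstacles'' you flag at the end are not refinements to be supplied later; they are the entire open content of the statement, and your proposal closes neither. First, the recursion is not triangular as you assert: \eqref{L_n action part 1} expresses $L_n v_m$ in terms of $v_{m-n+i}$ and $v_{m-n+i+j}$ with $i+j$ up to $2r$, so for small $n$ it couples $v_m$ to coefficients of index \emph{larger} than $m$, and converting the family of these identities into an invertible system $\mathcal{L}_m v_m=(\text{lower data})$ requires a nondegeneracy statement for the joint action of the $L_n$ on $V^{[r]}_\Lambda$. That is precisely the role played, for the rank-$0$ operator of Theorem \ref{thm_VO_rr}, by the filtration $U_m$, Lemma \ref{key_lemma}, and the determinant evaluation $\Lambda_{2r}^{\sum_i ip(i)}$ of Corollary \ref{cor_det}; no analogue of that machinery is proved (in the paper or in your proposal) for the map $V^{[0]}_\Delta\to V^{[r]}_\Lambda$, and the conjecture does not even assert a priori that $v_m\in U_m$. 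Second, your free-field existence argument produces the operator only on the locus where $a$ and $\Delta$ are both functions of the single zero-mode eigenvalue $\mu_0$ (namely $a=(-1)^{r-1}\mu_0/r$ from $\beta_r=a\lambda_r$, and $\Delta=\mu_0^2/2-\rho\mu_0$), a one-parameter curve in the two-dimensional $(\Delta,a)$ parameter space of the conjecture; no continuation or deformation argument off this locus is given. As it stands, your proposal is a correct derivation of the necessary conditions together with an honest identification of the gap — which matches exactly why the author leaves the statement as a conjecture, verified only to finite order for $r\le 3$ — but it is not a proof.
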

  We have checked Conjecture \ref{conj_VO_0r} for $r=1,2,3$, 
that is, in this case,  the first 
 $v_1$,\ldots, $v_{10}$  of \eqref{eq_VO_act_0r}  are 
uniquely determined. 
 
 For a partition 
$\lambda=(\lambda_1,\ldots,\lambda_n)$ ($\lambda_i\ge \lambda_{i+1}$), set  
$L_\lambda=L_{-\lambda_1+r}\cdots
L_{-\lambda_n+r}$. From 
Theorem 2.4 in \cite{FJK}, the vectors 
$L_\lambda|\Lambda'\rangle$ where $\lambda$ 
runs over the set of all partitions, 
form a basis of $V^{[r]}_{\Lambda'}$. 
Define the degree of $L_\lambda|\Lambda'\rangle$ by 
$\deg(L_\lambda|\Lambda'\rangle)=|\lambda|=\sum_{i=1}^n\lambda_n$ and for $v=\sum_{\lambda}
c_\lambda L_\lambda|\Lambda'\rangle$, 
$\deg(v)=\mathrm{max}\{\deg(L_\lambda|\Lambda'\rangle)\}$. Set  
\begin{equation*}
U_m=\left\{v\in V^{[r]}_{\Lambda'}\ \big|
\ 0\le \deg(v)\le m\right\}. 
\end{equation*}
A basis of $U_m$ consists of $L_\lambda|\Lambda'\rangle$ 
 ($0\le |\lambda|\le m$) and hence $\dim(U_m)=1+\sum_{i=1}^mp(i)$, where 
$p(i)$ is the number of all partitions of $i$. 
  \begin{thm}\label{thm_VO_rr}
  For any positive integer $r$, 
the rank $0$ vertex operator $\Phi^{ \Delta}_{\Lambda,\Lambda'}(z)$: 
$V^{[r]}_{\Lambda}\to V^{[r]}_{\Lambda'}$ exists 
and is  uniquely determined by the given 
parameters $\Lambda$, $\Delta$, $\beta_r$. 
In particular, 
\begin{equation*}
\Lambda'_n=\Lambda_n-\delta_{n,r}r\beta_r \quad (n=r,\ldots,2r),  
\end{equation*}
$\alpha$ and $\beta_n$ for $n=1,\ldots, r-1$ 
are  polynomials 
in $ \Delta$, $\beta_r$, $\Lambda_r$, 
\ldots, $\Lambda_{2r}$, $\Lambda_{2r}^{-1}$.  
Moreover,  $v_m\in U_m$ and 
the coefficients $c_\lambda$ of the vectors $
L_\lambda|\Lambda'\rangle$ in $v_m$ are 
uniquely determined as polynomials 
in $ \Delta$, $\beta_r$, $\Lambda_r$,\ldots, $\Lambda_{2r}$, 
$\Lambda_{2r}^{-1}$.  
\end{thm}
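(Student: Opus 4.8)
The plan is to reduce the whole statement to the construction of the single series $\Phi^{\Delta}_{\Lambda,\Lambda'}(z)\Lam$. Since $V^{[r]}_{\Lambda}$ is irreducible (hence cyclic, generated from $\Lam$ by the operators $L_{r-j}$, $j\ge1$), an operator obeying \eqref{comrel_rank0} is determined by its value on $\Lam$; conversely \eqref{comrel_rank0} consistently propagates a given $\Phi(z)\Lam$ to all of $V^{[r]}_{\Lambda}$, the consistency being guaranteed by the Lie bracket. So existence and uniqueness of the vertex operator are equivalent to those of a compatible series \eqref{eq_VO_act_rr}.

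To obtain the equations that \eqref{eq_VO_act_rr} must satisfy, I would apply $L_n$ to $\Phi(z)\Lam$, write $L_n\Phi=[L_n,\Phi]+\Phi L_n$, and use $L_n\Lam=\Lambda_n\Lam$ for $r\le n\le 2r$ and $L_n\Lam=0$ for $n>2r$ together with \eqref{comrel_rank0}. After substituting \eqref{eq_VO_act_rr} and dividing out the scalar prefactor $z^\alpha\exp(\sum_n\beta_n z^{-n})$—which replaces $z\partial_z$ by $z\partial_z+\alpha-\sum_{k\ge1}k\beta_k z^{-k}$—the coefficient of $z^p$ gives, for $r\le n\le 2r$,
\[
(L_n-\Lambda_n)v_p=\bigl(p-n+(n+1)\Delta+\alpha\bigr)v_{p-n}-\sum_{k=1}^{r}k\beta_k v_{p-n+k},
\]
and $L_n v_p=(\cdots)$ for $n>2r$. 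Every vector on the right carries index $<p$; for $n=r$ the $k=r$ term is $-r\beta_r v_p$, which I move to the left to form $(L_r-\Lambda'_r)v_p$ with $\Lambda'_r=\Lambda_r-r\beta_r$. Reading the $z^0$ coefficient at once yields $\Lambda'_n=\Lambda_n-\delta_{n,r}r\beta_r$, because $L_n v_0=\Lambda'_n\,|\Lambda'\rangle$ while the only right-hand contribution at $z^0$ is $\Lambda_n|\Lambda'\rangle$, supplemented by $-r\beta_r|\Lambda'\rangle$ exactly when $n=r$.

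I would then solve the recursion by induction on $p$, proving simultaneously that $v_p\in U_p$, that $v_p$ is unique, and that its coordinates (and $\alpha,\beta_1,\dots,\beta_{r-1}$) are polynomials in $\Delta,\beta_r,\Lambda_r,\dots,\Lambda_{2r},\Lambda_{2r}^{-1}$. The operators $\{L_n-\Lambda'_n\}_{r\le n\le 2r}\cup\{L_n\}_{n>2r}$ annihilate only the line $\C|\Lambda'\rangle$ (the Whittaker vectors of type $\Lambda'$ form a one-dimensional space, by irreducibility; cf.\ \cite{LGZ}, \cite{FJK}), so these relations determine $v_p$ except for its $|\Lambda'\rangle$-component. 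The degree-$\ge1$ part of $v_p$ is extracted from the relations, the key being that $\Lambda_{2r}=\Lambda'_{2r}\neq0$ makes the relevant leading actions invertible, each inversion contributing one factor $\Lambda_{2r}^{-1}$; the same invertibility forces $v_p\in U_p$, since any component of degree exceeding $p$ is constrained to vanish. Finally the pending $|\Lambda'\rangle$-component of $v_{p-1}$, and the parameters $\alpha,\beta_i$, are fixed one order later by the $|\Lambda'\rangle$-components of the $n=r,\dots,2r$ relations; this staggering is why uniqueness is genuine even though the filtration carries a $|\Lambda'\rangle$-component at every order.

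The hard part, and where I would concentrate the effort, is twofold. The filtration $U_m$ is not a bracket grading (a creation operator $L_{r-j}$ has degree $j$, but $[L_m,L_n]$ is not additive for this degree when $r\ge1$), so each $L_n$ mixes degrees non-uniformly and one must pin down exactly how far every relation lowers degree in order to justify the inversions and the degree bound above. Second, the coupled equations fixing $\alpha,\beta_i$ and the deferred $|\Lambda'\rangle$-components look, a priori, as if they would create denominators other than powers of $\Lambda_{2r}$; the crux is a cancellation showing that, once the values of $\alpha$ and the $\beta_i$ are inserted, the governing determinant collapses to a monomial in $\Lambda_{2r}$ (already for $r=1$ the pertinent $2\times2$ determinant reduces to $2\Lambda_2$). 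Proving this collapse in general—equivalently, that the staggered recursion has an invertible $\Lambda_{2r}$-monomial \emph{diagonal}—is what secures the polynomiality assertion, and the verified cases $r=1,2,3$ of Conjecture \ref{conj_VO_0r} provide a template.
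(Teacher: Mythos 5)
Your plan is, in outline, the paper's own: reduce to the series \eqref{eq_VO_act_rr} on $|\Lambda\rangle$ (the paper makes the same one-line reduction via \eqref{comrel_rank0}), derive exactly the recursion \eqref{L_n action rr1}, read off $\Lambda'_n=\Lambda_n-\delta_{n,r}r\beta_r$ at order $z^0$, and run a staggered induction over the filtration $U_m$ in which $\beta_{r-k}$, then $\alpha$, then the $|\Lambda'\rangle$-components $c^{(m-r)}_\phi$ are fixed one order late. The problem is that the two facts you explicitly defer as ``the hard part'' are not side conditions to be checked later --- they are the entire content of the paper's proof, and your sketch supplies no mechanism for either. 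For the degree control, the paper proves Lemma \ref{lem3} (via Lemmas \ref{lem1}, \ref{lem2}, packaged as Lemma \ref{key_lemma}): on the partition basis, $\widetilde{L}_{n+r}$ acts to leading filtration order by deleting one part equal to $n$ and multiplying by $2nk_n\Lambda_{2r}$, with error of degree $\le |\lambda|-n-1$; this follows from a three-case analysis of $[L_{r+n},L_{-i+r}]=(n+i)L_{2r+n-i}$ according to whether $L_{2r+n-i}$ lowers degree, acts on $|\Lambda'\rangle$ as the scalar $\Lambda'_{2r+n-i}$, or annihilates it. That one lemma simultaneously yields your claimed invertibility, the bound $v_m\in U_m$ in the uniqueness argument, and Corollary \ref{cor_n_kill}. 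For the determinant collapse, the paper does not attack your ``governing determinant'' directly: it introduces an auxiliary pairing $V^{*,[0]}_{\Delta^*}\times V^{[r]}_{\Lambda'}\to\C$ built from the Heisenberg Fock-space realization --- a device absent from your proposal --- and shows the matrix $\bigl(\langle \Delta^*|\widetilde{L}_\lambda L_\mu|\Lambda'\rangle\bigr)$ is upper triangular with diagonal $\left(2\Lambda_{2r}\right)^{|\lambda|}\prod_i i^{k_i}k_i!$ (Lemma \ref{lem_pairing_LL}, Corollary \ref{cor_det}). This is what turns the a priori infinite-dimensional solve at each order into a finite triangular system and secures polynomiality in $\Lambda_{2r}^{-1}$. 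Your appeal to one-dimensionality of Whittaker vectors gives uniqueness modulo $\C|\Lambda'\rangle$ over the field of rational functions, but it cannot by itself exclude denominators other than powers of $\Lambda_{2r}$, nor does it give existence.

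A second concrete gap is consistency: you use the full family of relations \eqref{L_n action rr1} (all $n\ge 0$) as defining equations for $v_m$, $\alpha$, $\beta_i$, but the system is heavily overdetermined, and nothing in your sketch verifies the equations not used in the solve. The paper's Steps 2--4 do exactly this: Step 2 establishes the relations for $n\ge 2$ (and $n=1$ up to its constant term) from nondegeneracy of the pairing; Step 3 fixes $c^{(k+1)}_{(1)}$ and $\beta_{r-k-1}$ from the constant terms of the $n=0,1$ relations, solvable because $\Lambda_{2r}\neq 0$; and Step 4 deduces the remaining $n=0$ relation from the higher ones using $[\widetilde{L}_{n+r},\widetilde{L}_r]=n\widetilde{L}_{n+2r}$ together with Corollary \ref{cor_n_kill} --- without this step your ``staggering'' could in principle produce constants contradicting the deferred equations. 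Finally, a mis-citation: the verified cases $r=1,2,3$ concern Conjecture \ref{conj_VO_0r}, i.e.\ the different operator $\Phi^{[r],\lambda}_{\Delta,\Lambda}\colon V^{[0]}_{\Delta}\to V^{[r]}_{\Lambda}$, which remains conjectural; Theorem \ref{thm_VO_rr} is proved unconditionally and neither relies on nor follows from that template.
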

We see that $\Lambda_{2r}=0$ is the only singular point of the rank $0$ vertex operator. This is the reason we set $\Lambda_{2r}\neq 0$ in the definition of the rank $r$ 
Verma module. 
For the case of $\Lambda_{2r}=0$ and $\Lambda_{2r-1}\neq 0$, the universal Whittaker 
module is still irreducible \cite{LGZ}, 
 \cite{FJK}. We expect  
from observations that such an
irreducible Whittaker module describes 
irregular singularities of rank $r/2$.

The examples of vertex operators are provided 
in Appendix A. 
Let us examine the rank $r$ vertex operator  
and prove Theorem \ref{thm_VO_rr} in the next subsections. 
%\subsection{Examples of vertex operators}
%
%Here, we examine the following two vertex operators:
%\begin{align*}
%&\Phi^{[1],\lambda}_{\Delta,\Lambda}(z):\ 
%V^{[0]}_{\Delta}\to V^{[1]}_{\Lambda},
%\\
%&\Phi^{ \Delta}_{\Lambda,\Lambda'}(z):\  
%V^{[1]}_{\Lambda}\to V^{[1]}_{\Lambda'}.
%\end{align*} 
\subsection{On a rank $r$ vertex operator from a Verma module to an irregular Verma module}

 Notice that the commutation relations 
\eqref{comrel_rankr}
of the rank $r$ vertex operator  and the condition \eqref{eq_VO_act_0r} 
imply  
\begin{align}
\label{L_n action part 1}
L_n v_m=&\delta_{n,0}\Delta v_m
+(\alpha+m-n-(n+1)\rho\lambda_0+(n+1)D_0)v_{m-n}
\\
&+\sum_{i=1}^{r-1}\begin{pmatrix}
n+1
\\
i+1
\end{pmatrix}D_{i}v_{m-n+i}-
\sum_{i=1}^r\left(i\beta_i+(i+1)\rho\lambda_i\begin{pmatrix}
n+1
\\
i+1
\end{pmatrix}\right)v_{m-n+i}
\nonumber
\\
&+\sum_{i,j=1}^r
\begin{pmatrix}
n+1
\\
i
\end{pmatrix}
D_{i-1}(\beta_j)v_{m-n-1+i+j}
+\frac{1}{2}\sum_{i,j=0}^r \lambda_i\lambda_j\begin{pmatrix}n+1\\i+j+1\end{pmatrix}v_{m-n+i+j},  \nonumber
\end{align}
for any $n\ge 0$, 
where $v_{-n}=0$ for $n>0$, because  
\begin{equation*}
L_n\Phi^{[r],\lambda}_{\Delta,\Lambda}(z)|\Delta\rangle=\left[L_n,\Phi^{[r],\lambda}_{\Delta,\Lambda}(z))\right]|\Delta\rangle+\delta_{n,0}\Delta\Phi^{[r],\lambda}_{\Delta,\Lambda}(z)
|\Delta\rangle. 
\end{equation*}
In this subsection, suppose that \eqref{L_n action part 1} holds, that is, we assume 
the existence of a rank $r$ vertex operator 
satisfying \eqref{comrel_rankr} and 
\eqref{eq_VO_act_0r}. First, 
let us examine the case of $n=0$. Set $v_m=\tilde{v}_m\Lam$. 
\begin{prop}\label{prop n=0 rank r}
 $D_0(\beta_i)=i\beta_i$ ($i=1,\ldots, r$) and 
\begin{equation}\label{eq L_0 action}
[L_0,\tilde{v}_m]|\Lambda\rangle=\left(m\tilde{v}_m+D_0(\tilde{v}_m)\right)|\Lambda\rangle. 
\end{equation}
\end{prop}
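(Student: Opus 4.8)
The plan is to recast the $n=0$ case of \eqref{L_n action part 1} in generating-function form and then extract both assertions by a single regularity comparison in the variable $z$. Writing $\Psi(z)=\Phi^{[r],\lambda}_{\Delta,\Lambda}(z)|\Delta\rangle$ and putting $n=0$ in \eqref{comrel_rankr}, the identity $\binom{1}{k}=0$ for $k\ge2$ leaves only the first-derivative term $z\partial_z$, the $i=0$ term $D_0$ of the second sum, and the $i=j=0$ and $i=0$ contributions of the last two sums; combining this with $L_0|\Delta\rangle=\Delta|\Delta\rangle$ gives
\begin{equation*}
L_0\Psi=\Big(z\partial_z+D_0+K\Big)\Psi,\qquad K=\tfrac12\lambda_0^2-\rho\lambda_0+\Delta .
\end{equation*}
I then substitute the ansatz \eqref{eq_VO_act_0r}, that is $\Psi=z^{\alpha}e^{S}V$ with $S=\sum_{n=0}^{r}\beta_nz^{-n}$ and $V=\sum_{m\ge0}v_mz^m$.

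The crucial point is that $L_0$ commutes with the scalar prefactor $z^{\alpha}e^{S}$. Hence multiplying the displayed equation through by $z^{-\alpha}e^{-S}$ turns its left-hand side into $L_0V=\sum_{m\ge0}(L_0v_m)z^m$, which is manifestly a series in \emph{non-negative} powers of $z$ and carries no $\log z$. On the right-hand side the same operation yields, besides $z\partial_zV+D_0V+(\alpha+K+D_0(\beta_0))V$, the two anomalous contributions $(D_0\alpha)(\log z)\,V$ and $\sum_{n=1}^{r}\big(D_0(\beta_n)-n\beta_n\big)z^{-n}V$. Matching against the regular left-hand side forces both to vanish. The $\log z$ term gives $D_0\alpha=0$ since $v_0=\Lam\neq0$; and comparing the coefficients of $z^{-r},z^{-r+1},\dots,z^{-1}$ in succession --- the leading one at each stage being a scalar multiple of $v_0=\Lam\neq0$ --- yields by descending induction $D_0(\beta_n)=n\beta_n$ for $n=1,\dots,r$, which is the first assertion.

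Once these constraints are in force the equation collapses to $L_0v_m=(\kappa+m)v_m+D_0v_m$ for every $m\ge0$, with one and the same scalar $\kappa=\alpha+K+D_0(\beta_0)$. To reach \eqref{eq L_0 action} I will set $v_m=\tilde v_m\Lam$, expand $L_0v_m=[L_0,\tilde v_m]\Lam+\tilde v_mL_0\Lam$, and expand $D_0v_m=D_0(\tilde v_m)\Lam+\tilde v_m\,D_0\Lam$ by the Leibniz rule for the derivation $D_0$. The instance $m=0$, where $\tilde v_0=1$, reads $(L_0-D_0)\Lam=\kappa\Lam$. Substituting this back into the general $m$ equation makes the contribution $\tilde v_m(L_0-D_0)\Lam$ cancel against $\kappa\tilde v_m\Lam$, and what survives is exactly $[L_0,\tilde v_m]\Lam=\big(m\tilde v_m+D_0(\tilde v_m)\big)\Lam$.

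I expect the one genuinely delicate point to be the bookkeeping of $\Lam$. For $r\ge1$ neither $L_0\Lam$ (a vector of degree $r$) nor $D_0\Lam$ (which records the $\lambda$-dependence of the highest weight through $\Lambda_n=\tfrac12\sum\lambda_i\lambda_{n-i}+\cdots$) is a scalar multiple of $\Lam$, so neither can be written down explicitly; the argument works only because the two occur solely in the fixed combination $(L_0-D_0)\Lam$, which the $m=0$ equation pins to $\kappa\Lam$ and thereby cancels uniformly in $m$. It remains to verify that $\kappa$ is indeed independent of $m$, so that the cancellation is simultaneous, and that the Leibniz rule legitimately applies to the product $\tilde v_m\Lam$. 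The constant $\beta_0$ is a pure normalization and, consistently with \eqref{L_n action part 1}, enters only through $D_0(\beta_0)$, which drops out of the final identity.
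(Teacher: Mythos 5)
Your proof is correct and takes essentially the same route as the paper: matching the coefficients of $z^{-r},\ldots,z^{-1}$ (with $v_0=|\Lambda\rangle\neq 0$ driving the descending induction) is exactly the paper's substitution of $m=-r,\ldots,-1$ into the $n=0$ case of \eqref{L_n action part 1}, and your cancellation of $(L_0-D_0)|\Lambda\rangle=\kappa|\Lambda\rangle$ via the $m=0$ instance is precisely how the paper rewrites \eqref{eq L_0 action 0} into \eqref{eq L_0 action}. The only differences are cosmetic but harmless refinements: the generating-function packaging, the explicit disposal of the $\log z$ term (yielding $D_0\alpha=0$, which the paper tacitly builds into \eqref{L_n action part 1} by taking $\alpha$ independent of $\lambda_1,\ldots,\lambda_r$), and the tracking of $\beta_0$ through the scalar $\kappa$, which indeed drops out.
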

\begin{proof}
From \eqref{L_n action part 1},  
\begin{equation}
L_0 \tilde{v}_m|\Lambda\rangle=(\alpha+m+D_0+\Delta-\rho\lambda_0)\tilde{v}_{m}|\Lambda\rangle
+\sum_{i=1}^r\left(D_{0}(\beta_i)-i\beta_i\right)\tilde{v}_{m+i}|\Lambda\rangle
+\frac{1}{2}\lambda_0^2\tilde{v}_{m}|\Lambda\rangle. 
\end{equation}
Hence, substituting $m=-r,\ldots,-1$ yields $D_0(\beta_i)=i\beta_i$ ($i=1,\ldots, r$).  
Consequently,  
\begin{equation}\label{eq L_0 action 0}
L_0 \tilde{v}_m|\Lambda\rangle=\left(\alpha+m+D_0+\Delta+\lambda_0\left(-\rho+\frac{1}{2}\lambda_0\right)\right)\tilde{v}_{m}|\Lambda\rangle. 
\end{equation}
In particular,  
\begin{equation}\label{eq L_0 action on highest weight vector}
L_0|\Lambda\rangle=\left(D_0+\alpha+\Delta+\lambda_0\left(-\rho+\frac{1}{2}\lambda_0\right)\right)|\Lambda\rangle.
\end{equation}
Then, by  \eqref{eq L_0 action on highest weight vector},  \eqref{eq L_0 action 0}  is rewritten as \eqref{eq L_0 action}.  
\end{proof}

\begin{prop}\label{prop 0<n<r rank r}
\begin{equation}\label{D_i b_k 1r}
D_i(\beta_k)=(-1)^i(k+i)\beta_{k+i}\quad (k=1,\ldots,r)
\end{equation}
for $i=0,\ldots,r-1$, where $\beta_k=0$ if $k>r$, and  for $n=1,\ldots,r-1$, 
\begin{align}
L_n\tilde{v}_m|\Lambda\rangle=&\sum_{k=1}^n\left(\begin{pmatrix}
n+1
\\
k+1
\end{pmatrix}\left(D_{k}-(k+1)\rho\lambda_k+\frac{1}{2}\sum_{i=0}^k\lambda_i\lambda_{k-i}\right)+\sum_{i=0}^{k}
\begin{pmatrix}
n+1
\\
i
\end{pmatrix}(-1)^{i-1}k\beta_k
\right)\tilde{v}_{m-n+k}|\Lambda\rangle\label{L_n v_m 0r}
\\
&+\left(\alpha+m-n+(n+1)D_0+(n+1)\lambda_0\left(-\rho+\frac{1}{2}\lambda_0\right)\right)\tilde{v}_{m-n}|\Lambda\rangle
\nonumber.
\end{align}
In particular, 
\begin{equation*}
L_n|\Lambda\rangle=\left(D_{n}-(n+1)\rho\lambda_n+\frac{1}{2}\sum_{i=0}^n\lambda_i\lambda_{n-i}+(-1)^{n+1}n\beta_n\right)|\Lambda\rangle. 
\end{equation*}
\end{prop}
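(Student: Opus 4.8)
The plan is to argue by induction on $n$, running from $n=1$ up to $n=r-1$ and taking Proposition \ref{prop n=0 rank r} (the case $n=0$) as the base. At the $n$-th stage I will assume the identities \eqref{D_i b_k 1r} already established for all $i=0,1,\dots,n-1$, and I will exploit the single relation \eqref{L_n action part 1} for this $n$ in two regimes: for $m<0$ to pin down the new $\beta$-constraint, and then for general $m$ to recast the action in the closed form \eqref{L_n v_m 0r}.

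First I would substitute the inductive hypotheses $D_i(\beta_k)=(-1)^i(k+i)\beta_{k+i}$, $i<n$, into \eqref{L_n action part 1}. The only term carrying the still-unknown operator $D_n$ is the double sum $\sum_{i,j}\binom{n+1}{i}D_{i-1}(\beta_j)v_{\dots}$; since $\binom{n+1}{i}=0$ for $i>n+1$, this term involves only $D_0,\dots,D_n$, and the contribution of $D_n$ comes from $i=n+1$ alone, so $D_n$ enters linearly and in isolation as $\sum_j D_n(\beta_j)v_{m+j}$. Taking $m<0$ forces $L_nv_m=0$ and annihilates $v_{m-n}$, reducing the relation to a linear combination of $v_0,v_1,\dots$ with scalar coefficients assembled from the $\beta_k$ and $D_n(\beta_k)$. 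I would then run a descending induction on $k$ from $k=r$ to $k=1$, letting $m$ increase from $-r$ to $-1$: at $m=-r$ only the coefficient of $v_0=|\Lambda\rangle$ survives and gives $D_n(\beta_r)=(-1)^n(r+n)\beta_{r+n}=0$, and at each subsequent value of $m$ the contributions of $v_1,v_2,\dots$ cancel exactly because of the constraints already proved for larger $k$. This telescoping cancellation is the mechanism that drives the extraction; it leaves only the coefficient of $v_0$ and thus the single new identity $D_n(\beta_k)=(-1)^n(k+n)\beta_{k+n}$, and it uses nothing beyond $|\Lambda\rangle\neq0$ (no linear independence of the $v_m$).

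Once all of $D_0,\dots,D_n$ are available, the remaining task is to reorganise \eqref{L_n action part 1} for arbitrary $m$ into \eqref{L_n v_m 0r}. The key manipulation is to reindex the double sum by $k=i+j-1$; after inserting $D_{i-1}(\beta_j)=(-1)^{i-1}(i+j-1)\beta_{i+j-1}$ it merges with the $-i\beta_i$ part of the single sum into a common coefficient of $v_{m-n+k}$, and collecting the $\lambda_i\lambda_j$ and $D_i$ contributions yields the stated coefficients. The formula for $L_n|\Lambda\rangle$ then follows by setting $m=0$, where only the $k=n$ term survives and the binomial identity $\sum_{i=0}^n\binom{n+1}{i}(-1)^{i-1}=(-1)^{n+1}$ collapses the $\beta_n$-coefficient to $(-1)^{n+1}n\beta_n$. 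I expect the main obstacle to be the combinatorial bookkeeping of this last reorganisation — tracking binomials, signs, and index shifts as the several sums of \eqref{L_n action part 1} are merged — rather than any conceptual point; once the telescoping cancellation is noticed, the extraction of \eqref{D_i b_k 1r} is routine.
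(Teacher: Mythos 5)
Your proposal is correct and follows essentially the same route as the paper: both substitute $m=-r,\ldots,-1$ into \eqref{L_n action part 1}, use the telescoping observation that at each step the coefficients of $v_1,v_2,\ldots$ are sums already shown to vanish so that only the coefficient of $v_0=|\Lambda\rangle\neq 0$ is new, and thereby extract \eqref{D_i b_k 1r} inductively before substituting back to obtain \eqref{L_n v_m 0r} and, at $m=0$, the formula for $L_n|\Lambda\rangle$. The only difference is bookkeeping: the paper records the extracted relations as the two families \eqref{eq D zero} and \eqref{eq D not zero} for all $n$ at once and then inducts, whereas you fold everything into a single induction on $n$; your isolation of $D_n$ via the $i=n+1$ term of the double sum and the identity $\sum_{i=0}^{n}\binom{n+1}{i}(-1)^{i-1}=(-1)^{n+1}$ are exactly the computations underlying the paper's argument.
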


\begin{proof}  
Substituting $m=-r,\ldots,n-1-r$ 
into \eqref{L_n action part 1} for 
$n=1,\ldots, r-1$ yields 
\begin{equation}\label{eq D zero}
\sum_{i=k-r}^{n+1}\begin{pmatrix}
n+1
\\
i
\end{pmatrix}D_{i-1}(\beta_{k-i})=0,
\end{equation}
for $k=r+2,\ldots, n+1+r$. Thus, substituting $m=n-r,\ldots, -1$ into \eqref{L_n action part 1} again yields  
\begin{equation}\label{eq D not zero}
\left(\sum_{i=1}^{n+1}\begin{pmatrix}
n+1
\\
i
\end{pmatrix}D_{i-1}(\beta_{k+1-i})-k\beta_k\right)=0
\end{equation}
for $k=n+1,\ldots, r$. 
From \eqref{eq D zero}, inductively,  
\begin{equation*}
D_{i-1}(\beta_{r-j})=0\quad (j=0,1,\ldots, i-2)
\end{equation*}
for $i=2,\ldots, r$. From \eqref{eq D not zero}, inductively,  
\begin{equation*}
D_i(\beta_k)=(-1)^i(k+i)\beta_{k+i}\quad (k=1,\ldots,r-i)
\end{equation*}
for $i=1,\ldots,r-1$. Hence, we obtain 
\eqref{D_i b_k 1r}, which yields  
  \eqref{L_n v_m 0r}. 
\end{proof}

In a similar manner, we have the following proposition. 
\begin{prop}\label{prop r-1<n<2r+1 rank r}
for $n=r,r+1,\ldots,2r$, 
\begin{equation}
L_n|\Lambda\rangle=\frac{1}{2}\sum_{i=0}^r \lambda_i\lambda_{n-i}
|\Lambda\rangle+\delta_{n,r}\left((-1)^{r+1}r\beta_r-(r+1)\rho\lambda_r\right)|\Lambda\rangle, 
\end{equation}
and for $n>2r$, 
\begin{equation}
L_n\tilde{v}_m|\Lambda\rangle=0\quad (0\le m<n-2r). 
\end{equation}
\end{prop}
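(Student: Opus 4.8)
The plan is to establish both assertions by specializing the master identity \eqref{L_n action part 1} at suitable $m$ and extracting the surviving terms, exactly as in the proofs of Propositions \ref{prop n=0 rank r} and \ref{prop 0<n<r rank r}. For the action on $\Lam=v_0$ in the range $r\le n\le 2r$, I set $m=0$ in \eqref{L_n action part 1}. Since $v_k=0$ for $k<0$, the diagonal term carries $v_{-n}$ and vanishes, and the terms $\binom{n+1}{i+1}D_iv_{i-n}$ with $1\le i\le r-1$ vanish because $i<r\le n$. It then remains to collect the three kinds of contributions that can land on $v_0$: the $\lambda$-quadratic term, the single $\beta$-and-$\rho$ term, and the double sum involving $D_{i-1}(\beta_j)$.

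The $\lambda$-quadratic term $\frac{1}{2}\sum_{i,j}\lambda_i\lambda_j\binom{n+1}{i+j+1}v_{i+j-n}$ reaches $v_0$ exactly through $i+j=n$, where $\binom{n+1}{n+1}=1$, producing $\frac{1}{2}\sum_{i=0}^r\lambda_i\lambda_{n-i}\Lam$; the binomial $\binom{n+1}{i+j+1}$ automatically annihilates any $i+j>n$, so no $v_k$ with $k\ge 1$ is generated. The subtle point is the $\beta_r$ bookkeeping, which occurs only at $n=r$. There, the term $-\sum_i(i\beta_i+\cdots)v_{i-n}$ contributes $-r\beta_r$ through $i=r$, while the double sum $\sum_{i,j}\binom{n+1}{i}D_{i-1}(\beta_j)v_{i+j-n-1}$ hits $v_0$ only when $i+j=r+1$. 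Invoking $D_{i-1}(\beta_j)=(-1)^{i-1}(i+j-1)\beta_{i+j-1}=(-1)^{i-1}r\beta_r$ from Proposition \ref{prop 0<n<r rank r}, I would evaluate $\sum_{i=1}^r\binom{r+1}{i}(-1)^{i-1}r\beta_r$ by means of $\sum_{i=0}^{r+1}\binom{r+1}{i}(-1)^i=0$, obtaining $(1+(-1)^{r+1})r\beta_r$; adding the $-r\beta_r$ collapses this to $(-1)^{r+1}r\beta_r$. Together with the $\rho$-contribution $-(r+1)\rho\lambda_r$, again present only at $n=r$, this yields the first displayed formula with its $\delta_{n,r}$. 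For $r<n\le 2r$ the $\beta$-contribution drops out because $\beta_n=0$ for $n>r$ and the relevant index no longer reaches $v_0$, leaving only the $\lambda$-quadratic piece.

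For the second assertion, with $n>2r$ and $0\le m<n-2r$, I would note that every shifted index in \eqref{L_n action part 1} is strictly negative: the maximal shift $i+j\le 2r$ combined with $m<n-2r$ forces $m-n+i+j<0$, and the remaining shifts $m-n$, $i-n$ and $i+j-n-1$ are even smaller. Hence every $v$-term vanishes and $L_n\tilde v_m\Lam=0$.

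I expect the main obstacle to be the $\beta_r$ computation at $n=r$, namely verifying that $L_r\Lam$ emerges as a pure multiple of $\Lam$. The derivative double sum naively threatens to produce vectors $v_k$ with $k=i+j-r-1\ge 1$, but each such term carries the factor $\beta_{i+j-1}=\beta_{r+k}=0$, while the $\lambda$-quadratic terms with $i+j>n$ are suppressed by the vanishing binomial $\binom{n+1}{i+j+1}$. Making these two cancellations explicit, together with the alternating-sum evaluation above, is the only delicate bookkeeping; the remaining cases follow from the routine index analysis of \eqref{L_n action part 1}.
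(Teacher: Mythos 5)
Your proof is correct and is essentially the paper's own argument: the paper omits the details, saying only that the proposition follows ``in a similar manner'' to Propositions \ref{prop n=0 rank r} and \ref{prop 0<n<r rank r}, i.e., by specializing \eqref{L_n action part 1} (here at $m=0$ for $r\le n\le 2r$, and at $0\le m<n-2r$ for $n>2r$) and invoking the relations \eqref{D_i b_k 1r}. Your bookkeeping at $n=r$ — the vanishing of $\beta_{i+j-1}$ for $i+j>r+1$, the vanishing binomials $\binom{n+1}{i+j+1}$ for $i+j>n$, and the evaluation $\sum_{i=1}^{r}\binom{r+1}{i}(-1)^{i-1}=1+(-1)^{r+1}$ combining with $-r\beta_r$ to give $(-1)^{r+1}r\beta_r$ — correctly supplies exactly the computation the paper leaves implicit.
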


Now, we consider the case when $\Delta=0$ and $V_0^{[0]}$ is 
the irreducible highest weight representation.  
\begin{prop}
A rank $r$ vertex operator $\Phi_{0,\Lambda}^{[r],\lambda}(z)$: 
$V^{[0]}_{0}\to V^{[r]}_{\Lambda}$ exists uniquely and acts on 
the highest weight vector $|0\rangle$ as 
\begin{equation*}
\Phi_{0,\Lambda}^{[r],\lambda}(z)|0\rangle 
=\sum_{m=0}^\infty \frac{1}{m!}L_{-1}^m|\Lambda\rangle
z^m,
\end{equation*}
where 
\begin{equation}\label{eq Lambda_n}
\Lambda_n=\frac{1}{2}\sum_{i=0}^r \lambda_i\lambda_{n-i}
-\delta_{n,r}(r+1)\rho\lambda_r \quad (n=r,r+1,\ldots, 2r). 
\end{equation}
Consequently, 
\begin{equation*}
\lim_{z\to 0}\Phi_{0,\Lambda}^{[r],\lambda}(z)|0\rangle
=|\Lambda\rangle. 
\end{equation*}
\end{prop}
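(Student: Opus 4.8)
The plan is to prove the statement in three stages: first extract the explicit form of $\Phi^{[r],\lambda}_{0,\Lambda}(z)|0\rangle$ together with uniqueness, then establish existence via the free-field realization, and finally read off the eigenvalues $\Lambda_n$ and the stated limit. The decisive feature of this case is that the source is the \emph{irreducible} vacuum module, so that $L_{-1}|0\rangle=0$ in addition to $L_n|0\rangle=0$ for $n\ge 0$. Specializing the commutation relation \eqref{comrel_rankr} to $n=-1$, every binomial coefficient $\binom{0}{i+1}$ with $i\ge 0$ vanishes, so the relation collapses to $[L_{-1},\Phi^{[r],\lambda}_{0,\Lambda}(z)]=\partial_z\Phi^{[r],\lambda}_{0,\Lambda}(z)$, exactly the translation law of an ordinary primary field. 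Applying both sides to $|0\rangle$ and using $L_{-1}|0\rangle=0$ gives $\partial_z F(z)=L_{-1}F(z)$, where $F(z)=\Phi^{[r],\lambda}_{0,\Lambda}(z)|0\rangle=z^\alpha\exp\!\big(\sum_{n=0}^r\beta_n/z^n\big)\sum_{m\ge 0}v_m z^m$.

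Next I would match powers of $z$ in $\partial_z F=L_{-1}F$. Since $L_{-1}F$ equals the prefactor $z^\alpha\exp(\sum_n\beta_n/z^n)$ times the genuine (non-negative) power series $\sum_m(L_{-1}v_m)z^m$, all strictly negative powers of $z$ produced by differentiating the prefactor must cancel. Reading the coefficients of $z^{-r-1},z^{-r},\ldots,z^{-1}$ in turn forces $\beta_r=\beta_{r-1}=\cdots=\beta_1=0$ and then $\alpha=0$, while the remaining constant $\beta_0$ is merely an overall scalar fixed by the normalization $v_0=|\Lambda\rangle$. The non-negative powers then yield the recursion $(m+1)v_{m+1}=L_{-1}v_m$, whence $v_m=\tfrac{1}{m!}L_{-1}^m|\Lambda\rangle$. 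Because $|0\rangle$ is a cyclic generator of $V^{[0]}_0$ and the commutation relations determine $\Phi^{[r],\lambda}_{0,\Lambda}(z)$ on all of $V^{[0]}_0$ once its value on $|0\rangle$ is known, this simultaneously proves uniqueness and establishes the claimed action on $|0\rangle$.

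For existence I would invoke the free-field realization rather than verify the recursion \eqref{L_n action part 1} by hand. The confluent primary field $\Phi^{[r]}_\lambda(z)={:}\exp(\sum_{n=0}^r\tfrac{\lambda_n}{n!}\varphi^{(n)}(z)){:}$ satisfies precisely the commutation relations \eqref{comrel_rankr} by the Corollary. On the bosonic Fock vacuum $|0\rangle$ (with $a_n|0\rangle=0$ for $n\ge 0$) one has $L_{-1}|0\rangle=0$, so the translation argument above gives $\Phi^{[r]}_\lambda(z)|0\rangle=e^{zL_{-1}}|\lambda\rangle$ with $|\lambda\rangle=\lim_{z\to 0}\Phi^{[r]}_\lambda(z)|0\rangle$. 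By the Proposition computing the action of the $L_n$ on $|\lambda\rangle\in\mathcal{F}_\lambda$, the vector $|\lambda\rangle$ is a rank $r$ highest-weight vector with $L_n|\lambda\rangle=\Lambda_n|\lambda\rangle$ for $n=r,\ldots,2r$ and $L_n|\lambda\rangle=0$ for $n>2r$, where $\Lambda_n$ is exactly \eqref{eq Lambda_n}. Since $\Lambda_{2r}=\tfrac12\lambda_r^2\neq 0$, the Remark guarantees $V^{[r]}_\Lambda$ is irreducible, so the submodule $U(\Vir)|\lambda\rangle\subseteq\mathcal{F}_\lambda$ is isomorphic to $V^{[r]}_\Lambda$. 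Restricting $\Phi^{[r]}_\lambda(z)$ to the Virasoro submodule generated by the Fock vacuum, which is the irreducible $V^{[0]}_0$, therefore yields a genuine rank $r$ vertex operator $V^{[0]}_0\to V^{[r]}_\Lambda$ with the required properties; the limit $\lim_{z\to 0}\Phi^{[r],\lambda}_{0,\Lambda}(z)|0\rangle=v_0=|\Lambda\rangle$ is then immediate.

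The main obstacle is the existence half, specifically the identification of modules inside the Fock space: one must check that the Fock vacuum generates exactly the irreducible vacuum module (not a larger or reducible submodule), and that $U(\Vir)|\lambda\rangle$ is all of $V^{[r]}_\Lambda$ rather than a proper quotient. Both rest on the irreducibility statements quoted in the Remark (via \cite{LGZ}, \cite{FJK}) together with $\Lambda_{2r}\neq 0$; once these are in place, well-definedness of the restriction is automatic, since the free field is an honest intertwiner of the commutation relations. A purely algebraic alternative for existence -- substituting $v_m=\tfrac{1}{m!}L_{-1}^m|\Lambda\rangle$ into \eqref{L_n action part 1} and verifying it for all $n\ge 0$ using $[L_n,L_{-1}]=(n+1)L_{n-1}$ together with the $D_i$-identities of Propositions \ref{prop n=0 rank r}--\ref{prop r-1<n<2r+1 rank r} -- is also available, but it requires noticeably more bookkeeping with the $D_i$ operators and with the intermediate terms $L_{n-k}|\Lambda\rangle$ for $0\le n-k<r$.
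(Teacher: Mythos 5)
Your uniqueness half is exactly the paper's argument: the paper likewise uses only $L_{-1}|0\rangle=0$ (the $n=-1$ case of \eqref{comrel_rankr}), matches powers of $z$ in $\partial_z F=L_{-1}F$ to force $\alpha=\beta_1=\cdots=\beta_r=0$ and $v_m=\tfrac{1}{m!}L_{-1}^m|\Lambda\rangle$, and then reads off \eqref{eq Lambda_n} (the paper from Proposition \ref{prop r-1<n<2r+1 rank r} with $\beta_r=0$; you from the Fock-space formula for $L_n|\lambda\rangle$, which gives the same thing). Where you genuinely diverge is existence. The paper substitutes $v_m=\tfrac{1}{m!}L_{-1}^m|\Lambda\rangle$ back into \eqref{L_n action part 1} and verifies the recursion directly; its only input from the free-field picture is the identity \eqref{eq D_n}, obtained by realizing $|\Lambda\rangle$ as $\exp\bigl(\lambda_0 q+\sum_{n=1}^r\tfrac{\lambda_n}{n}a_{-n}\bigr)$ applied to the Fock vacuum, after which \eqref{L_n action part 1} collapses to a pure Virasoro identity. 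You instead transport the whole confluent primary field and argue by module identifications. Your target-side identification is correct and clean: $|\lambda\rangle$ satisfies the Whittaker conditions with $\Lambda_{2r}=\tfrac12\lambda_r^2\neq 0$, so the canonical surjection $V^{[r]}_{\Lambda}\twoheadrightarrow U(\Vir)|\lambda\rangle$ is injective by irreducibility.

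There is, however, one genuine gap and one glossed step in the existence half. The gap: your source-side identification, that the Fock vacuum generates exactly the irreducible vacuum module, is \emph{not} covered by the Remark or by \cite{LGZ}, \cite{FJK}, which concern only the rank $r\ge 1$ Whittaker modules. Since $L_{-1}|0\rangle=0$ in $\mathcal{F}_0$, the submodule $U(\Vir)|0\rangle$ is a quotient of $V^{[0]}_0$ killing $L_{-1}|0\rangle$; for generic $\rho$ that quotient is already simple, but nothing you cite rules out, at degenerate central charge, $U(\Vir)|0\rangle$ being a non-simple quotient, in which case your restriction defines an operator on a module strictly larger than the irreducible $\bar V^{[0]}_0$ and does not descend without further argument. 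This is a Feigin--Fuchs-type fact about $\mathcal{F}_0$ that needs its own justification; the paper's route avoids needing it in this form. The glossed step: ``well-definedness of the restriction is automatic'' hides that \eqref{comrel_rankr} involves the $\lambda$-derivatives $D_i$, so applying $\Phi^{[r]}_\lambda(z)$ to descendants of $|0\rangle$ produces $D_i$-derivatives of $|\lambda\rangle$, which a priori lie in $\mathcal{F}_\lambda$ rather than in $U(\Vir)|\lambda\rangle\cong V^{[r]}_\Lambda$; it is precisely the identity \eqref{eq D_n}, which you never invoke, that shows $D_i|\lambda\rangle=(L_i+\mathrm{const})|\lambda\rangle$ and keeps the image inside the Whittaker module. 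With \eqref{eq D_n} quoted, and the vacuum-module identification either proved (e.g.\ for generic $\rho$, or for $c=1$ as used in the applications) or replaced by the paper's direct verification of \eqref{L_n action part 1}, your argument closes.
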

\begin{proof}
From $L_{-1}|0\rangle=0$, 
\begin{align*}
L_{-1}\Phi_{0,\Lambda}^{[r],\lambda}(z)|0\rangle=&
\frac{\partial}{\partial z}\Phi_{0,\Lambda}^{[r],\lambda}(z)
|0\rangle
\\
=&\left(\frac{\alpha}{z}-\sum_{i=1}^r\frac{i\beta_i}{z^{i+1}}\right)
\Phi_{0,\Lambda}^{[r],\lambda}(z)
|0\rangle+z^\alpha\exp\left(\sum_{i=1}^r \frac{\beta_i}{z^i}\right)
\sum_{m=0}^\infty mv_mz^{m-1}.
\end{align*}
This relation is true if and only if $\alpha=\beta_1=\cdots=\beta_r=0$ 
and $v_m=L_{-1}^m/m!|\Lambda\rangle$. 
Hence, if the rank $r$ vertex operator $\Phi_{0,\Lambda}^{[r],\lambda}(z)$ 
exists, then it is unique. In addition, from Proposition 
\ref{prop r-1<n<2r+1 rank r}, the relations 
\eqref{eq Lambda_n} hold.  

We need to show the relations 
\eqref{L_n action part 1} for $n\ge 0$ are true, when $\alpha=\beta_1=\cdots=\beta_r=0$ 
and $v_m=L_{-1}^m/m!|\Lambda\rangle$. 
Proposition \ref{prop n=0 rank r} and \ref{prop 0<n<r rank r} 
imply the actions of $D_n$ ($n=0,1,\ldots,r-1$) on $|\Lambda\rangle$ 
is 
\begin{equation}\label{eq D_n}
D_n|\Lambda\rangle=\left(L_n+\frac{1}{2}\sum_{i=0}^n\lambda_i\lambda_{n-i}
-(n+1)\rho \lambda_n\right)|\Lambda\rangle. 
\end{equation}
This is achieved by realizing $|\Lambda\rangle$ as 
\begin{equation*}
|\Lambda\rangle=\exp\left(\lambda_0q+\sum_{n=1}^r\frac{\lambda_n}{n}a_{-n}\right). 
\end{equation*}
By \eqref{eq D_n}, the relations \eqref{L_n action part 1} read as 
\begin{equation*}
L_n\tilde{v_m}|\Lambda\rangle=\sum_{i=1}^{r-1}
\begin{pmatrix}
n+1
\\
i+1
\end{pmatrix}\tilde{v}_{m-n+i}L_i|\Lambda\rangle
+\tilde{v}_{m-n}\left( m-n+(n+1)L_0\right) |\Lambda\rangle,
\end{equation*}
which can be verified by straightforward computations. 
\end{proof}

 Therefore, if we consider zero or infinity 
 as an irregular singular point, then 
 we do not need the general rank $r$ vertex operator 
 $\Phi^{[r],\lambda}_{\Delta,\Lambda}(z)$: 
$V^{[0]}_{\Delta}\to V^{[r]}_{\Lambda}$. 

\subsection{On a rank $0$ vertex operator from a rank $r$ Verma module to a rank $r$ Verma module}
If the rank $0$ vertex operator 
$\Phi^{ \Delta}_{\Lambda,\Lambda'}(z)$: 
$V^{[r]}_{\Lambda}\to V^{[r]}_{\Lambda'}$ 
satisfies the commutation relations 
\eqref{comrel_rank0} and the condition 
\eqref{eq_VO_act_rr}, then 
for  $n\ge r$,  
\begin{equation}\label{L_n action rr}
L_{n}v_m=\sum_{i=0}^r\delta_{n,i+r}\Lambda_{i+r}v_m
-\sum_{i=1}^r i\beta_iv_{m+i-n} 
+\left(\alpha+(n+1)\Delta+m-n\right)v_{m-n}.
\end{equation}
Setting $m=0$ in \eqref{L_n action rr} yields  
\begin{equation*}
\Lambda_n'=\Lambda_n-\delta_{n,r}r\beta_r
\end{equation*}
for $n=r,r+1,\ldots,2r$.

Set 
\begin{equation*}
\widetilde{L}_n=L_n-\Lambda_n\quad (n=r,\ldots,2r),\quad 
\widetilde{L}_n=L_n\quad (n>2r). 
\end{equation*}
  Then, the relations \eqref{L_n action rr} are rewritten as  
\begin{equation}\label{L_n action rr1}
\widetilde{L}_{n+r}v_m=\delta_{n,0}r\beta_rv_m
-\sum_{i=1}^r i\beta_iv_{m+i-n-r} 
+\left(\alpha+(n+1)\Delta+m-n\right)v_{m-n-r}\quad (n\ge 0).
\end{equation}
Because of the commutation relations \eqref{comrel_rank0}, 
we need to show only the uniqueness and existence of $v_m$ ($m\ge 1$) 
such that \eqref{L_n action rr1} holds. 
Below, we prepare lemmas and corollaries needed 
to prove Theorem \ref{thm_VO_rr}.

\begin{lem}\label{lem1}
For any tuple $(\mu_1,\ldots,\mu_k)$ for $\mu_i\in\Z_{\ge 1}$, 
\begin{equation*}
L_{-\mu_1+r}\cdots L_{-\mu_k+r}|\Lambda'\rangle \in U_m,
\end{equation*}
where $m=\sum_{i=1}^k\mu_i$. 
\end{lem}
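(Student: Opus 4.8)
The plan is to reduce Lemma~\ref{lem1} to a single estimate describing how each Virasoro generator shifts a vector through the filtration $U_0\subseteq U_1\subseteq\cdots$. Set $s(n):=\max(r-n,\,0)$. Because $|\Lambda'\rangle\in U_0$ and the operators $L_{-\mu_i+r}$ are applied one after another, Lemma~\ref{lem1} follows immediately once we know
\[
L_n U_m\subseteq U_{m+s(n)}\qquad(n\in\Z,\ m\ge 0),
\]
with the convention $U_j=\{0\}$ for $j<0$: each index $-\mu_i+r$ is at most $r-1$ since $\mu_i\ge 1$, so $s(-\mu_i+r)=\mu_i$, and iterating the estimate sends $U_0$ into $U_{\sum_i\mu_i}=U_m$. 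The merit of the function $s(n)$ is that it treats uniformly the two regimes $n\le r-1$ (degree-raising) and $n\ge r$ (where $L_n$ lies in $\Vir_r$ and preserves the filtration); this uniformity is exactly what is needed, since the tuple $(\mu_1,\ldots,\mu_k)$ is arbitrary and reordering the product inevitably produces generators of index $\ge r$.

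I would prove the estimate by induction on $m$. For $m=0$ we have $U_0=\C|\Lambda'\rangle$, and the defining highest-weight relations of $V^{[r]}_{\Lambda'}$ give precisely three cases: for $n\le r-1$ the vector $L_n|\Lambda'\rangle$ is the basis vector attached to the one-part partition $r-n$, of degree $r-n=s(n)$; for $r\le n\le 2r$ it equals $\Lambda'_n|\Lambda'\rangle\in U_0$; and for $n>2r$ it vanishes. In every case $L_n|\Lambda'\rangle\in U_{s(n)}$.

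For the inductive step I fix $m\ge 1$, assume the estimate at all lower filtration levels, and reduce to a single basis vector $b=L_{j_1}L_{j_2}\cdots L_{j_l}|\Lambda'\rangle$ of degree exactly $m$ (those of degree $<m$ being covered by the hypothesis). Writing $b=L_{j_1}b'$ with $b'=L_{j_2}\cdots L_{j_l}|\Lambda'\rangle\in U_{m'}$ and $m'=m-(r-j_1)<m$, I move $L_n$ across $L_{j_1}$:
\[
L_n b=L_{j_1}\bigl(L_n b'\bigr)+(n-j_1)\,L_{n+j_1}b'+\tfrac{c}{12}(n^3-n)\,\delta_{n+j_1,0}\,b'.
\]
Each summand is then estimated by the inductive hypothesis at level $m'<m$. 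The term $L_{j_1}(L_n b')$ lies in $U_{m'+s(n)+s(j_1)}=U_{m+s(n)}$, where one uses $m'+s(n)<m$, valid because $s(n)<s(j_1)$ whenever $n>j_1$; the central term is a scalar multiple of $b'\in U_{m'}\subseteq U_{m+s(n)}$; and $L_{n+j_1}b'$ lies in $U_{m'+s(n+j_1)}\subseteq U_{m+s(n)}$. If instead $n\le j_1$, then $L_n b$ is already partition-ordered, hence a basis vector of degree $m+s(n)$, and nothing needs to be reordered.

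The step I expect to require the most care is the last inclusion, which rests on the elementary inequality
\[
s(n+j_1)\le (r-j_1)+s(n)\qquad(j_1\le r-1),
\]
to be checked by splitting on the sign of $r-n-j_1$. It is precisely here that a commutator of two low-index generators can create a generator $L_{n+j_1}$ of index $\ge r$, and the definition of $s$ is calibrated so that such generators do not spoil the degree count. Once this inequality and the accompanying degree bookkeeping are verified in each summand, the induction closes and Lemma~\ref{lem1} follows.
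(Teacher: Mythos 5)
Your proof is correct, but it is organized rather differently from the paper's. The paper proves the lemma directly as a straightening argument: it reorders the word $L_{-\mu_1+r}\cdots L_{-\mu_k+r}$ into partition order using $[L_{-i+r},L_{-j+r}]=(j-i)L_{-(i+j-r)+r}$ and observes, in two lines, that each commutator term drops the total degree by at least $\min(i+j,r)$, so all correction terms stay inside $U_m$. You instead prove the stronger uniform filtration estimate $L_nU_m\subseteq U_{m+s(n)}$ with $s(n)=\max(r-n,0)$, for \emph{every} $n\in\Z$, by induction on the filtration level, and obtain the lemma by composing single-generator estimates. I checked the bookkeeping: the strict inequality $s(n)<s(j_1)$ for $n>j_1\le r-1$ does make the doubly-nested use of the inductive hypothesis in the term $L_{j_1}(L_nb')$ legitimate, and the inequality $s(n+j_1)\le(r-j_1)+s(n)$ holds for all integers $n$ (both cases of the sign of $r-n-j_1$ reduce to $-n\le s(n)$), so the induction closes. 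Your route costs more writing but buys several things: it simultaneously yields the inequality half of the paper's Lemma 2.12 (that $L_{n+r}$ with $n\ge0$ preserves the filtration), since $s(n+r)=0$; it treats the degree-raising regime $n\le r-1$ and the $\Vir_r$ regime $n\ge r$ in one statement, avoiding the paper's implicit handling of index-$\ge r$ generators created mid-word; and it explicitly accounts for the central term $\frac{c}{12}(n^3-n)\delta_{n+j_1,0}$, which the paper's displayed commutator silently omits in the case $i+j=2r$ (harmless there, since it contributes only a degree-zero scalar, but unmentioned). The paper's sketch, in exchange, is shorter and records the sharper quantitative drop $\min(i+j,r)$, which your $\subseteq$-formulation does not track.
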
 
\begin{proof}
Because $[L_{-i+r},L_{-j+r}]=(j-i)L_{-(i+j-r)+r}$,  moving $L_{-\lambda_i+r}$ to 
the right side, at least, reduces the degree 
by $\min(i+j, r)$.  
\end{proof}
 
\begin{lem}\label{lem2}
For a partition $\lambda$, 
\begin{align*}
&\deg\left(L_{n+r}L_\lambda|\Lambda'\rangle\right)
=|\lambda|\quad (n=0,1,\ldots,r),
\\
&\deg\left(L_{n+r}L_\lambda|\Lambda'\rangle\right) 
<|\lambda|\quad (n>r).
\end{align*}
\end{lem}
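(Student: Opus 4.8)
The plan is to prove Lemma \ref{lem2} by induction on the length $k$ of the partition $\lambda=(\lambda_1,\ldots,\lambda_k)$, establishing the sharper statement that for every $n\ge0$
\[
L_{n+r}L_\lambda|\Lambda'\rangle\equiv \Lambda'_{n+r}\,L_\lambda|\Lambda'\rangle \pmod{U_{|\lambda|-1}}\quad(0\le n\le r),\qquad
L_{n+r}L_\lambda|\Lambda'\rangle\in U_{|\lambda|-1}\quad(n>r),
\]
with the convention $\Lambda'_{n+r}=0$ for $n+r>2r$. Because $L_\lambda|\Lambda'\rangle$ has degree $|\lambda|$ while every correction term lies in $U_{|\lambda|-1}$, this immediately gives $\deg(L_{n+r}L_\lambda|\Lambda'\rangle)=|\lambda|$ for $0\le n\le r$, the top-degree coefficient being $\Lambda'_{n+r}$ (non-zero in the decisive case $n=r$, where $\Lambda'_{2r}=\Lambda_{2r}\neq0$), and $\deg(L_{n+r}L_\lambda|\Lambda'\rangle)<|\lambda|$ for $n>r$. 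The base case $k=0$ is nothing but the defining action $L_{n+r}|\Lambda'\rangle=\Lambda'_{n+r}|\Lambda'\rangle$ for $0\le n\le r$ and $L_{n+r}|\Lambda'\rangle=0$ for $n>r$ on the rank $r$ Verma module.

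For the inductive step I factor $L_\lambda=L_{-\lambda_1+r}L_{\lambda'}$ with $\lambda'=(\lambda_2,\ldots,\lambda_k)$ of length $k-1$ and commute $L_{n+r}$ one step to the right, obtaining
\[
L_{n+r}L_\lambda|\Lambda'\rangle=L_{-\lambda_1+r}\bigl(L_{n+r}L_{\lambda'}|\Lambda'\rangle\bigr)+(n+\lambda_1)\,L_{n-\lambda_1+2r}L_{\lambda'}|\Lambda'\rangle+\kappa\,C\,L_{\lambda'}|\Lambda'\rangle,
\]
where the scalar $\kappa$ carries the central part of $[L_{n+r},L_{-\lambda_1+r}]$ and is nonzero only when $\lambda_1=n+2r$. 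To the first summand I apply the induction hypothesis for $\lambda'$: its leading part $\Lambda'_{n+r}L_{\lambda'}|\Lambda'\rangle$ reproduces $\Lambda'_{n+r}L_\lambda|\Lambda'\rangle$ after left multiplication by $L_{-\lambda_1+r}$, and the remainder, which lies in $U_{|\lambda'|-1}$, is carried by $L_{-\lambda_1+r}$ into $U_{|\lambda'|-1+\lambda_1}=U_{|\lambda|-1}$ by Lemma \ref{lem1}. The central summand lies in $\C\,L_{\lambda'}|\Lambda'\rangle\subseteq U_{|\lambda'|}\subseteq U_{|\lambda|-1}$, since $\lambda_1=n+2r\ge1$ there.

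The crux is the commutator summand $(n+\lambda_1)L_{n-\lambda_1+2r}L_{\lambda'}|\Lambda'\rangle$, whose index $n-\lambda_1+2r$ need not belong to a creation operator. I split on the sign of $N:=n-\lambda_1+r$. If $N\ge0$ then $L_{n-\lambda_1+2r}=L_{N+r}$, and the induction hypothesis — used here uniformly in the lower index — gives $L_{N+r}L_{\lambda'}|\Lambda'\rangle\in U_{|\lambda'|}$; since $\lambda_1\ge1$ this is contained in $U_{|\lambda|-1}$. If $N<0$ then $n-\lambda_1+2r\le r-1$, so $L_{n-\lambda_1+2r}=L_{-(\lambda_1-n-r)+r}$ is a creation operator and the summand is a product of creation operators applied to $|\Lambda'\rangle$; Lemma \ref{lem1} places it in $U_{|\lambda|-n-r}\subseteq U_{|\lambda|-1}$. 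Assembling the three contributions closes the induction. I expect the only real obstacle to be this commutator summand when $N\ge0$, that is, when passing $L_{n+r}$ through $L_{-\lambda_1+r}$ again produces a non-creation operator $L_{N+r}$; this is exactly the reason the inductive statement has to be carried simultaneously for all $n\ge0$ rather than for a single index.
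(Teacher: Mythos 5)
Your proposal is correct and is essentially the paper's own argument in expanded form: the paper's one-line proof also rests on the single commutation relation $[L_{n+r},L_{-i+r}]=(n+i)L_{-(i-n-r)+r}$ together with Lemma \ref{lem1} and the eigenvalue action $L_{n+r}|\Lambda'\rangle=\Lambda'_{n+r}|\Lambda'\rangle$, concluding that $[L_{n+r},L_\lambda]|\Lambda'\rangle$ has degree less than $|\lambda|$, which is exactly what your length induction establishes term by term. Your version merely makes explicit what the paper leaves implicit — the congruence $L_{n+r}L_\lambda|\Lambda'\rangle\equiv\Lambda'_{n+r}L_\lambda|\Lambda'\rangle \pmod{U_{|\lambda|-1}}$, the central term, and the (correct) observation that the stated equality of degrees really hinges on the non-vanishing of the leading coefficient, guaranteed in the decisive case $n=r$ by $\Lambda'_{2r}=\Lambda_{2r}\neq 0$.
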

\begin{proof}
Because $[L_{n+r},L_{-i+r}]=
(n+i)L_{-(i-n-r)+r}$, the degree 
of $[L_{n+r},L_\lambda]|\Lambda'\rangle$ 
is less than $|\lambda|$.  
\end{proof} 
 
\begin{lem}\label{lem3}
For a partition $\lambda=(m^{k_m}, (m-1)^{k_{m-1}},\ldots, 2^{k_2},1^{k_1})$
($k_i\in\Z_{\ge 0}$) and $n\ge 1$,  
\begin{equation}\label{eq_LL_on_L1}
\widetilde{L}_{r+n}L_\lambda|\Lambda'\rangle=2n k_n\Lambda_{2r}
L_{\bar{\lambda}}|\Lambda'\rangle+v,
\end{equation}
where $\bar{\lambda}=(m^{k_m},\ldots,(n+1)^{k_{n+1}},n^{k_n-1},(n-1)^{k_{n-1}}, \ldots,1^{k_1})$ with $\deg(\bar{\lambda})=|\lambda|-n$ and $\deg(v)\le |\lambda|-n-1$. 
\end{lem}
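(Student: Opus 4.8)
The plan is to commute $\widetilde{L}_{r+n}$ to the right through the word $L_\lambda=A_1\cdots A_k$, where $A_j=L_{-\lambda_j+r}$, and to read off the top-degree contribution. Since $\widetilde{L}_{r+n}|\Lambda'\rangle=0$ for every $n\ge 1$ (the scalar $\Lambda_{r+n}$ is subtracted off when $1\le n\le r$, while $L_{r+n}|\Lambda'\rangle=0$ when $n>r$), the pass-through term vanishes and only the commutator terms survive, so that, using $[L_{r+n},L_{-\lambda_j+r}]=(n+\lambda_j)L_{n-\lambda_j+2r}$ from the proof of Lemma~\ref{lem2},
\[
\widetilde{L}_{r+n}L_\lambda|\Lambda'\rangle=\sum_{j=1}^{k}(n+\lambda_j)\,A_1\cdots A_{j-1}\,L_{n-\lambda_j+2r}\,A_{j+1}\cdots A_k|\Lambda'\rangle .
\]
I would then classify the summands according to the sign of $\lambda_j-n$, since the index $n-\lambda_j+2r$ is $<r$, in $[r,2r]$, or $>2r$ accordingly.

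Before doing so I would record two filtration estimates. First, from Lemma~\ref{lem1}, $L_{-p+r}$ maps $U_e$ into $U_{e+p}$ for $p\ge 1$, so prepending $A_1\cdots A_{j-1}$ raises the degree by at most $\lambda_1+\cdots+\lambda_{j-1}$. Second, and this is the decisive point, for any $q>2r$ and any partition $\mu$ one has $L_qL_\mu|\Lambda'\rangle\in U_{|\mu|-(q-r)}$. I would prove this strong lowering bound by induction on the number of parts of $\mu$: peeling off the leftmost factor gives $L_qL_{-\mu_1+r}L_{\mu'}|\Lambda'\rangle=L_{-\mu_1+r}\bigl(L_qL_{\mu'}|\Lambda'\rangle\bigr)+(q+\mu_1-r)L_{q-\mu_1+r}L_{\mu'}|\Lambda'\rangle$, where the first term is controlled by the induction hypothesis and the raising estimate, and the commutator term $L_{q-\mu_1+r}L_{\mu'}|\Lambda'\rangle$ has index landing in exactly the regime where either the induction hypothesis (index $>2r$), Lemma~\ref{lem2} (index in $[r,2r]$, which forces $\mu_1\ge q-r$), or the raising estimate (index $<r$) applies; in each case the output sits in $U_{|\mu|-(q-r)}$.

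With these in hand the three classes are immediate. If $\lambda_j>n$ the middle operator has index $<2r$ and is degree-preserving or raising, and a short count gives degree $\le|\lambda|-\lambda_j<|\lambda|-n$. If $\lambda_j<n$ the index $q=n-\lambda_j+2r$ exceeds $2r$, and the strong lowering bound yields degree $\le|\lambda|-n-r<|\lambda|-n$. If $\lambda_j=n$ the index is exactly $2r$; writing $L_{2r}=\widetilde{L}_{2r}+\Lambda_{2r}$ and noting $\widetilde{L}_{2r}L_\mu|\Lambda'\rangle=[L_{2r},L_\mu]|\Lambda'\rangle$ has degree $<|\mu|$ (exactly as in the proof of Lemma~\ref{lem2}), the scalar part contributes $2n\,\Lambda_{2r}L_{\lambda\setminus j}|\Lambda'\rangle$ at degree precisely $|\lambda|-n$, while everything else drops into $U_{|\lambda|-n-1}$. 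Summing over the $k_n$ positions with $\lambda_j=n$, all of which yield the same $L_{\bar\lambda}|\Lambda'\rangle$, produces the stated leading term $2nk_n\Lambda_{2r}L_{\bar\lambda}|\Lambda'\rangle$, and every remaining contribution lies in $U_{|\lambda|-n-1}$, which is the asserted $v$.

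The main obstacle is precisely the strong lowering estimate for $q>2r$. The naive bound coming from Lemma~\ref{lem2} alone shows only that each positive mode lowers the degree by at least $1$, which for $\lambda_j<n$ gives merely $\deg\le|\lambda|-n$ and is too weak to separate those terms from the leading one. Quantifying the drop as exactly $q-r$, equivalently showing that $\widetilde{L}_{r+n}$ lowers degree by at least $n$, is what forces the remainder strictly below $|\lambda|-n$, and establishing it is the technical heart of the argument.
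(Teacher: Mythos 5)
Your proposal is correct and takes essentially the same route as the paper: the same commutator expansion of $\widetilde{L}_{r+n}L_\lambda|\Lambda'\rangle$ and the same three-way case split on the index $n-\lambda_j+2r$ (below $r$, in $[r,2r]$, above $2r$), with the $\lambda_j=n$ terms producing the leading contribution $2nk_n\Lambda_{2r}L_{\bar\lambda}|\Lambda'\rangle$ and all else dropping into $U_{|\lambda|-n-1}$. Your explicitly proved strong lowering estimate (a mode $L_q$ with $q>2r$ lowers degree by at least $q-r$) is exactly the quantitative fact the paper's Case 3 invokes via Lemma \ref{lem2} without spelling it out, so you have simply filled in that step; the only blemish is that in your $\lambda_j>n$ case the stated bound $\le|\lambda|-\lambda_j$ should be $|\lambda|-n-r$ when $\lambda_j>n+r$, which still yields the needed $<|\lambda|-n$.
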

\begin{proof}
Since the left hand side $\widetilde{L}_{r+n}L_\lambda|\Lambda'\rangle$ 
of \eqref{eq_LL_on_L1} is computed as 
\begin{equation*}
\widetilde{L}_{r+n}L_\lambda|\Lambda'\rangle=
\sum_{i=1}^mL_{-m+r}^{k_m}\cdots L_{-(i+1)+r}^{k_{i+1}}
\left[L_{r+n},L_{-i+r}^{k_i}\right]L_{-(i-1)+r}^{k_{i-1}}\cdots L_{-1+r}^{k_1}|\Lambda'\rangle,
\end{equation*}
we examine $\left[L_{r+n},L_{-i+r}\right]=(n+i)L_{-(i-r-n)+r}$. 

 Case 1: $1\le i-r-n$. From Lemma \ref{lem1}, the degree of 
 \begin{equation*}
 L_{-m+r}^{k_m}\cdots L_{-(i+1)+r}^{k_{i+1}}
\left[L_{r+n},L_{-i+r}^{k_i}\right]L_{-(i-1)+r}^{k_{i-1}}\cdots L_{-1+r}^{k_1}|\Lambda'\rangle
 \end{equation*}
 is $|\lambda|-r-n$.

 Case 2:  $-r\le i-r-n\le 0$.  From  
 $L_{-(i-r-n)+r}|\Lambda'\rangle=\Lambda'_{2r+n-i}|\Lambda'\rangle$ 
 and Lemma \ref{lem2}, 
 the degree of 
 \begin{equation*}
 L_{-m+r}^{k_m}\cdots L_{-(i+1)+r}^{k_{i+1}}
\left[L_{r+n},L_{-i+r}^{k_i}\right]L_{-(i-1)+r}\cdots L_{-1+r}^{k_1}|\Lambda'\rangle
 \end{equation*}
 is $|\lambda|-i$.

 Case 3: $i-r-n\le -r-1$.  
 From $L_{-(i-r-n)+r}|\Lambda'\rangle=L_{2r+n-i}|\Lambda'\rangle=0$ and Lemma \ref{lem2}, the degree of 
 \begin{equation*}
 L_{-m+r}^{k_m}\cdots L_{-(i+1)+r}^{k_{i+1}}
\left[L_{r+n},L_{-i+r}^{k_i}\right]L_{-(i-1)+r}^{k_{i-1}}\cdots L_{-1+r}^{k_1}|\Lambda'\rangle
 \end{equation*}
 is less than $|\lambda|-n-1$. 
 
 Thus, the degree is the highest, when $i=n$ in Case 2. 
 Therefore, \eqref{eq_LL_on_L1} holds. 
\end{proof}

The key lemma is the following. 

\begin{lem}\label{key_lemma}
For any $u\in U_m$ ($u=\sum_\lambda a_\lambda 
L_\lambda|\Lambda'\rangle$) and $n=1,\ldots,m$,  
\begin{equation*}
\widetilde{L}_{n+r}u=\sum_\lambda \Lambda_{2r}a_\lambda 
b_\lambda L_{\bar{\lambda}}|\Lambda'\rangle+v,
\end{equation*}
where the sum is  over all partitions $\lambda$ of $m$ that 
have a component $\lambda_i=n$ of $\lambda=(\lambda_1,\ldots, \lambda_k)$ 
and $\bar{\lambda}=(\lambda_1,\ldots, \lambda_{i-1},\lambda_{i+1},\ldots, \lambda_k)$, and $\deg(v)\le m-n-1$, $b_\lambda\in\Z_{\ge 1}$. 
\end{lem}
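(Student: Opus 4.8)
The plan is to derive the key lemma as a straightforward linear upgrade of Lemma \ref{lem3}, using the degree filtration $U_0\subset U_1\subset\cdots$ to separate the leading degree-$(m-n)$ part from lower-order remainders. First I would split $u$ according to degree, writing $u=u_m+u'$, where $u_m=\sum_{|\lambda|=m}a_\lambda L_\lambda|\Lambda'\rangle$ collects the top-degree terms and $u'=\sum_{|\lambda|\le m-1}a_\lambda L_\lambda|\Lambda'\rangle\in U_{m-1}$. Since $\widetilde{L}_{n+r}$ applied to any basis vector $L_\lambda|\Lambda'\rangle$ has degree at most $|\lambda|-n$ (this is exactly the degree bound recorded in Lemma \ref{lem3}, valid for every $n\ge 1$ through the filtration Lemmas \ref{lem1} and \ref{lem2}), the contribution $\widetilde{L}_{n+r}u'$ has degree at most $(m-1)-n=m-n-1$ and is therefore absorbed entirely into the remainder $v$.

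It then remains to analyze $\widetilde{L}_{n+r}u_m=\sum_{|\lambda|=m}a_\lambda\,\widetilde{L}_{n+r}L_\lambda|\Lambda'\rangle$ term by term via Lemma \ref{lem3}. For a partition $\lambda\vdash m$ with multiplicity $k_n$ of the part $n$, Lemma \ref{lem3} gives $\widetilde{L}_{n+r}L_\lambda|\Lambda'\rangle=2nk_n\Lambda_{2r}L_{\bar{\lambda}}|\Lambda'\rangle+(\text{degree}\le m-n-1)$, where $\bar{\lambda}$ is $\lambda$ with one copy of $n$ deleted. When $k_n=0$ the leading coefficient vanishes and the whole contribution falls into $v$; when $k_n\ge 1$ the leading term survives with $b_\lambda=2nk_n\in\mathbb{Z}_{\ge 1}$. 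Summing over the top-degree partitions and keeping only those containing $n$ yields exactly $\sum_\lambda\Lambda_{2r}a_\lambda b_\lambda L_{\bar{\lambda}}|\Lambda'\rangle$, with everything else collected into $v$ of degree at most $m-n-1$.

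One point I would verify carefully is that passing from $L_{n+r}$ to $\widetilde{L}_{n+r}$ does not reintroduce a degree-$m$ term: for $n\ge 1$ we have $n+r>r$, and the subtracted scalar $\Lambda_{n+r}$ is cancelled because $\Lambda'_{n+r}=\Lambda_{n+r}$ (the defect $\delta_{n+r,r}\,r\beta_r$ vanishes for $n\ge 1$), so the top-degree part of $\widetilde{L}_{n+r}L_\lambda|\Lambda'\rangle$ is genuinely empty, as already encoded in Lemma \ref{lem3}. I would also record that $\lambda\mapsto\bar{\lambda}$ is a bijection from the partitions of $m$ containing $n$ onto the partitions of $m-n$, so the vectors $L_{\bar{\lambda}}|\Lambda'\rangle$ appearing in the sum are distinct basis elements and the stated coefficients are unambiguous.

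The main obstacle, such as it is, is purely bookkeeping: one must keep the degree accounting airtight, so that every stray term — both the lower-degree part $\widetilde{L}_{n+r}u'$ and the subleading remainders from each application of Lemma \ref{lem3} — is confirmed to have degree at most $m-n-1$ and hence lands in $v$. There is no new algebraic input beyond Lemma \ref{lem3} together with the filtration Lemmas \ref{lem1} and \ref{lem2}; the entire content lies in organizing the linear combination correctly.
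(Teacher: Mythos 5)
Your proof is correct and takes essentially the same route as the paper, whose entire proof is the single line ``This follows immediately from Lemma \ref{lem3}'': your write-up is exactly the bookkeeping the paper omits --- splitting $u$ into its top-degree part and a remainder in $U_{m-1}$, applying Lemma \ref{lem3} termwise to get $b_\lambda=2nk_n\in\Z_{\ge 1}$, and absorbing every contribution of degree at most $m-n-1$ into $v$. The two checks you flag explicitly (that $\widetilde{L}_{n+r}$ produces no degree-$m$ term because $\Lambda'_{n+r}=\Lambda_{n+r}$ for $n\ge 1$, and that $\lambda\mapsto\bar{\lambda}$ is a bijection onto partitions of $m-n$, so the coefficients are unambiguous) are precisely the implicit steps in the paper's appeal to Lemma \ref{lem3}.
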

 \begin{proof}
 This follows immediately from Lemma \ref{lem3}. 
 \end{proof}

\begin{cor}\label{cor_n_kill}
For any positive integer $n$, if 
$\widetilde{L}_{n+r} u=0$ for  $u\in V^{[r]}_{\Lambda'}$, 
then $u\in U_0$. 
\end{cor}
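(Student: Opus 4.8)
The plan is to argue by contradiction, isolating the top-degree part of $\widetilde{L}_{n+r}u$ and showing that the single fixed operator $\widetilde{L}_{n+r}$ cannot annihilate $u$ unless $u$ already lies in $U_0$. Suppose $u\notin U_0$ and put $m=\deg(u)\ge 1$. Writing $u=\sum_\lambda a_\lambda L_\lambda|\Lambda'\rangle$ in the basis of Theorem 2.4 of \cite{FJK}, its top-degree part $u_m=\sum_{|\lambda|=m}a_\lambda L_\lambda|\Lambda'\rangle$ is nonzero by the definition of $\deg$, so some $a_\lambda$ with $|\lambda|=m$ does not vanish.

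First I would apply Lemma \ref{key_lemma} with this $m$. It gives
\begin{equation*}
\widetilde{L}_{n+r}u=\sum_{\substack{|\lambda|=m\\ n\in\lambda}}\Lambda_{2r}\,a_\lambda\,b_\lambda\,L_{\bar\lambda}|\Lambda'\rangle+v,
\end{equation*}
where $\bar\lambda$ is obtained from $\lambda$ by deleting one part equal to $n$, every $b_\lambda\in\Z_{\ge 1}$, and $\deg(v)\le m-n-1$. By Lemma \ref{lem3} the operator $\widetilde{L}_{n+r}$ lowers the degree by exactly $n$ on a basis vector carrying a part $n$ and by strictly more otherwise, so the displayed sum is precisely the homogeneous degree-$(m-n)$ component of $\widetilde{L}_{n+r}u$, with no contribution from the lower-degree part of $u$.

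The decisive observation, and the feature that lets a single fixed operator suffice, is that the deletion map $\lambda\mapsto\bar\lambda$ is injective on the set of partitions of $m$ that contain a part $n$: its inverse simply reinserts a part $n$, so distinct such $\lambda$ give distinct $\bar\lambda$. By Theorem 2.4 of \cite{FJK} the vectors $L_{\bar\lambda}|\Lambda'\rangle$ are then linearly independent, and since $\Lambda_{2r}\neq 0$ and each $b_\lambda\ge 1$, the degree-$(m-n)$ component vanishes if and only if $a_\lambda=0$ for every partition $\lambda$ of $m$ with $n\in\lambda$. Hence $\widetilde{L}_{n+r}u=0$ already forces the complete cancellation of the $n$-carrying part of $u_m$, with no interference from lower-degree terms; this injectivity is exactly what an argument using the fixed index $n$ must exploit.

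The hard part — and what I expect to be the main obstacle — is the residual top-degree coefficients $a_\lambda$ with $|\lambda|=m$ but $n\notin\lambda$, which the degree-$(m-n)$ leading term cannot detect, since $\widetilde{L}_{n+r}$ sees nothing of such a partition at top order. To finish I would have to rule these out as well, which requires descending into the subleading terms governed by Lemma \ref{lem3} and running a secondary induction on $m$ while tracking how the parameters $\Lambda_j$ with $j<2r$ (and in particular the possibility $\Lambda_{r}=0$) enter the cancellations; this is the delicate point where a genuine non-degeneracy input, beyond $\Lambda_{2r}\neq 0$, may be needed. Once the full top-degree part $u_m$ is shown to vanish, the assumption $\deg(u)=m\ge 1$ is contradicted, and therefore $u\in U_0$.
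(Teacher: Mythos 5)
Your leading-term analysis is correct as far as it goes, and it is exactly the mechanism behind the paper's (unwritten, ``immediate'') proof: by Lemma \ref{key_lemma} the degree-$(m-n)$ component of $\widetilde{L}_{n+r}u$ is $\sum_{|\lambda|=m,\,n\in\lambda}\Lambda_{2r}a_\lambda b_\lambda L_{\bar\lambda}|\Lambda'\rangle$, deletion of a part $n$ is injective, the $L_{\bar\lambda}|\Lambda'\rangle$ are independent by Theorem 2.4 of \cite{FJK}, and $\Lambda_{2r}\neq 0$, $b_\lambda\ge 1$ force $a_\lambda=0$ for all top-degree $\lambda$ containing a part $n$. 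The genuine gap is not the one you flag but a misreading of the quantifier: the corollary's hypothesis, as it is actually used in Step 4 of the existence proof (where the author first establishes $\widetilde{L}_{n+r}u=0$ \emph{for all} $n\ge 1$ and only then invokes the corollary), is that $u$ is annihilated by $\widetilde{L}_{n+r}$ for \emph{every} positive integer $n$ simultaneously. Under your fixed-$n$ reading the statement is simply false, so the program you outline cannot be completed: Lemma \ref{key_lemma} shows $\widetilde{L}_{n+r}$ maps $U_m$ into $U_{m-n}$, so its kernel on $U_m$ has dimension at least $\dim U_m-\dim U_{m-n}=\sum_{i=m-n+1}^{m}p(i)$, which for $n=1$, $m=2$ equals $2>1=\dim U_0$; since a degree-one kernel vector is excluded ($\widetilde{L}_{1+r}L_{-1+r}|\Lambda'\rangle=2\Lambda_{2r}|\Lambda'\rangle\neq 0$), there is a genuine degree-two vector killed by $\widetilde{L}_{1+r}$ alone, and the same count works for every fixed $n$. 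In particular your hope that ``a genuine non-degeneracy input beyond $\Lambda_{2r}\neq 0$'' might rescue the subleading descent is a dead end: the dimension count is independent of the parameters $\Lambda_j$.

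Under the intended all-$n$ hypothesis, the step you completed already finishes the proof, and the ``hard part'' evaporates: let $m=\deg(u)\ge 1$; for each $n\in\{1,\ldots,m\}$ your argument gives $a_\lambda=0$ for every partition $\lambda$ of $m$ with a part equal to $n$, and since every partition of $m\ge 1$ contains at least one part $n$ with $1\le n\le m$, ranging over all such $n$ annihilates the entire top-degree component of $u$, contradicting $\deg(u)=m$. (Note also that $U_0$ does lie in the common kernel, since $\widetilde{L}_{n+r}|\Lambda'\rangle=(\Lambda'_{n+r}-\Lambda_{n+r})|\Lambda'\rangle=0$ for $n\ge 1$, so the statement is sharp.) This one-line conclusion from Lemma \ref{key_lemma} is precisely why the paper states the result as a corollary without proof.
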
 
  
Let  a bilinear pairing 
$\langle \cdot \rangle$: $V^{*,[0]}_{\Delta^*}\times V^{[r]}_{\Lambda'}\to\C$ be 
defined as in Subsection 2.1 by the 
Heisenberg algebra and its Fock space. 
For a partition $\lambda=(\lambda_1,\ldots,
\lambda_k)$ ($\lambda_i\ge \lambda_{i+1}$), 
set $\widetilde{L}_\lambda
=\widetilde{L}_{\lambda_1+r}\cdots
\widetilde{L}_{\lambda_k+r}$. 

\begin{lem}\label{lem_pairing_LL}
For partitions $\lambda$ and $\mu$ 
such that $ |\lambda|\ge |\mu|$, 
\begin{equation*}
\langle \Delta^* |\widetilde{L}_\lambda
L_\mu|\Lambda'\rangle=\left\{\begin{matrix}
0 & (\lambda\neq \mu),
\\
\left(2\Lambda_{2r}\right)^{|\lambda|}
\prod_{i=1}^m i^{k_i}k_i! & 
(\lambda=\mu),
\end{matrix}\right.
\end{equation*}
where $\lambda=(m^{k_m},(m-1)^{k_{m-1}},
\ldots, 2^{k_2},1^{k_1})$ ($k_i\in\Z_{\ge 0}$). 
\end{lem}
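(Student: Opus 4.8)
The plan is to reduce the evaluation of the pairing to the computation of a single vector in $V^{[r]}_{\Lambda'}$ and then to read off its lowest-degree component, so that the full strength of the pairing is never needed beyond the normalization $\langle\Delta^*|\Lambda'\rangle=1$. First I would observe that, by Theorem 2.4 of \cite{FJK}, $L_\mu|\Lambda'\rangle$ is a basis vector of degree exactly $|\mu|$, hence lies in $U_{|\mu|}$. By the key Lemma \ref{key_lemma}, each factor $\widetilde{L}_{\lambda_i+r}$ maps $U_m$ into $U_{m-\lambda_i}$; moreover for $\lambda_i>m$ it annihilates $U_m$, since on a basis vector $L_\nu|\Lambda'\rangle$ with $|\nu|\le m<\lambda_i$ the leading coefficient $2\lambda_i k_{\lambda_i}(\nu)\Lambda_{2r}$ of Lemma \ref{lem3} vanishes (no part of size $\lambda_i$ can occur) and the remainder $v$ has $\deg v<0$, forcing $v=0$. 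Applying the whole product $\widetilde{L}_\lambda=\widetilde{L}_{\lambda_1+r}\cdots\widetilde{L}_{\lambda_k+r}$ therefore gives
\[
\widetilde{L}_\lambda L_\mu|\Lambda'\rangle\in U_{|\mu|-|\lambda|}.
\]
When $|\lambda|>|\mu|$ this space is $\{0\}$, so the pairing vanishes (and indeed $\lambda\ne\mu$); when $|\lambda|=|\mu|$ it is $U_0=\C|\Lambda'\rangle$, so the vector equals $c\,|\Lambda'\rangle$ for a scalar $c$, whence $\langle\Delta^*|\widetilde{L}_\lambda L_\mu|\Lambda'\rangle=c$. Everything thus reduces to computing the coefficient $c$ of $|\Lambda'\rangle$.

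To compute $c$ I would apply the operators of $\widetilde{L}_\lambda$ to $L_\mu|\Lambda'\rangle$ one at a time, starting from the rightmost factor, and keep only the leading term produced by Lemma \ref{lem3} at each step. Discarding the lower-order remainder $v$ (with $\deg v$ at least one below the leading degree) is legitimate by a degree-budget argument: in the case $|\lambda|=|\mu|$ the operators still to be applied can lower the filtration degree by at most their total $\sum\lambda_j$, which is exactly the amount needed to reach $U_0$ along the chain of leading terms, so any single step that drops the degree by strictly more than its $\lambda_i$ makes the total drop exceed $|\mu|$ and sends the end result into $U_{-1}=\{0\}$. By Lemma \ref{lem3} the leading term of $\widetilde{L}_{\lambda_i+r}$ on a basis vector $L_\nu|\Lambda'\rangle$ is $2\lambda_i\,k_{\lambda_i}(\nu)\,\Lambda_{2r}\,L_{\bar\nu}|\Lambda'\rangle$, where $k_{\lambda_i}(\nu)$ is the multiplicity of the part $\lambda_i$ in $\nu$ and $\bar\nu$ deletes one such part; in particular the leading term remains a single basis vector, so the iteration is well defined.

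Finally I would read off the combinatorics of the resulting telescoping product. Starting from $\nu=\mu$, each factor removes one part of size $\lambda_i$, so a nonzero contribution forces the multiset $\{\lambda_1,\dots,\lambda_k\}$ to coincide with the parts of $\mu$, i.e.\ $\lambda=\mu$; otherwise some multiplicity $k_{\lambda_i}(\nu)$ is $0$ and $c=0$. When $\lambda=\mu$, removing the $k_s$ copies of a given size $s$ produces the multiplicity factors $k_s,k_s-1,\dots,1$, contributing $k_s!$, together with $k_s$ factors $2s\Lambda_{2r}$; multiplying over all sizes yields $c=(2\Lambda_{2r})^{|\lambda|}\prod_{i=1}^m i^{k_i}k_i!$, as claimed. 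The main obstacle I anticipate is precisely this last bookkeeping together with the degree-budget justification for retaining only leading terms; once these are in place, the reduction of the first paragraph and the per-step contraction formula make the whole computation a rescaled Heisenberg-type Wick contraction.
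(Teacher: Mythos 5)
Your proof is correct and follows essentially the same route as the paper's: reduce the degree using Lemma~\ref{key_lemma} and Lemma~\ref{lem3}, observe that a nonzero result forces each factor $\widetilde{L}_{\lambda_i+r}$ to remove a matching part of $\mu$ (hence $\lambda=\mu$), and evaluate the surviving leading chain via $[L_{\lambda_i+r},L_{-\lambda_i+r}]=2\lambda_iL_{2r}$. Your degree-budget justification for discarding the remainders $v$ simply makes explicit what the paper leaves implicit, and your annihilation argument for $\lambda_i>m$ is the content of Corollary~\ref{cor_n_kill}.

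One point in your final bookkeeping deserves attention. Your telescoping correctly produces one factor $2s\Lambda_{2r}$ per removed part, i.e.\ the total $(2\Lambda_{2r})^{\sum_i k_i}\prod_i i^{k_i}k_i!$, so your closing identification of the exponent $\sum_i k_i$ (the number of parts of $\lambda$) with $|\lambda|=\sum_i i\,k_i$ is a slip. In fact this exposes a typo in the lemma as printed: for $\lambda=\mu=(2)$ one computes directly $\langle\Delta^*|\widetilde{L}_{2+r}L_{-2+r}|\Lambda'\rangle=4\Lambda_{2r}$, which matches $(2\Lambda_{2r})^{\ell(\lambda)}\prod_i i^{k_i}k_i!$ with $\ell(\lambda)$ the number of parts, not $(2\Lambda_{2r})^{|\lambda|}\prod_i i^{k_i}k_i!=8\Lambda_{2r}^2$. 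The discrepancy is harmless downstream: the only use of the exact value is Corollary~\ref{cor_det}, where all that matters is that the determinant is a nonzero constant times a power of $\Lambda_{2r}$, hence nonvanishing when $\Lambda_{2r}\neq 0$; with the corrected exponent that corollary's power of $\Lambda_{2r}$ becomes $\sum_{\lambda}\ell(\lambda)$ up to a nonzero integer factor, and the existence and uniqueness argument for the rank $0$ vertex operator goes through unchanged.
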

\begin{proof}
From Lemma \ref{key_lemma}, 
 $\widetilde{L}_{\lambda_1+r}\cdots
\widetilde{L}_{\lambda_k+r}$ 
reduces the degree of $L_\mu$ 
to $|\mu|-|\lambda|$. 
Hence, if $ |\lambda|> |\mu|$, then 
$\langle \Delta^* |\widetilde{L}_\lambda
L_\mu|\Lambda'\rangle=0$. Suppose $|\lambda|=|\mu|$. Then,  
 from Lemma \ref{lem3} 
 $\widetilde{L}_{\lambda_i+r}$ reduces  the degree of 
 $L_{-j_1+r}\cdots L_{-j_k+r}|\Lambda'\rangle$ to 
 $j_1+\cdots+j_k-\lambda_i$ if there exists some $j_l$ such that $j_l=\lambda_i$,  
 and to less than $j_1+\cdots+j_k-\lambda_i$ if $j_l\neq \lambda_i$ for any $l$. 
 Hence, $\langle \Delta^* |\widetilde{L}_\lambda
L_\mu|\Lambda'\rangle\neq 0$ if and only if $\lambda=\mu$. 
The commutation relation $[L_{\lambda_i+r},L_{-\lambda_i+r}]=2\lambda_iL_{2r}$ yields the value 
of $\langle \Delta^* |\widetilde{L}_\lambda
L_\lambda|\Lambda'\rangle$. 
\end{proof}

Let a square matrix $(\langle \Delta^* |\widetilde{L}_\lambda
L_\mu|\Lambda'\rangle)_{n\le |\lambda|,|\mu|\le m}$ for $n,m\in\Z_{\ge 0}$ ($n<m$) 
be defined in such a way that  $\langle \Delta^* |\widetilde{L}_\lambda$, $L_\mu|\Lambda'\rangle$
are arranged in  order of increasing with respect to $|\lambda|$, $|\mu|$. 
For example, 
\begin{equation*}
(\langle \Delta^* |\widetilde{L}_\lambda
L_\mu|\Lambda'\rangle)_{1\le |\lambda|,|\mu|\le 2}
=\begin{pmatrix}
\langle \Delta^* |\widetilde{L}_{1+r}
L_{-1+r}|\Lambda'\rangle&\langle \Delta^* |\widetilde{L}_{1+r}
L_{-2+r}|\Lambda'\rangle&
\langle \Delta^* |\widetilde{L}_{1+r}
L_{-1+r}^2|\Lambda'\rangle
\\
\langle \Delta^* |\widetilde{L}_{2+r}
L_{-1+r}|\Lambda'\rangle&
\langle \Delta^* |\widetilde{L}_{2+r}
L_{-2+r}|\Lambda'\rangle
&
\langle \Delta^* |\widetilde{L}_{2+r}
L_{-1+r}^2|\Lambda'\rangle
\\
\langle \Delta^* |\widetilde{L}_{1+r}^2
L_{-1+r}|\Lambda'\rangle
&
\langle \Delta^* |\widetilde{L}_{1+r}^2
L_{-2+r}|\Lambda'\rangle
&
\langle \Delta^* |\widetilde{L}_{1+r}^2
L_{-1+r}^2|\Lambda'\rangle
\end{pmatrix}. 
\end{equation*}

\begin{cor}\label{cor_det}
For any  $n,m\in\Z_{\ge 0}$, 
\begin{equation*}
\det\left( (\langle \Delta^* |\widetilde{L}_\lambda
L_\mu|\Lambda'\rangle)_{n\le |\lambda|,|\mu|\le m} \right)
= 
\Lambda_{2r}^{\sum_{i=n}^mip(i)},
\end{equation*}
where  $p(i)$ is the 
partition number of $i$. 
\end{cor}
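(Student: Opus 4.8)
The plan is to read off the determinant from Lemma \ref{lem_pairing_LL} by arranging the matrix into block-triangular form. First I would order both the row partitions $\lambda$ and the column partitions $\mu$ by increasing degree, as prescribed, so that the matrix $M=(\langle \Delta^*|\widetilde{L}_\lambda L_\mu|\Lambda'\rangle)_{n\le|\lambda|,|\mu|\le m}$ decomposes into blocks indexed by the pairs of degrees $(|\lambda|,|\mu|)$ with $n\le|\lambda|,|\mu|\le m$.

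The key structural point I would establish is that the blocks strictly below the diagonal vanish. Whenever $|\lambda|>|\mu|$ we are in the range $|\lambda|\ge|\mu|$ treated by Lemma \ref{lem_pairing_LL}, and since distinct degrees force $\lambda\neq\mu$, that lemma gives $\langle \Delta^*|\widetilde{L}_\lambda L_\mu|\Lambda'\rangle=0$. Thus $M$ is block upper-triangular and $\det M=\prod_{i=n}^m\det M_i$, where $M_i$ is the diagonal block with $|\lambda|=|\mu|=i$. The same lemma shows each $M_i$ is in fact a diagonal matrix: for $|\lambda|=|\mu|=i$ with $\lambda\neq\mu$ the entry is again $0$, while the diagonal entry at $\lambda=(\ldots,2^{k_2},1^{k_1})$ equals $(2\Lambda_{2r})^{i}\prod_j j^{k_j}k_j!$. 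Because a diagonal block's determinant is the product of its diagonal entries, the internal ordering of the partitions of a fixed degree is irrelevant.

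It then remains to collect powers of $\Lambda_{2r}$. Each partition of $i$ contributes a single diagonal factor carrying $\Lambda_{2r}^{i}$, and there are $p(i)$ partitions of $i$, so $\det M_i$ carries $\Lambda_{2r}^{i\,p(i)}$; taking the product over $i=n,\ldots,m$ yields $\Lambda_{2r}^{\sum_{i=n}^m i\,p(i)}$, which is the asserted exponent. The remaining prefactor $\prod_{i=n}^m\prod_{\lambda\vdash i}\bigl(2^{i}\prod_j j^{k_j}k_j!\bigr)$ is a nonzero constant independent of the parameters, and since the corollary is used only to certify that $M$ is invertible exactly when $\Lambda_{2r}\neq0$, it is precisely this power of $\Lambda_{2r}$ that carries the content.

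I expect no serious obstacle here, as the whole argument is a bookkeeping consequence of Lemma \ref{lem_pairing_LL}; the one step meriting care is verifying that the off-diagonal entries within a single degree block vanish, not merely those strictly below it. This is exactly the $|\lambda|=|\mu|$, $\lambda\neq\mu$ case of that lemma, and it is the only place where its full orthogonality statement, rather than the coarser degree-dropping estimate of Lemma \ref{lem3}, is genuinely needed.
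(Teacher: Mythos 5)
Your proof is correct and follows essentially the same route as the paper's: Lemma \ref{lem_pairing_LL} shows all entries with $|\lambda|\ge|\mu|$, $\lambda\neq\mu$ vanish, so the matrix is upper triangular with diagonal degree blocks, and the determinant is the product of the diagonal entries. Your side remark is also accurate: the paper's stated value omits the nonzero constant prefactor $\prod_{i=n}^{m}\prod_{\lambda\vdash i}\bigl(2^{i}\prod_{j} j^{k_j}k_j!\bigr)$ coming from the diagonal entries $(2\Lambda_{2r})^{|\lambda|}\prod_i i^{k_i}k_i!$, which is harmless since the corollary is only used to certify invertibility when $\Lambda_{2r}\neq 0$.
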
 
 \begin{proof}
  Lemma \ref{lem_pairing_LL} implies that the matrix 
 $(\langle \Delta^* |\widetilde{L}_\lambda
L_\mu|\Lambda'\rangle)_{n\le |\lambda|,|\mu|\le m}$ 
is an upper  triangular matrix. Hence, its determinant 
is the product of all diagonal entries that are also computed 
in Lemma \ref{lem_pairing_LL}.  
 \end{proof}
 
\begin{lem}\label{lem_leftLL} 
For any tuple $(\lambda_1,\ldots,\lambda_k)$ such that $\lambda_i\in\Z_{\ge 1}$ and 
$\sum_{i=1}^k\lambda_i\le m$ ($m\in\Z_{\ge 1}$), 
 and $u\in U_m$,  
\begin{equation*}
\widetilde{L}_{\lambda_1+r}\cdots \widetilde{L}_{\lambda_k+r}u=
\sum_{\mu,\atop |\lambda|\le |\mu|\le m}a_\mu \widetilde{L}_\mu u. 
\end{equation*}
\end{lem}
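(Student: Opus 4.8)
The plan is to sort the unordered product $\widetilde{L}_{\lambda_1+r}\cdots\widetilde{L}_{\lambda_k+r}$ into partition-ordered form $\widetilde{L}_\mu$ by repeatedly applying the Virasoro commutation relations, while carefully tracking the total weight $|\lambda|=\sum_i\lambda_i$. The single computation driving the whole argument is that, because each $\Lambda_n$ is a scalar and $a+b+2r>2r$ for $a,b\ge 1$, one has
\begin{equation*}
[\widetilde{L}_{a+r},\widetilde{L}_{b+r}]=[L_{a+r},L_{b+r}]=(a-b)L_{a+b+2r}=(a-b)\widetilde{L}_{(a+b+r)+r}\qquad(a,b\ge 1).
\end{equation*}
Hence transposing two adjacent factors either reproduces the product in the opposite order, with the same weight $|\lambda|$ but one fewer inversion, or produces a single factor of the same form (an index of the shape $c+r$ with $c=a+b+r\ge 1$) whose total weight has strictly increased, from $|\lambda|$ to $|\lambda|+r$.

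Before starting the induction I would record, from Lemma~\ref{key_lemma}, that $\widetilde{L}_{n+r}$ maps $U_m$ into $U_{m-n}$; iterating, any product $\widetilde{L}_\nu$ with $|\nu|>m$ sends $u\in U_m$ into $U_{m-|\nu|}=0$ and therefore annihilates $u$. This single observation both supplies the upper bound $|\mu|\le m$ in the statement and guarantees that all over-weight terms generated during the sorting may simply be discarded.

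The proof then proceeds by induction on the pair $\bigl(m-|\lambda|,\ \#\{\text{inversions of }(\lambda_1,\ldots,\lambda_k)\}\bigr)$ ordered lexicographically, where an inversion is a pair $i<j$ with $\lambda_i<\lambda_j$ (so that zero inversions means the tuple is already non-increasing, i.e. a partition). If the tuple is a partition $\mu$, the product is literally $\widetilde{L}_\mu u$ with $|\mu|=|\lambda|$ and we are done. Otherwise I would pick an adjacent inversion $\lambda_i<\lambda_{i+1}$ and swap it: the transposed product has the same weight and strictly fewer inversions, so it falls under the secondary induction hypothesis; the commutator term is a product of $k-1$ factors of the same shape whose weight is $|\lambda|+r$, hence has strictly smaller $m-|\lambda|$ and is covered by the primary induction hypothesis, contributing only terms $\widetilde{L}_\mu u$ with $|\mu|\ge |\lambda|+r$. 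In every branch the resulting $\mu$ satisfy $|\mu|\ge|\lambda|$, and any term with $|\mu|>m$ vanishes on $u$ by the preceding paragraph, so the sum ranges exactly over $|\lambda|\le|\mu|\le m$.

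The main obstacle is organizational rather than computational: one must be sure the reordering terminates and that no error term ever drops below the starting weight. Both points are settled by the shift in the commutation relation, which raises the weight by exactly $r$ each time a bracket is taken; together with the vanishing of over-weight products on $U_m$, this confines all weights to the window $[|\lambda|,m]$ and makes the lexicographic induction well-founded.
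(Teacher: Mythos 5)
Your proof is correct and takes essentially the same route as the paper: reorder the product using $[\widetilde{L}_{i+r},\widetilde{L}_{j+r}]=(i-j)\widetilde{L}_{(i+j+r)+r}$, which can only produce terms of weight $\ge|\lambda|$, and then discard all terms with $|\mu|>m$ because Lemma~\ref{key_lemma} (iterated factor by factor) shows they annihilate $u\in U_m$. The only difference is that you make explicit the straightening induction, on the pair $(m-|\lambda|,\ \text{number of inversions})$, which the paper's two-line proof leaves implicit.
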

 \begin{proof}
From the commutation relation 
$[L_{i+r},L_{j+r}]=(i-j)L_{(i+j+r)+r}$ for 
$i,j>0$,  
\begin{equation*}
\widetilde{L}_{\lambda_1+r}\cdots \widetilde{L}_{\lambda_k+r}=\sum_{\mu,\atop 
|\lambda|\le |\mu|}a_\mu \widetilde{L}_\mu.
\end{equation*}
Since $u\in U_m$, 
Lemma \ref{key_lemma} 
 implies 
$\widetilde{L}_\mu u=0$ for $|\mu|>m$. 
 \end{proof}

 \subsubsection{Proof of the existence of $v_m$}
 
 Let us construct $v_m$ in $U_m$. When $m=1$, 
 $v_1=c^{(1)}_{(1)}L_{-1+r}|\Lambda'\rangle+c^{(1)}_\phi|\Lambda'\rangle$. 
From \eqref{L_n action rr1}, 
\begin{align*}
&\widetilde{L}_rv_1=-(r-1)\beta_{r-1}|\Lambda'\rangle,
\\
&\widetilde{L}_{1+r}v_1=-r\beta_{r}|\Lambda'\rangle,
\\
&\widetilde{L}_{n+r}v_1=0\quad (n>1).
\end{align*}
Hence, $c^{(1)}_{(1)}$ and $\beta_{r-1}$ are solved as 
\begin{equation*}
c^{(1)}_{(1)}=-\frac{r\beta_r}{2\Lambda_{2r}},\quad 
\beta_{r-1}=\frac{r\beta_r\Lambda_{2r-1}}{2(r-1)\Lambda_{2r}}. 
\end{equation*}
Note that 
$v_1$ satisfies the relation \eqref{L_n action rr1}, 
even though  
$c^{(1)}_\phi$ is not determined.

Suppose that $v_i\in U_i$ ($1\le i\le k\le  r-1$) satisfy the relation 
\eqref{L_n action rr1},  and the coefficients 
$c_\lambda^{(i)}$
 in 
 $v_i=\sum_{\lambda}c_\lambda^{(i)}L_\lambda|\Lambda'\rangle$  
and $\beta_{r-i}$ 
for $i=1,\ldots,k$ 
are determined as   
polynomials in 
$\Delta$, $\beta_r$, 
$\Lambda_{r}$,\ldots, $\Lambda_{2r}$,  
 $c_\phi^{(1)}$,\ldots, 
$c_\phi^{(k-1)}$. Then, 
$v_{k+1}=\sum_{\lambda}c_\lambda^{(k+1)}L_\lambda|\Lambda'\rangle$ is constructed as follows. 

{\bf Step 1.} We compute $\langle \Delta^* |\widetilde{L}_\lambda v_{k+1}$
by 
\begin{equation*}
\left(\langle \Delta^* |\widetilde{L}_\lambda v_{k+1}\right)_{2\le |\lambda|\le k+1}
=\left(\langle \Delta^* |\widetilde{L}_\lambda
L_\mu|\Lambda'\rangle\right)_{2\le |\lambda|,|\mu|\le k+1} 
\cdot \left(c^{(k+1)}_\mu\right)_{2\le |\mu|\le k+1}. 
\end{equation*}
The left-hand side is expressed by $\Lambda_{r}$,\ldots, $\Lambda_{2r}$, 
$\Lambda_{2r}^{-1}$, $\beta_r$, 
 $c_\phi^{(1)}$,\ldots, 
$c_\phi^{(k-1)}$ from \eqref{L_n action rr1}. Because of Corollary \ref{cor_det}, 
the coefficients $c_\lambda^{(k+1)}$ ($2\le |\lambda|\le k+1$) are 
 uniquely solved  as polynomials in $\Lambda_{r}$,\ldots, $\Lambda_{2r}$, $\Lambda_{2r}^{-1}$, 
$\beta_r$, 
 $c_\phi^{(1)}$,\ldots, 
$c_\phi^{(k-1)}$. Note that 
the coefficients $c_\lambda^{(k+1)}$ ($2\le |\lambda|\le k+1$) do not depend on 
$\Delta^*$.  

{\bf Step 2.} 
We show that the relation \eqref{L_n action rr1} for $n=1$ is true except for the constant term
 and it is true for $n\ge 2$. 
 Denote the right-hand side 
of \eqref{L_n action rr1} by $X_{n,m}$. 
 From Step 1 and Lemma \ref{lem_leftLL}, 
\begin{equation}\label{eq_step2}
\langle \Delta^* |\widetilde{L}_\lambda\left( \widetilde{L}_{n+r}v_{k+1}-X_{n,k+1}\right)=0
\end{equation}
for $1\le |\lambda|\le k$ when $n=1$ and $0\le |\lambda|\le k+1-n$ when $n>1$. 
Since   
$ \widetilde{L}_{n+r}v_{k+1}-X_{n,k+1}$ is an element of $U_{k+1-n}$ as a result of \ref{key_lemma}, let 
$ \widetilde{L}_{n+r}v_{k+1}-X_{n,k+1}=\sum_{\mu}a^{(n)}_\mu L_\mu|\Lambda'\rangle$ 
where the sum is over all $\mu$ such that $0\le |\mu|\le k+1-n$. Then, from Lemma \ref{key_lemma}, 
we obtain 
\begin{equation*}
\left(\langle \Delta^* |\widetilde{L}_\lambda\left( \widetilde{L}_{n+r}v_{k+1}-X_{n,k+1}\right)
\right)_{\delta_{n,1}\le |\lambda|\le k+1-n}=
\left(\langle \Delta^* |\widetilde{L}_\lambda L_\mu|\Lambda'\rangle\right)
_{\delta_{n,1}\le |\lambda|,|\mu|\le k+1-n} 
\cdot \left(a^{(n)}_{\mu}\right)_{\delta_{n,1}\le |\mu|\le k+1-n}. 
\end{equation*}
Hence, by \eqref{eq_step2} and Corollary \ref{cor_det}, 
for $1\le |\mu|\le k$ when $n=1$ 
and $0\le |\mu|\le k+1-n$ when $n>1$,
$a^{(n)}_{\mu}$   are equal to zero. Therefore, the relation \eqref{L_n action rr1} for $n=1$ is true except for the constant term
 and  is also true for $n\ge 2$. 
  It is easy to see that 
 $\widetilde{L}_{n+r}v_{k+1}=0$ 
 for $n>k+1$ 
from Lemma \ref{key_lemma}, 
 because $v_{k+1}\in U_{k+1}$. 

{\bf Step 3.} We determine $c_{(1)}^{(k+1)}$ and $\beta_{r-k-1}$ by looking at 
the constant terms of the relation \eqref{L_n action rr1} for $n=0$ and $n=1$. 
They are equivalent to 
\begin{align*}
\Lambda_{2r-1}c^{(k+1)}_{(1)}+(r-k+1)\beta_{r-k-1}=Y_{0,k+1},
\\
2\Lambda_{2r}c^{(k+1)}_{(1)}=Y_{1,k+1},
\end{align*}
where $Y_{0,k+1}$ and $Y_{1,k+1}$ are polynomials in $\Lambda_{r}$,\ldots, $\Lambda_{2r}$, $\Lambda_{2r}^{-1}$, $\beta_r$, $\Delta$, 
 $c_\phi^{(1)}$,\ldots, 
$c_\phi^{(k)}$. 
Thus, since $\Lambda_{2r}\neq 0$, $c^{(k+1)}_{(1)}$ and $\beta_{r-k-1}$ are 
solved as polynomials in $\Lambda_{r}$,\ldots, $\Lambda_{2r}$, $\Lambda_{2r}^{-1}$, $\beta_r$, 
$\Delta$, 
 $c_\phi^{(1)}$,\ldots, 
$c_\phi^{(k)}$. Thus, we have proved that the relation \eqref{L_n action rr1}  is true for $n=1$.

{\bf Step 4.} Finally, we show that 
the relation \eqref{L_n action rr1} is true for $n=0$. 
Set $\widetilde{L}_{r}v_{k+1}-X_{0,k+1}=u\in U_k$.  
From 
the relation \eqref{L_n action rr1} for $n\ge 1$ 
which was proved in Steps 2 and 3,  
and
 $[\widetilde{L}_{n+r}, 
\widetilde{L}_r]=n\widetilde{L}_{n+2r}$ 
for $n\ge 1$, we obtain $\widetilde{L}_{n+r}u=0$ 
for any $n\ge 1$. Hence, from 
Corollary \ref{cor_n_kill}, $u$ is 
an element in $U_0$. In Step 3, we 
determined $c_{(1)}^{(k+1)}$ and $\beta_{r-k-1}$, 
hence the 
constant term of 
$\widetilde{L}_{n+r}v_{k+1}-X_{n,k+1}$ 
is equal to zero. 
Therefore, $u=0$. 

\medskip

In the same way, $v_m$ ($m\ge 1$) is uniquely constructed.   
When $m=r$, instead of $\beta_i$, 
the parameter $\alpha$ is  uniquely determined  and 
when $m>r$, $c_\phi^{(m-r)}$ is uniquely determined. 
Therefore, we have proved the existence 
of the rank $0$ vertex operator  
$\Phi^{ \Delta}_{\Lambda,\Lambda'}(z)$: 
$V^{[r]}_{\Lambda}\to V^{[r]}_{\Lambda'}$ 
such that 
\begin{equation*}
\Lambda'_n=\Lambda_n-\delta_{n,r}r\beta_r \quad (n=r,\ldots,2r), 
\end{equation*}
$v_m\in U_m$, and 
$\alpha$, $\beta_n$ for $n=1,\ldots, r-1$ 
and the coefficients $c_\lambda^{(m)}$ of the vectors $
L_\lambda|\Lambda'\rangle$ in $v_m$ are  
polynomials in 
$\Delta$, $\beta_r$, $\Lambda_r$,\ldots, 
$\Lambda_{2r}$, $\Lambda_{2r}^{-1}$. \qed

\subsubsection{Proof of the uniqueness of $v_m$}

In the proof of the existence of $v_m$, 
we proved that 
if we suppose $v_m\in U_m$, then 
$v_m$ is uniquely determined by 
the parameters $\Delta$, $\beta_r$, $\Lambda_r$,\ldots, 
$\Lambda_{2r}$. We need to show that 
if an element $v_m \in V^{[r]}_{\Lambda'}$ 
satisfies the relation \eqref{L_n action rr1} 
and elements $v_k\in U_k$ for $k=1,\ldots,m-1$ satisfy the relation \eqref{L_n action rr1}, 
then $v_m$ is also an element in $U_m$. 

Suppose $v_m\in U_k$ ($k>m$) and 
$v_i\in U_i$ for $i=1,\ldots,m-1$. 
From Lemma \ref{key_lemma}, for any 
positive integer $n$,  
\begin{equation*}
\widetilde{L}_{n+r}v_m=\sum_\lambda \Lambda_{2r}a_\lambda 
b_\lambda L_{\bar{\lambda}}|\Lambda'\rangle+v,
\end{equation*}
where the sum is over all partitions $\lambda$ of $k$ that 
have a component $\lambda_i=n$ of $\lambda=(\lambda_1,\ldots, \lambda_k)$ 
and $\bar{\lambda}=(\lambda_1,\ldots, \lambda_{i-1},\lambda_{i+1},\ldots, \lambda_k)$, and $\deg(v)\le k-n-1$, $b_\lambda\in\Z_{\ge 1}$.
On the other hand, from 
$\widetilde{L}_{n+r}v_m=X_{n,m}$ and the  assumption, 
$\widetilde{L}_{n+r}v_m\in U_{m-n}$. 
Hence, since $k>m$, we obtain $a_\lambda=0$ 
for $\lambda=(\lambda_1,\ldots, \lambda_k)$ 
such that $\lambda_i=n$ for some $i$. Therefore, 
$v_m\in U_{k-1}$. \qed

\section{Irregular conformal blocks}

\subsection{Definition}

At the present stage, we can define irregular conformal 
blocks with at most two irregular singular points. 
It is convenient to  consider zero and infinity as irregular singular points. Then,  an 
irregular conformal block is defined as
\begin{equation}
\Phi(\Delta,\Delta',\Lambda,\Lambda',z,w)=
\left(\langle \Lambda_{\infty}|\Phi^{*, \Delta'_{1}}_{\Lambda'_{\infty},\Lambda'_{1}}(w_{1})\circ \cdots \circ 
\Phi^{* \Delta'_{N}}_{\Lambda'_{N-1},\Lambda'_{N}}(w_N)\right)\cdot\left( 
\Phi^{ \Delta_M}_{\Lambda_{M-1},\Lambda_M}(z_M)\circ \cdots \circ 
\Phi^{ \Delta_{1}}_{\Lambda_{1},\Lambda_{0}}(z_{1})|\Lambda_0\rangle\right), 
\end{equation}
by the bilinear pairing $\langle \cdot \rangle$: $V^{*,[s]}_{\Lambda'_N}\times V^{[r]}_{\Lambda_M}\to\C$.
Here, $\Phi^{ \Delta'_{k}}_{\Lambda'_{k-1},\Lambda'_{k}}(w_{k})$ and 
$\Phi^{\Delta_k}_{\Lambda_{k-1},\Lambda_{k}}(z_{k})$ are rank $0$ vertex operators: 
\begin{equation*}
\Phi^{ \Delta'_{k}}_{\Lambda'_{k-1},\Lambda'_{k}}(w_{k}): 
\ V^{*,[s]}_{\Lambda'_{k-1}}\to V^{*,[s]}_{\Lambda'_k}\quad 
(k=1,\ldots,N),\quad 
\Phi^{\Delta_k}_{\Lambda_{k-1},\Lambda_{k}}(z_{k}): 
\ V^{[r]}_{\Lambda_{k-1}}\to V^{[r]}_{\Lambda_k}\quad 
(k=1,\ldots,M),  
\end{equation*}
with $\Lambda'_0=\Lambda'_\infty$.  
From Theorem \ref{thm_VO_rr}
 and the definition of the  bilinear pairing
 by the Fock spaces, 
the number of free parameters of 
the irregular conformal block 
$\Phi(\Delta,\Delta',\Lambda,\Lambda',z,w)$ 
is actually $2M+2N+r+s+1$. 

The irregular conformal block $\Phi(\Delta,\Delta',\Lambda,\Lambda',z,w)$ 
is expected to be a divergent series with respect to both the $z$ and $w$ variables if $s,r>0$. 
If $s=0$ and $r>0$, then $\Phi(\Delta,\Delta',\Lambda,\Lambda',z,w)$ is expected to be a 
convergent series with respect to the $w$ variables and a divergent series with  respect to the $z$ variables. 
 In addition, if $s=r=0$, then a regular conformal block
$\Phi(\Delta,\Delta',\Lambda,\Lambda',z,w)$ is believed to be absolutely convergent in the domain 
$|z_1|<\cdots<|z_M|<|w_1|<\cdots<|w_N|$.

\subsection{Null vector condition}

In the case of regular singularities, if one of the vertex operators 
satisfies   
the null vector condition
\begin{equation*}
\frac{\partial^2}{\partial z^2}\Phi^{\Delta_{2,1}}_{\Delta_1,\Delta_3}(z)+b^2 :T(z)
\Phi^{\Delta_{2,1}}_{\Delta_1,\Delta_3}(z):=0,
\end{equation*}
where $c=1+6(b+1/b)^2$ and $\Delta_{2,1}=-(3b^2+2)/4$, 
then the conformal block is a solution to the BPZ equation 
mentioned in Section 1. 
From the null vector condition, we know that if we set $\Delta_1=Q^2/4-P^2$ with $Q=b+1/b$, then 
$\Delta_3$ must be  $Q^2/4-(P\pm b/2)^2$. 

In the case of irregular singularities, 
we have  the following proposition. 
\begin{prop}
Let a rank $0$ vertex operator $\Phi^{ \Delta}_{\Lambda,\Lambda'}(z)$: 
$V^{[r]}_{\Lambda}\to V^{[r]}_{\Lambda'}$ 
such that 
\begin{equation*}\Phi^{ \Delta}_{\Lambda,\Lambda'}(z)\Lam=z^\alpha \exp\left(\sum_{i=1}^r\beta_i z^{-i}\right)\left(|\Lambda'\rangle+O(z)\right),
\end{equation*}
satisfy the null vector condition
\begin{equation}\label{eq-Null-Vect-cond}
\frac{\partial^2}{\partial z^2}\Phi^{\Delta}_{\Lambda,\Lambda'}(z)+b^2 :T(z)
\Phi^{\Delta}_{\Lambda,\Lambda'}(z):=0. 
\end{equation}
Then, 
\begin{align}
&-b^2\Lambda_{n}=\sum_{i=n-r}^ri(n-i)\beta_i\beta_{n-i}
\quad (r+1\le n\le 2 r),\label{eq_NullVec1}
\\
&-b^2 \Lambda_r=r\beta_r\left((r+1)(b^2+1)-2\alpha\right)
+\sum_{i=1}^{r-1}i(r-i)\beta_i\beta_{r-i}. 
\label{eq_NullVec2}
\end{align}
\end{prop}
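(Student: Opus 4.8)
The plan is to apply the null vector condition \eqref{eq-Null-Vect-cond} to the highest weight vector $\Lam$ and to read off \eqref{eq_NullVec1} and \eqref{eq_NullVec2} by matching, in the direction of the leading vector $|\Lambda'\rangle$, the coefficients of the singular powers $z^{-n-2}$ for $r\le n\le 2r$. Writing $\Phi^{\Delta}_{\Lambda,\Lambda'}(z)\Lam=f(z)W(z)$ with $f(z)=z^{\alpha}\exp\!\left(\sum_{i=1}^{r}\beta_i z^{-i}\right)$ and $W(z)=\sum_{m\ge 0}v_m z^m$, $v_0=|\Lambda'\rangle$, the first term of \eqref{eq-Null-Vect-cond} contributes, after dividing by $f$, the combination $f''/f=P'+P^2$ with $P=f'/f=\alpha z^{-1}-\sum_{i=1}^{r}i\beta_i z^{-i-1}$, together with the lower-order pieces $2PW'+W''$. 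A direct expansion of $P^2$ produces exactly the sums $\sum_i i(n-i)\beta_i\beta_{n-i}$ on the right-hand sides; in particular the $z^{-r-2}$ coefficient of $P'$ supplies the term $r(r+1)\beta_r$, the cross terms of $P^2$ supply $-2\alpha r\beta_r$, and the purely quadratic part of $P^2$ supplies $\sum_{i=1}^{r-1}i(r-i)\beta_i\beta_{r-i}$.

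For the second term I would fix the mode normal ordering $:T(z)\Phi(z):=\sum_{k\le -2}L_k z^{-k-2}\Phi(z)+\sum_{k\ge -1}z^{-k-2}\Phi(z)L_k$, which subtracts precisely the singular part of the operator product expansion. Applying this to $\Lam$ and using $L_k\Lam=\Lambda_k\Lam$ for $r\le k\le 2r$ and $L_k\Lam=0$ for $k>2r$ splits $:T(z)\Phi(z):\Lam$ into a regular piece $\sum_{k\le -2}L_k z^{-k-2}\Phi(z)\Lam$, an eigenvalue piece $\sum_{k=r}^{2r}\Lambda_k z^{-k-2}\Phi(z)\Lam$ carrying the constants $\Lambda_n$, and a descendant piece $\sum_{k=-1}^{r-1}z^{-k-2}\Phi(z)L_k\Lam$. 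The descendant terms I would rewrite with the commutation relation \eqref{comrel_rank0} as $\Phi(z)L_k\Lam=L_k\Phi(z)\Lam-z^{k+1}\partial_z\!\left(\Phi(z)\Lam\right)-(k+1)\Delta z^k\Phi(z)\Lam$, so that each is again of the shape $f(z)\times(\text{Laurent series})$ and can be divided by $f$.

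I would then match the coefficient of $z^{-n-2}$ in the $|\Lambda'\rangle$ direction by downward induction on $n$ from $2r$ to $r$. For $r+1\le n\le 2r$ the contributions of $2PW'$, of $W''$, and of the entire descendant piece are too regular to reach order $z^{-n-2}$ (the descendant piece is most singular at $z^{-r-2}$, since the prefactor $z^{-k-2}$ meets the leading $z^{k-r}$ of $-z^{k+1}\partial_z(\Phi\Lam)/f$); hence only $(f''/f)W$ and the eigenvalue piece survive. The corrections coming from $v_m$ with $m\ge 1$, whose $|\Lambda'\rangle$-components I denote $c^{(m)}_\phi$, enter the two surviving terms as $\sum_{m\ge 1}\big([z^{-(n+m)-2}](f''/f)\big)c^{(m)}_\phi$ and $b^2\sum_{m\ge 1}\Lambda_{n+m}c^{(m)}_\phi$; these cancel exactly because the inductive hypothesis \eqref{eq_NullVec1} at the higher indices gives $[z^{-(n+m)-2}](f''/f)=-b^2\Lambda_{n+m}$. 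What remains is the clean identity $-b^2\Lambda_n=\sum_{i=n-r}^{r}i(n-i)\beta_i\beta_{n-i}$. At the bottom level $n=r$ the descendant piece does contribute: extracting the $z^{-r-2}$ coefficient shows that each of the $r+1$ indices $k=-1,0,\dots,r-1$ yields $r\beta_r$, for a total of $r(r+1)\beta_r$, which is precisely what promotes the $r(r+1)\beta_r$ already present from $P'$ to the factor $(r+1)(b^2+1)$ in \eqref{eq_NullVec2}.

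The main obstacle is the descendant piece at $n=r$: one must compute $\Phi(z)L_k\Lam$ through \eqref{comrel_rank0} and extract its most singular coefficient correctly, keeping track of the term $-z^{k+1}\partial_z(\Phi(z)\Lam)$ whose leading behaviour $r\beta_r z^{k-r}$ is exactly what produces the extra $b^2 r(r+1)\beta_r$. A secondary but essential point is to confirm that the $v_m$ ($m\ge 1$) corrections truly cancel at every order; this is where the downward induction is indispensable, since the cancellation at level $n$ relies on the identities already established at levels $n+1,\dots,2r$. Settling the normal-ordering convention so that the subtracted singular terms reproduce the intended $:T(z)\Phi(z):$ is the prerequisite that makes the whole bookkeeping consistent.
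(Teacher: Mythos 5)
Your proposal is correct and is precisely the ``straightforward computation'' that the paper's one-line proof leaves to the reader: applying the null vector condition to $\Lam$, splitting $:T(z)\Phi(z):\Lam$ into regular, eigenvalue, and descendant pieces, and matching coefficients of $z^{-n-2}$ in the $|\Lambda'\rangle$ direction. You also correctly handle the two points the paper glosses over --- the descendant piece contributing $b^{2}r(r+1)\beta_r$ at $n=r$ via $-z^{k+1}\partial_z(\Phi(z)\Lam)$, and the cancellation of the $c^{(m)}_\phi$ corrections by downward induction from $n=2r$.
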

\begin{proof}
A straightforward computation of 
\begin{equation*}
\frac{\partial^2}{\partial z^2}\Phi^{\Delta}_{\Lambda,\Lambda'}(z)
\Lam+  b^2:T(z)
\Phi^{\Delta}_{\Lambda,\Lambda'}(z):\Lam
\end{equation*}
yields the relations 
\eqref{eq_NullVec1} and 
\eqref{eq_NullVec2}. 
\end{proof}

Recall that due to Theorem \ref{thm_VO_rr}, 
$\alpha$ and $\beta_i$ ($i=1,\ldots,r-1$) 
are solved by $\Lambda_n$ ($n=r,\ldots,2r$),
 $\beta_r$ and $\Delta$. 
%From examples, 
% one can observe that 
%the relations \eqref{eq_NullVec1} and 
%\eqref{eq_NullVec2} are 
% compatible with the definition of 
% the vertex operators. 
From the example, 
 one can observe that
the relation 
\eqref{eq_NullVec2} implies that 
the conformal dimension $\Delta$ is equal 
to $\Delta_{2,1}$.   
Therefore, together with the 
 relation 
 $-b^2\Lambda_{2r}=r^2\beta_r^2$,
   the null vector 
 condition of a rank $0$ vertex operator 
 provides the condition to 
 the two parameters $\Delta$ and $\beta_r$.

 Based on the null vector condition, 
 let us give an example of irregular conformal 
 blocks satisfying BPZ-type  differential equations. 
 In the next subsection, we 
 explain the irregular conformal block for the 
 case of the confluent hypergeometric 
 equation, that is, Kummer's equation. 
  % and in Appendix B, we give  several 
 %examples of irregular conformal blocks 
 %with null vector condition.  
The differential equations 
satisfied by irregular conformal blocks 
have been  already presented in \cite{AFKMY 2010} and  \cite{GT}. 
Also refer to \cite{Nagoya Yamada}, where quantization of 
the  
Lax equations of the Painlev\'e equations 
were derived as partial differential 
systems satisfied by irregular conformal 
blocks.

\subsubsection{Kummer}\label{sec_Kummer}

The first example is the 
irregular conformal blocks having  
one irregular singular point $z_1$ and 
two regular singular points $z_2$, $z_3$ with one 
null vector condition.  
Keeping in mind that $z_1$, $z_3$ will be set to  
$0$, $\infty$, respectively, 
we express them as 
\begin{equation}\label{eq-BPZK-reg}
\left(\langle 0|
\Phi^{*,\theta_\infty^2}_{0,\theta_\infty^2}(z_3)
\Phi^{*,1/4}_{\theta_\infty^2,
(\theta_\infty\pm 1/2)^2}(z_2)\right)\cdot 
\left(
\Phi^{\lambda,[1]}_{0,\Lambda}(z_1)|0\rangle
\right)
\end{equation}
or 
\begin{equation}\label{eq-BPZK-irr}
\left(\langle 0|
\Phi^{*,\theta_\infty^2}_{0,\theta_\infty^2}(z_3)\right)\cdot
\left(\Phi^{1/4}_{\Lambda,\Lambda^\pm}(z_2)
\Phi^{\lambda,[1]}_{0,\Lambda}(z_1)|0\rangle
\right),  
\end{equation}
where we set the central charge $c=1$,  
$\lambda=(\lambda_0,\lambda_1)$, 
$\Lambda=(\Lambda_1,\Lambda_2)$, and 
$\Lambda^\pm=(\Lambda_1\pm \Lambda_2^{1/2},\Lambda_2)$.

From the null vector condition 
\eqref{eq-Null-Vect-cond}, and the 
commutation relations \eqref{comrel_rankr} and
\eqref{comrel_rank0}, 
these two irregular conformal blocks 
satisfy the confluent BPZ equation 
\begin{equation*}
\left(\frac{\partial^2}{\partial z_2^2}
-\frac{1}{2}\left(
\sum_{i=0}^1\frac{\lambda_i}{(z_2-z_1)^{i+1}}\right)^2-\frac{1}{z_2-z_1}
\frac{\partial}{\partial z_1}
-\frac{\lambda_1}{(z_2-z_1)^2}\frac{\partial}{\partial \lambda_1}
-\frac{1}{z_2-z_3}
\frac{\partial}{\partial z_3}
-\frac{\theta_\infty^2}{(z_2-z_3)^2}
\right)\Psi(z,\lambda)=0, 
\end{equation*}
 and from the 
$\mathfrak{sl}_2$ invariance 
$L_n|0\rangle=\langle 0 |L_n=0$ for $n=0,\pm 1$, 
they also satisfy 
\begin{align*}
&\sum_{i=1}^N\frac{\partial}{\partial z_i}\Psi(z,\lambda)=0,
\\
&\sum_{i=1}^N\left(z_i\frac{\partial}{\partial z_i}
+\sum_{p=1}^{r_i}p\lambda^{(i)}_p\frac{\partial}{\partial \lambda^{(i)}_p}
+\frac{\lambda^{(i)}_0}{2}(\lambda^{(i)}_0-2\rho)\right)\Psi(z,\lambda)=0,
\\
&\sum_{i=1}^N\left(z_i^2\frac{\partial}{\partial z_i}
+2z_i\sum_{p=1}^{r_i}p\lambda^{(i)}_p\frac{\partial}{\partial \lambda^{(i)}_p}
+\sum_{p=1}^{r_i-1}p\lambda^{(i)}_{p+1}\frac{\partial}{\partial \lambda^{(i)}_p}
+(\lambda^{(i)}_0z_i+\lambda^{(i)}_1)(\lambda^{(i)}_0-2\rho)\right)\Psi(z,\lambda)=0, 
\end{align*} 
where $N=3$, $r_1=1$, $r_2=r_3=0$, 
$\lambda_i^{(1)}=\lambda_i$ ($i=0,1$), $\lambda_i^{(2)}=0$ ($i=0,1$), 
$\lambda_0^{(3)}=\sqrt{2}\theta_\infty$, 
 $\lambda_1^{(3)}=0$, $\rho=0$. 

Together with these four equations,  
the two irregular conformal blocks 
become solutions to the ordinary 
differential equation with respect to $z_2$. 
We set $z_1=0$, $z_2=1/x$, 
$z_3=\infty$ and by scaling the variables $x$, 
we can set $\Lambda_2=1/4$ so that 
$\lambda_1=1/\sqrt{2}$. 
Furthermore, set 
\begin{equation*}
\lambda_0=\frac{2\alpha-\gamma}{\sqrt{2}},
\quad \theta_\infty=\frac{\gamma-1}{2}. 
\end{equation*}
Then, the ordinary differential equation
is transformed to Kummer's confluent 
hypergeometric equation
\begin{equation}\label{eq-Kummer}
\left(\frac{d^2}{d x^2}+
(\gamma-x)\frac{d}{dx}-\alpha\right)F(x,\alpha,\gamma)=0. 
\end{equation}
Here, $F(x,\alpha,\gamma)=g(x)\Psi(1/x,\alpha,\gamma)$ with $g(x)=x^{-\gamma/2}
\exp(x/2)$. 
Therefore, since  
\begin{equation*}
\lim_{z_3\to \infty}z_3^{2\theta_\infty^2}
\langle 0|
\Phi^{*,\theta_\infty^2}_{0,\theta_\infty^2}(z_3)=\langle \theta_\infty^2 |,\quad 
\lim_{z_1\to 0}
\Phi^{\lambda,[1]}_{0,\Lambda}(z_1)|0\rangle
=|(\Lambda_1,\Lambda_2)\rangle,
\end{equation*}
the irregular conformal blocks multiplied 
by the gauge factor $g(x)$: 
\begin{equation*}
g(x)\left(\langle \theta_\infty^2 |
\Phi^{*,1/4}_{\theta_\infty^2,
(\theta_\infty\pm 1/2)^2}(1/x)\right)\cdot 
|(\lambda_0/\sqrt{2},1/4)\rangle,\quad 
g(x)\langle \theta_\infty^2 |\cdot 
\left(\Phi^{1/4}_{(\lambda_0/\sqrt{2},1/4),(\lambda_0/\sqrt{2}\pm 1/2,1/4)}
|(\lambda_0/\sqrt{2},1/4)\rangle\right)
\end{equation*}
are solutions to Kummer's confluent 
hypergeometric equation. 
%Without $g(x)$, they 
%are represented graphically as
%\begin{equation*}
%\begin{tikzpicture}
%\draw (7,0) node[left] {$ \theta_\infty^2   $};
%\draw (11,0) node[above] {$\displaystyle\left(\theta_\infty\pm\frac{1}{2}\right)^2$};
%\draw[very thick,->](7,0)--(13,0) ;
%\draw (13,0) node[right]{$\displaystyle\left(\frac{\lambda_0}{\sqrt{2}},\frac{1}{4}\right)$,};
%\draw[very thick,red] (9,0)--(9,3/2) node[above]{$\displaystyle\frac{1}{4}$};
%\draw[very thick](9,0) node[below] {$(1/x)$};
%\node at (11,-0.02) {$\bullet$};
%\end{tikzpicture}
%\end{equation*}
%\begin{equation*}
%\begin{tikzpicture}
%%\draw (0,0) node[left] {$ \theta_\infty^2  $};
%%\draw[very thick,->](0,0)--(3,0) node[right]{$(\lambda_0/\sqrt{2},1/4)$,};
%%\draw[very thick,red] (3/2,0)--(3/2,3/2) node[above]{$\frac{1}{4}$};
%%\draw[very thick](3/2,0) node[below] {$(1/x)$};
%\draw (7,0) node[left] {$ \theta_\infty^2   $};
%\draw (9,0) node[above] {$\displaystyle\left(\frac{\lambda_0}{\sqrt{2}}\pm\frac{1}{2},\frac{1}{4}\right)$};
%\draw[very thick,<-](7,0)--(13,0) ;
%\draw (13,0) node[right]{$\displaystyle\left(\frac{\lambda_0}{\sqrt{2}},\frac{1}{4}\right),$};
%\draw[very thick,red] (23/2,0)--(23/2,3/2) node[above]{$\displaystyle\frac{1}{4}$};
%\draw[very thick](23/2,0) node[below] {$(1/x)$};
%\node at (9,-0.02) {$\bullet$};
%\end{tikzpicture}
%\end{equation*}
%respectively. 

Let us examine these irregular conformal 
blocks.  From the definition, 
\begin{align*}
&\left(\langle \theta_\infty^2 |
\Phi^{*,1/4}_{\theta_\infty^2,
(\theta_\infty\pm 1/2)^2}(1/x)\right)\cdot 
|(\lambda_0/\sqrt{2},1/4)\rangle
=x^{\alpha'}\sum_{i=0}^\infty
A_ix^i,
\\
&\langle \theta_\infty^2 |\cdot 
\left(\Phi^{1/4}_{(\lambda_0/\sqrt{2},1/4),(\lambda_0/\sqrt{2}\pm 1/2,1/4)}
|(\lambda_0/\sqrt{2},1/4)\rangle\right)
=x^{\alpha} e^{\beta x}\sum_{i=0}^\infty
B_ix^{-i}, 
\end{align*} 
with $A_0=B_0=1$. 
Computing $\alpha'$, $\alpha$, $\beta$
and a few terms of $A_i$, $B_i$  
(see Appendix) yields 
\begin{align*}
&\left(\langle \theta_\infty^2 |
\Phi^{*,1/4}_{\theta_\infty^2,
(\theta_\infty\pm 1/2)^2}(1/x)\right)\cdot 
|(\lambda_0/\sqrt{2},1/4)\rangle=
x^{\pm\theta_\infty+1/2}e^{-x/2}
F\left(\begin{matrix}
\Lambda_1\pm\theta_\infty+\frac{1}{2}
\\
\pm 2\theta_\infty+1
\end{matrix};x\right),
\\
&\langle \theta_\infty^2 |\cdot 
\left(\Phi^{1/4}_{(\lambda_0/\sqrt{2},1/4),(\lambda_0/\sqrt{2}\pm 1/2,1/4)}
|(\lambda_0/\sqrt{2},1/4)\rangle\right)=
x^{\pm \Lambda_1}e^{\pm x/2}
F\left(\theta_\infty+\frac{1}{2}\mp\Lambda_1,
-\theta_\infty+\frac{1}{2}\mp\Lambda_1;\pm 
\frac{1}{x}\right),
\end{align*}
where 
\begin{align*}
F\left(\begin{matrix}
\alpha\\ \gamma
\end{matrix};x\right)=
\sum_{i=0}^\infty \frac{(\alpha)_i}
{i!(\gamma)_i}x^i,\quad 
F(\alpha,\gamma;x)=
\sum_{i=0}^\infty 
\frac{(\alpha)_i(\gamma)_i}{i!} x^i.
\end{align*}

\section{Painlev\'e tau functions}

In this section, 
as an application of the theory 
of irregular conformal blocks developed 
in Sections 2 and 3, 
we propose 
 series expansions of the tau 
 functions of the fourth and fifth Painlev\'e
 equations in terms of irregular 
 conformal blocks. 
Before proceeding to the detail, 
let us briefly review the Painlev\'e equations. 
 
 The relation between  the Painlev\'e functions and their tau functions 
is similar to that between the elliptic functions and  theta functions. 
Let us illustrate this by taking the first Painlev\'e equation as an example. 
The first Painlev\'e equation is 
\begin{equation*}
\frac{d^2y}{dt^2}=6y^2+t. 
\end{equation*}
The differential equation obtained by replacing the last term $t$ with 
the constant term is 
\begin{equation}\label{eq-2nd-elliptic}
\frac{d^2y}{dt^2}=6y^2-\frac{1}{2}g_2,  
\end{equation}
which is derived by differentiating the differential equation 
\begin{equation*}
\left(\frac{dy}{dt}\right)^2=4y^3-g_2y-g_3. 
\end{equation*}
Hence, the Weierstrass $\wp$ function is a solution to 
\eqref{eq-2nd-elliptic}. 
The Weierstrass $\sigma$ function is defined by 
\begin{equation*}
\wp=-\frac{d^2}{dt^2}\log \sigma. 
\end{equation*} 
The Weierstrass $\sigma$ function is one of the theta functions. 
Conversely, any elliptic function is expressed by a 
ratio of theta functions, 
which have explicit series expansions yielding various formulas involving the theta functions.

On the other hand, 
for any solution $y(t)$ of the first Painlev\'e equation, define the tau function $\tau_\mathrm{I}(t)$ by 
\begin{equation*}
y(t)=-\frac{d^2}{dt^2}\log \tau_\mathrm{I}(t). 
\end{equation*}
The Weierstrass $\wp$ function is an entire function as well as  $\tau_\mathrm{I}(t)$. 
The tau functions of the other Painlev\'e functions are defined in a similar manner, and 
they play an important role in the study of the Painlev\'e equations. 
The interested reader is referred  to \cite{Conte}, \cite{FIKN} 
for details and further references. 

Although the theta functions have explicit series expansions, explicit series 
expansions of the Painlev\'e tau 
functions were not known until recently. 
In 2012,  a remarkable discovery was reported by Gamayun, Iorgov and Lisovyy  \cite{GIL1}. They found 
that the sixth Painlev\'e tau function has a series expansion in terms of 
the four point conformal block: 
\begin{equation*}
\tau_{\mathrm{VI}}(t)=\sum_{n\in\mathbb{Z}}
s^n C\left(\begin{matrix}
\theta_1, \theta_t
\\
\theta_\infty,\sigma+n,\theta_0
\end{matrix}\right)\mathcal{F}\left(\begin{matrix}
\theta_1, \theta_t
\\
\theta_\infty,\sigma+n,\theta_0\end{matrix};t\right), 
\end{equation*}
where $s,\sigma\in\mathbb{C}$ are integral constants, $\theta_i$ are complex parameters 
in the sixth Painlev\'e equation, $\mathcal{F}(\theta,\sigma;t)$ is a 4-point conformal block  
with the central charge $c=1$: 
\begin{equation*}
\mathcal{F}\left(\begin{matrix}
\theta_1, \theta_t
\\
\theta_\infty,\sigma,\theta_0\end{matrix};t\right)=
\langle \theta_\infty^2|
\Phi^{*,\theta_1^2}_{\theta_\infty^2,\sigma^2}(1)\cdot 
\Phi^{\theta_t^2}_{\sigma^2,\theta_0^2}(t)
|\theta_0^2\rangle
\end{equation*}
and 
\begin{equation*}
C\left(\begin{matrix}
\theta_1, \theta_t
\\
\theta_\infty,\sigma,\theta_0
\end{matrix}\right)=\frac{\prod_{\epsilon,\epsilon'=\pm}
G(1+\theta_t+\epsilon \theta_0+\epsilon' \sigma)
G(1+\theta_1+\epsilon \theta_\infty
+\epsilon' \sigma)}{\prod_{\epsilon=\pm}G(1+2 \epsilon \sigma)},
\end{equation*} 
where $G(z)$ is the Barnes G-function such that $G(z+1)=\Gamma(z)G(z)$. By the AGT correspondence, 
\begin{align*}
\mathcal{F}\left(\begin{matrix}
\theta_1, \theta_t
\\
\theta_\infty,\sigma,\theta_0\end{matrix};t\right)=
t^{\sigma^2-\theta_0^2-\theta_t^2}(1-t)^{2\theta_0\theta_1}\sum_{\lambda, \mu\in \mathbb{Y}}\mathcal{F}_{\lambda,\mu}
\left(\begin{matrix}
\theta_1, \theta_t
\\
\theta_\infty,\sigma,\theta_0\end{matrix}\right)t^{|\lambda|+|\mu|},
\end{align*}
where $\mathbb{Y}$ stands for the set of all Young diagrams,  
\begin{align*}
\mathcal{F}_{\lambda,\mu}
\left(\begin{matrix}
\theta_1, \theta_t
\\
\theta_\infty,\sigma,\theta_0\end{matrix}\right)=&\prod_{(i,j)\in\lambda}
\frac{((\theta_t+\sigma+i-j)^2-\theta_0^2)((\theta_1+\sigma+i-j)^2-\theta_\infty^2)}
{h_\lambda^2(i,j)(\lambda_j'+\mu_i-i-j+1+2\sigma)^2}
\\
&\times \prod_{(i,j)\in\mu}
\frac{((\theta_t-\sigma+i-j)^2-\theta_0^2)((\theta_1-\sigma+i-j)^2-\theta_\infty^2)}
{h_\mu^2(i,j)(\mu_j'+\lambda-i-j+1-2\sigma)^2}. 
\end{align*}
Here, $\lambda=(\lambda_1,\ldots,\lambda_n)$ ($\lambda_i\ge \lambda_{i+1}$), 
$\lambda'$ denotes the transpose of $\lambda$, and $h_\lambda(i,j)$ is the hook length 
defined by $h_\lambda(i,j)=\lambda_i+\lambda_j'-i-j+1$. 
Therefore, we have an explicit series expansion of $\tau_\mathrm{VI}(t)$. 

A proof of the expansion of $\tau_{\mathrm{VI}}(t)$ was given in \cite{Iorgov Lisovyy Teschner} by constructing a fundamental solution to the linear problem 
of $\mathrm{P_{VI}}$, using Virasoro conformal field theory  and another proof was given 
in \cite{Bershtein Shchechkin} by proving that some conformal block 
satisfies the 
bilinear equations for $\mathrm{P_{VI}}$,
 using embedding of the direct sum of two Virasoro algebras in the sum of
fermion and super Virasoro algebra.

Later, series expansions of the tau functions in $t$, in other words, expansions of 
the tau functions 
at a {\it regular singular point}, 
 of the first line of the degeneration scheme
\begin{equation*}
\begin{diagram}
\node{\mathrm{P_{VI}}}\arrow{e}
\node{\mathrm{P_{V}}}\arrow{e}\arrow{se}
\node{\mathrm{P_{III}}}\arrow{e}\arrow{se}
\node{\mathrm{P_{III}^{D_7}}}\arrow{e}\arrow{se}
\node{\mathrm{P_{III}^{D_8}}}
\\
\node[3]{\mathrm{P_{IV}}}\arrow{e}
\node{\mathrm{P_{II}}}\arrow{e}
\node{\mathrm{P_I}}
\end{diagram}
\end{equation*}
were obtained in \cite{GIL2} by taking  scaling limits. 
It was conjectured in \cite{Its Lisovyy Tykhyy 2014} that a long-distance 
expansion of the tau function for $\mathrm{P_{III}^{D_8}}$, namely, an expansion in $t^{-1}$, 
can be represented as $\sum_{n\in\Z}s^n \mathcal{G}(\nu+n;t^{-1})$.  The first few terms of $\mathcal{G}(\nu;t^{-1})$ were explicitly obtained. 

Therefore, it is natural to expect that 
the tau functions of the Painlev\'e equations 
have series expansions in terms of conformal blocks. In the following, we present 
conjectural formulas of series expansions in $t^{-1}$ of the tau functions 
of the fifth and fourth Painlev\'e equations.

\subsection{Expansions of the $\mathrm{P_V}$ and $\mathrm{P_{IV}}$ tau functions}
The fourth and fifth Painlev\'e equations are the following 
second order nonlinear differential equations:
\begin{align*}
&\mathrm{P_{IV}}\qquad 
\frac{d^2 q}{dt^2}=\frac{1}{2q}\left(\frac{dq}{dt}\right)^2
+\frac{3}{2}q^3+4 tq^q+2(t^2-\alpha)q+\frac{\beta}{q},
\\
&\mathrm{P_V}\qquad 
\frac{d^2 q}{dt^2}=
\left(\frac{1}{2q}+\frac{1}{q-1}\right)\left(\frac{dq}{dt}\right)^2
-\frac{1}{t}\frac{dq}{dt}+\frac{(q-1)^2}{t^2}\left(\alpha q+\frac{\beta}{q}\right)
+\frac{\gamma q}{t}+\frac{\delta q(q+1)}{q-1},  
\end{align*}
$\alpha,\beta, \gamma,\delta$ being complex constants. They are equivalent 
to the Hamiltonian system:
\begin{equation*}
\frac{dq}{dt}=\frac{\partial H}{\partial p},\quad 
\frac{dp}{dt}=-\frac{\partial H}{\partial q}, 
\end{equation*}
with the Hamiltonians:
\begin{equation*}
H_\mathrm{IV}=2qp^2-(q^2+2tq-\theta-\theta_t)p
-\theta_t q,
\end{equation*}
where $\alpha=-(\theta+5\theta_t)/2$, 
$\beta=-(\theta+\theta_t)^2/2$, and 
\begin{equation*}
H_\mathrm{V}=(q-1)(qp-2\theta_t)(qp-p+2\theta)-tqp+((\theta+\theta_t)^2-\theta_0^2)q
+\left(\theta_t-\frac{\theta}{2}\right)t-\left(\theta_t+\frac{\theta}{2}\right),
\end{equation*}
where $\alpha=2\theta_0^2$, $\beta=-2\theta_t^2$, $\gamma=2\theta-1$, $\delta=-1/2$. 

For a solution ($q(t),p(t)$) to the Hamiltonian system, we define the Hamiltonian function
by 
\begin{equation*}
H_\mathrm{J}(t)=H(t;q(t),p(t))\quad (\mathrm{J=IV, V}).  
\end{equation*}
The $\tau$-functions $\tau_\mathrm{J}=\tau_\mathrm{J}(t)$ defined by 
\begin{equation*}
H_\mathrm{IV}(t)=\frac{d}{dt}\log \tau_\mathrm{IV}(t),\quad 
H_\mathrm{V}(t)=t\frac{d}{dt}\log \tau_\mathrm{V}(t) 
\end{equation*}
play a central role in the study of the Painlev\'e functions, such as 
the construction of B\"acklund transformations \cite{Okamoto}, 
relations to Soliton equations. 

A key to construct birational canonical transformations on the Painlev\'e functions is 
the nonlinear differential equations satisfied by the Hamiltonian functions 
\cite{Okamoto}. In fact,  
\begin{align}
&\left(H_\mathrm{IV}''\right)^2-(tH_\mathrm{IV}'-H_\mathrm{IV})^2+4H_\mathrm{IV}'(H_\mathrm{IV}'-\theta-\theta_t)(H_\mathrm{IV}'-2\theta_t)=0, 
\label{eq-Hamiltonian-DE-IV}
\\
&(tH_\mathrm{V}'')^2-(H_\mathrm{V}-tH_\mathrm{V}'+2(H_\mathrm{V}')^2)^2+\frac{1}{4}((2H_\mathrm{V}'-\theta)^2-4\theta_0^2)
((2H_\mathrm{V}'+\theta)^2-4\theta_t^2)=0,
\label{eq-Hamiltonian-DE-V} 
\end{align}
where $H_\mathrm{J}'=dH_\mathrm{J}/dt$ 
($\mathrm{J}=\mathrm{IV,V}$).
As mentioned above, the Hamiltonian functions are defined by a solution to the Painlev\'e equations. 
Inversely a function $q(t)$ defined by 
\begin{align*}
&\mathrm{P_{IV}}\quad q(t)=
\frac{H_\mathrm{IV}''-2tH_\mathrm{IV}'
+2H_\mathrm{IV}}{2(H_\mathrm{IV}'-2\theta_t)}, 
\\
&\mathrm{P_V} \quad q(t)=\frac{2(tH_\mathrm{V}''+H_\mathrm{V}-tH_\mathrm{V}'+2(H_\mathrm{V}')^2)}
{(2H_\mathrm{V}'-\theta)^2-4\theta_0^2}
\end{align*}
provides solutions to the fourth and fifth Painlev\'e equations, respectively. 

Based on the previous results on the series  expansions of the tau functions 
of the Painlev\'e equations 
$\mathrm{P_{VI}}$, 
$\mathrm{P_{V}}$, 
$\mathrm{P_{III}}$, 
$\mathrm{P_{III}^{D_7}}$ and 
$\mathrm{P_{III}^{D_8}}$ 
\cite{GIL1}, \cite{GIL2} and \cite{Its Lisovyy Tykhyy 2014}, 
we expect that the tau functions of the other cases  
also admit series expansions in terms of irregular conformal blocks. 
Let us recall that the building block of $\tau_{\mathrm{VI}}(t)$ 
is the 4-point regular conformal block 
with $c=1$:
\begin{equation*}
\langle \theta_\infty^2|
\Phi^{*,\theta_1^2}_{\theta_\infty^2,\sigma^2}(1)\cdot 
\Phi^{\theta_t^2}_{\sigma^2,\theta_0^2}(t)
|\theta_0^2\rangle. 
\end{equation*}
Thus, it is natural to expect that 
a building block of $\tau_{\mathrm{V}}(t)$ is 
the irregular conformal block having  
one irregular singular point of rank 1 and 
two regular singular points with $c=1$: 
\begin{equation*}
 \langle \theta_\infty^2 |\cdot \left(\Phi^{\theta_t^2}_{(\Lambda_1-\beta,\Lambda_2),(\Lambda_1,\Lambda_2)}
(t)|(\Lambda_1,\Lambda_2)\rangle\right), 
\quad
\left(\langle \theta_\infty^2 | \Phi^{*,\theta_t^2}_{\theta_\infty^2,\sigma^2}
(t)\right) \cdot|(\Lambda_1,\Lambda_2)\rangle. 
\end{equation*}
The latter is equal to the 
building block of the series 
expansion of the tau function 
of $\mathrm{P_V}$ obtained 
by a degeneration limit from 
the series expansion of $\tau_{\mathrm{VI}}(t)$ \cite{GIL2}.  Using  
the former irregular conformal block, 
we present a conjectural formula 
for the tau function of $\mathrm{P_V}$.  
In addition, it is natural to expect that a building block of $\tau_{\mathrm{IV}}(t)$ is 
the irregular conformal block having  
one irregular singular point of rank 2 and 
one regular singular point with $c=1$: 
\begin{equation*}
\langle 0| \cdot \left( \Phi^{\theta_t^2}_{(\Lambda_2-2\beta_2,\Lambda_3,\Lambda_4),(\Lambda_2,\Lambda_3,\Lambda_4)}(t)|(\Lambda_2,\Lambda_3,\Lambda_4)\right),  
\end{equation*}
where $\langle 0|\in \bar{V}^{*,[0]}_0$. 

After some computation,
we arrive at the following two 
conjectures. 

\begin{conj}[$\mathrm{P_{V}}$ case]
Let 
\begin{align*}
\tau(t)=\sum_{n\in\Z}&(-1)^{n(n+1)/2}s^n
\prod_{\epsilon=\pm 1}
G(1+\epsilon\theta_0+\theta-\beta-n)G(1+\theta_t+\epsilon(\beta+n))
\\
&\times \langle \theta_0^2|
\cdot \left( \Phi^{\theta_t^2}_{(\theta,1/4),(\theta-\beta-n,1/4)}
(1/t)|(\theta,1/4)\rangle\right) 
\end{align*}
and $H=t (\log(t^{-2\theta_t^2-\theta^2/2}e^{-\theta t/2}\tau(t)))'$. Then, 
$H$ satisfies the differential equation \eqref{eq-Hamiltonian-DE-V}. 
\end{conj}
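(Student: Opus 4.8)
The plan is to reduce the statement to a bilinear (Hirota--Okamoto) form of the $\sigma$-equation \eqref{eq-Hamiltonian-DE-V} and to verify that form directly on the conformal-block series, following the template of the proofs of the $\tau_{\mathrm{VI}}$ expansion of \cite{GIL1}. Writing $\hat\tau(t)=t^{-2\theta_t^2-\theta^2/2}e^{-\theta t/2}\tau(t)$ so that $H=t(\log\hat\tau)'$, equation \eqref{eq-Hamiltonian-DE-V} is a quadratic relation in $H,H',H''$; substituting $H=t\hat\tau'/\hat\tau$ and clearing denominators turns it into a bilinear differential equation for $\hat\tau$. The gauge factors $t^{-2\theta_t^2-\theta^2/2}$ and $e^{-\theta t/2}$ are precisely what strip off the leading $t$-power and exponential growth of the building block at the irregular point, so that $H$ becomes a $\sigma$-function of the expected analytic type. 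The summation index $n$ enters only through the intermediate parameter $\theta-\beta-n$, i.e. through an integer translation of the Floquet exponent of the rank $1$ module, which is the discrete B\"acklund symmetry of $\mathrm{P_V}$; hence the conjectured $\tau=\sum_n(-1)^{n(n+1)/2}s^nC_n\mathcal{B}_n(t)$ realizes the Okamoto--Toda lattice of $\tau$-functions as a formal Fourier series in $s$, and the target bilinear equation should hold mode by mode in $s$.

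Concretely I would proceed as follows. First, Theorem \ref{thm_VO_rr} guarantees that each rank $0$ vertex operator $\Phi^{\theta_t^2}_{(\theta,1/4),(\theta-\beta-n,1/4)}(z)$ exists, is unique, and has coefficients $v_m\in U_m$ that are polynomials in the data; from the relations \eqref{L_n action rr1} one extracts the $t^{-1}$-expansion of the building block $\mathcal{B}_n(t)$ together with an explicit recursion for its coefficients. Second, the ratio $C_{n+1}/C_n$ of Barnes-$G$ structure constants reduces, via $G(z+1)=\Gamma(z)G(z)$, to an elementary product of $\Gamma$-factors, which is what must match the relative normalization of neighboring blocks. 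Third, and crucially, I would produce the bilinear relations among the $\mathcal{B}_n$ themselves: at $c=1$ these come from the free-fermion/AGT structure (as exploited for $\mathrm{P_{VI}}$ in \cite{Bershtein Shchechkin}), which expresses a quadratic combination of conformal blocks with shifted intermediate exponents as a total derivative in $t$. Feeding these block identities and the constants $C_n$ into the cleared bilinear equation and collecting each power of $s$ would reduce the conjecture to a finite convolution identity $\sum_{n}(\cdots)\mathcal{B}_n\mathcal{B}_{k-n}=0$ in every Fourier mode.

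The main obstacle is to establish this bilinear identity for the full infinite sum, and to do so at an \emph{irregular} singular point, where each $\mathcal{B}_n(t)$ is only a formal, generically divergent, asymptotic series in $t^{-1}$. I see two viable routes. One is to adapt the isomonodromy argument of \cite{Iorgov Lisovyy Teschner}: construct a fundamental solution of the $2\times 2$ linear system for $\mathrm{P_V}$, which carries two regular singular points and one irregular singular point of Poincar\'e rank $1$, and whose connection and Stokes data at the irregular point are assembled from the degenerate-null-vector blocks computed in Subsection \ref{sec_Kummer}; the associated isomonodromic $\tau$-function would then coincide with the conjectured sum and automatically satisfy \eqref{eq-Hamiltonian-DE-V}. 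The other is to transport the fermionic/super-Virasoro embedding of \cite{Bershtein Shchechkin} to the irregular setting, recasting the bilinear equation as a Virasoro null-vector relation summed over $n$. In either route the genuinely delicate step is to justify all manipulations as identities of formal series in $t^{-1}$, i.e. to show that Borel summation (or term-by-term matching) commutes with the infinite $n$-summation; this, rather than the algebra of any single mode, is where the real difficulty lies.

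As a preliminary consistency check, and to pin down the normalizations $C_n$ and the sign $(-1)^{n(n+1)/2}$ before attacking the long-distance sum, I would first recover by the collision/scaling limit of \cite{GIL2} the already-established short-distance building block $\left(\langle\theta_\infty^2|\Phi^{*,\theta_t^2}_{\theta_\infty^2,\sigma^2}(t)\right)\cdot|(\Lambda_1,\Lambda_2)\rangle$, and check that its $\sigma$-expansion satisfies \eqref{eq-Hamiltonian-DE-V} in the dual short-distance regime; matching the two regimes across the Painlev\'e connection problem would provide strong evidence and fix every constant in the conjectured formula.
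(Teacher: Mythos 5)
The statement you are trying to prove is stated in the paper as a \emph{conjecture}, and the paper supplies no proof to compare against: its only support is a finite-order verification, namely that after substituting $H=t(\log(t^{-2\theta_t^2-\theta^2/2}e^{-\theta t/2}\tau(t)))'$ into \eqref{eq-Hamiltonian-DE-V}, the coefficient of each $s^i$ has the form $t^Ae^B(a_0+a_1t^{-1}+\cdots)$ with each $a_k$ a finite sum of block coefficients, and the first several $a_k$ are checked to vanish symbolically. Your proposal is therefore not ``a different route to the same proof''; it is a research program whose pivotal steps coincide exactly with the open content of the conjecture and are nowhere carried out.

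Concretely, the gap is your third step: the bilinear identities $\sum_n(\cdots)\mathcal{B}_n\mathcal{B}_{k-n}=0$ among the irregular blocks are asserted to ``come from the free-fermion/AGT structure,'' but both mechanisms you invoke are only established at regular singular points. The argument of \cite{Bershtein Shchechkin} for $\mathrm{P_{VI}}$ rests on the explicit combinatorial (AGT) expansion of the regular $4$-point block, and the construction in \cite{Iorgov Lisovyy Teschner} builds the fundamental solution from regular degenerate-field blocks with known monodromy; at the irregular point of rank $1$ the paper provides neither. Theorem \ref{thm_VO_rr} yields only a recursive, order-by-order determination of the coefficients $v_m\in U_m$ through \eqref{L_n action rr1} — there is no closed combinatorial formula for $\mathcal{B}_n(t)$ — so the ``finite convolution identity in every Fourier mode'' cannot even be written down in a form one could prove, and the degenerate-field data of Subsection \ref{sec_Kummer} give Kummer solutions but not the Stokes multipliers needed to identify the sum with an isomonodromic tau function. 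You correctly flag the interchange of Borel summation with the infinite $n$-sum as the delicate point, but you offer no mechanism for it; since each $\mathcal{B}_n(t)$ is expected to be a divergent formal series in $t^{-1}$, even the meaning of the mode-by-mode identity requires an argument. Finally, your preliminary consistency check cannot ``fix every constant'': matching the short-distance regime of \cite{GIL2} to the long-distance expansion is the Painlev\'e connection problem, itself open for $\mathrm{P_V}$ (for the analogous $\mathrm{P_{III}}(D_8)$ case this is exactly the unproven content of \cite{Its Lisovyy Tykhyy 2014}), so it can at best corroborate, not determine, the sign $(-1)^{n(n+1)/2}$ and the Barnes-$G$ normalization $C_n$. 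What can actually be executed today is the paper's own check: extract the $t^{-1}$-expansion of each block from Theorem \ref{thm_VO_rr} and verify that the first several $a_k$ vanish.
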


\begin{conj}[$\mathrm{P_{IV}}$ case] 
Let
\begin{align*}
\tau(t)=t^{-2\theta_t^2}e^{\theta_t t^2/2}\sum_{n\in\Z}
s^n &G(1+\theta-\beta-n)
\prod_{\epsilon=\pm 1}
G(1+\theta_t+\epsilon(\beta+n))
\\
&\times \langle 0|\cdot \left( \Phi^{\theta_t^2}_{(\theta,0,1/4),(\theta-\beta-n, 0,1/4)}(1/t)
|(\theta,0,1/4)\rangle\right)
\end{align*}
and $ H=(\log \tau(t))'$. Then, $H$ satisfies the  
differential equation 
\eqref{eq-Hamiltonian-DE-IV}. 
\end{conj}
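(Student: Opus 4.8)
The statement asserts that a specific trans-series in $t^{-1}$, built from rank $2$ irregular conformal blocks weighted by Barnes $G$-functions, solves the Okamoto $\sigma$-form \eqref{eq-Hamiltonian-DE-IV} of $\mathrm{P_{IV}}$. The plan is to reduce this to a Hirota bilinear relation among the summands and to verify that relation from the representation theory of the rank $2$ Verma module, following the strategy used for $\mathrm{P_{VI}}$ in \cite{Bershtein Shchechkin} and \cite{Iorgov Lisovyy Teschner}. First I would extract, from Theorem \ref{thm_VO_rr} and the normalization \eqref{eq_VO_act_rr}, the exact structure of each term: with $\Lambda=(\theta,0,1/4)$ and $\Lambda'=(\theta-\beta-n,0,1/4)$ the relation $\Lambda'_2=\Lambda_2-2\beta_2$ forces $\beta_2=(\beta+n)/2$, and $\alpha$, $\beta_1$ are then the polynomials in $\theta_t^2$, $\theta$, $\beta+n$ furnished by the theorem. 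This fixes the leading factor $t^{-\alpha}\exp(\beta_1 t+\tfrac{1}{2}(\beta+n)t^2)$ of the $n$th block and, together with the prefactor $t^{-2\theta_t^2}e^{\theta_t t^2/2}$, determines the exponent of each sector; the content of the conjecture is then that the remaining $t^{-1}$ series, summed against the $G$-function weights, assembles into a $\tau$-function of $\mathrm{P_{IV}}$.

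The central reduction is to rewrite \eqref{eq-Hamiltonian-DE-IV} as a bilinear system. Letting $\tau_n(t)$ denote the $n$th summand (so that $\tau=\sum_n s^n\tau_n$ up to the common prefactor), the Okamoto--B\"acklund theory of $\mathrm{P_{IV}}$ \cite{Okamoto} realizes the shift $n\mapsto n\pm1$ as a translation in the parameter lattice and shows that $H=(\log\tau)'$ solves \eqref{eq-Hamiltonian-DE-IV} provided the family $\{\tau_n\}$ satisfies a Toda-type bilinear equation, schematically
\begin{equation*}
\tau_{n+1}\,\tau_{n-1}=c_n\,\mathcal{D}[\tau_n\cdot\tau_n],
\end{equation*}
where $\mathcal{D}$ is a fixed second-order Hirota operator in $t$ and the $c_n$ are constants. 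The role of the weight $G(1+\theta-\beta-n)\prod_{\epsilon=\pm1}G(1+\theta_t+\epsilon(\beta+n))$ is exactly to supply the connection constants relating neighbouring blocks: using $G(z+1)=\Gamma(z)G(z)$, the ratio of consecutive weights collapses to a product of $\Gamma$-factors, and these must match the structure constants produced when the intermediate parameter of the rank $2$ block is shifted by one unit.

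The hard part will be proving that the blocks themselves obey this bilinear relation. In the regular $\mathrm{P_{VI}}$ case one has crossing and braiding symmetry of the $4$-point Virasoro block at $c=1$; for the rank $2$ irregular block at an irregular singular point no such fusion calculus is developed here, since Theorem \ref{thm_VO_rr} guarantees only the existence and uniqueness of the vertex operator, not the operator-product relations needed to collide two blocks with shifted intermediate data. I therefore see two routes. The first is to identify $\tau(t)$ directly with the Jimbo--Miwa--Ueno isomonodromic tau function of the associated rank $2$ irregular linear problem, exploiting that such irregular conformal blocks satisfy the quantized Painlev\'e Lax equations \cite{Nagoya Yamada}, and then to transfer the known bilinear relations from the isomonodromy side, at the cost of controlling convergence and the Stokes structure of the trans-series. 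The second is to obtain the statement as a confluence limit of the $\mathrm{P_V}$ conjecture, tracking how the rank $1$ blocks and their $G$-weights coalesce into the rank $2$ data as $\mathrm{P_V}$ degenerates to $\mathrm{P_{IV}}$, that is, as \eqref{eq-Hamiltonian-DE-V} degenerates to \eqref{eq-Hamiltonian-DE-IV}; here the obstacle is justifying the interchange of the confluence limit with the infinite sum over $n$. Pending either route, the practical first step is to expand $\tau(t)$ to several orders in $t^{-1}$ using the explicit vectors $v_m\in U_m$ of Theorem \ref{thm_VO_rr}, substitute into \eqref{eq-Hamiltonian-DE-IV}, and check that it holds order by order; this already pins down the normalization of $s$, $\beta$ and the prefactor and provides strong evidence for the conjecture.
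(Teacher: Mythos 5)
The statement you are asked about is a conjecture, and the paper in fact offers no proof of it: its entire supporting argument is the remark following the two conjectures, namely that upon substituting $H$ into \eqref{eq-Hamiltonian-DE-IV} the coefficient of $s^i$ takes the form $t^Ae^B(a_0+a_1t^{-1}+a_2t^{-2}+\cdots)$, where each $a_i$ is a finite combination of block coefficients, and the first several $a_i$ are checked to vanish. Your ``practical first step'' is therefore exactly the paper's whole argument, and your bookkeeping agrees with the paper's data: Theorem \ref{thm_VO_rr} with $r=2$ gives $\Lambda_2'=\Lambda_2-2\beta_2$, so $\beta_2=(\beta+n)/2$; since $\Lambda_3=0$ the appendix formula $\beta_1=\beta_2\Lambda_3/\Lambda_4$ gives $\beta_1=0$, matching the displayed prefactor $t^{3\theta_t^2+\beta(2\theta-3\beta)}e^{\beta t^2/2}$ of the $n=0$ block. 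Everything in your proposal beyond that step goes past the paper. The reduction to a Toda-type bilinear system with the Barnes-$G$ weights supplying the structure constants is the mechanism by which the $\mathrm{P_{VI}}$ analogue was actually proved in \cite{Bershtein Shchechkin} (which the paper cites only for $\mathrm{P_{VI}}$), and your two completion routes --- identification with the isomonodromic tau function via the quantized Lax equations of \cite{Nagoya Yamada}, or confluence from the $\mathrm{P_V}$ conjecture tracking the degeneration \eqref{eq-Hamiltonian-DE-V} $\to$ \eqref{eq-Hamiltonian-DE-IV} --- are sensible, and you are right that Theorem \ref{thm_VO_rr} supplies only existence and uniqueness of the vertex operator, not the fusion or bilinear calculus either route needs. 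What your approach buys is a roadmap to an actual proof; what the paper buys is only evidence, and on that score the two are at the same epistemic level, which is appropriate for a conjecture.

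Two cautions on your sketch. First, the assertion that $H=(\log\tau)'$ solves \eqref{eq-Hamiltonian-DE-IV} \emph{provided} the family $\{\tau_n\}$ satisfies a single Toda-type equation is optimistic: in the $\mathrm{P_{VI}}$ proof of \cite{Bershtein Shchechkin} one needs a pair of bilinear (Toda-like) equations on the tau functions of neighbouring parameter lattice points, and the constants $c_n$ must be matched against the $\Gamma$-factor ratios of consecutive $G$-weights, exactly as you indicate via $G(z+1)=\Gamma(z)G(z)$; you should expect the same two-equation structure here. Second, in the confluence route the interchange of the limit with the sum over $n\in\Z$ is not merely a technicality: the series in $t^{-1}$ are expected to be divergent (the paper itself notes this for expansions at irregular singular points), so any limiting argument must be formulated at the level of formal trans-series or with explicit Borel-summability control, neither of which the paper develops.
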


Here, 
\begin{align*}
&\langle \theta_0^2|
\cdot \left( \Phi^{\theta_t^2}_{(\theta,1/4),(\theta-\beta,1/4)}
(1/t)|(\theta,1/4)\rangle\right)
\\
&=t^{2\theta_t^2+2\beta(\theta-\beta)}
e^{\beta t}\left(1+2 \left(2 \beta^3-3 \beta^2 \theta +\beta  \theta^2-\beta  \theta_0^2-\beta  \theta_t^2+\theta  \theta_t^2\right)t^{-1} \right.
\\
&+2 \left(4 \beta ^6-12 \beta ^5 \theta +13 \beta ^4 \theta ^2-4 \beta ^4 \theta_0^2-4 \beta ^4 \theta_t^2+5 \beta ^4-6 \beta ^3 \theta ^3+6 \beta ^3 \theta  \theta_0^2+10 \beta ^3 \theta  \theta_t^2-10 \beta ^3 \theta +\beta ^2 \theta ^4
\right.
\\
&-2 \beta ^2 \theta ^2 \theta_0^2-8 \beta ^2 \theta ^2 \theta_t^2+6 \beta ^2 \theta ^2+\beta ^2 \theta_0^4+2 \beta ^2 \theta_0^2 \theta_t^2-3 \beta ^2 \theta_0^2+\beta ^2 \theta_t^4-3 \beta ^2 \theta_t^2+2 \beta  \theta ^3 \theta_t^2-\beta  \theta ^3
\\
&\left. \left.-2 \beta  \theta  \theta_0^2 \theta_t^2+\beta  \theta  \theta_0^2-2 \beta  \theta  \theta_t^4+5 \beta  \theta  \theta_t^2+\theta ^2 \theta_t^4-2 \theta ^2 \theta_t^2+\theta_0^2 \theta_t^2\right)t^{-2}+\cdots \right)
\end{align*}
and 
\begin{align*}
&\langle 0|\cdot \left( \Phi^{\theta_t^2}_{(\theta,0,1/4),(\theta-\beta, 0,1/4)}(1/t)
|(\theta,0,1/4)\rangle\right)
\\
&=t^{3\theta_t^2+\beta(2\theta-3\beta)}
e^{\beta t^2/2}\left(1+
\left(\theta ^2 \beta +2 \theta  \theta_t^2-6 \theta  \beta ^2-3 \theta_t^2 \beta +6 \beta ^3\right)t^{-2}
\right.
\\
&+\frac{1}{4} \left(2 \theta ^4 \beta ^2+8 \theta ^3 \theta_t^2 \beta -24 \theta ^3 \beta ^3-4 \theta^3 \beta +8 \theta ^2 \theta_t^4-60 
\theta^2 \theta_t^2 \beta^2-16 \theta^2 \theta_t^2+96 \theta^2 \beta^4\right. 
\\
&+48 \theta^2 \beta^2-24 \theta  \theta_t^4 \beta +120 \theta  \theta_t^2 \beta^3+72 \theta  \theta_t^2 \beta -144 \theta  \beta^5-140 \theta  \beta^3-2 \theta  \beta +18 \theta_t^4 \beta ^2
\\
&\left.\left.+\theta_t^4-72 \theta_t^2 \beta^4-66 \theta_t^2 \beta^2-\theta_t^2+72 \beta^6+105 \beta^4+3 \beta^2\right)t^{-4}+\cdots \right)
\end{align*}

If we substitute $H$ into \eqref{eq-Hamiltonian-DE-V} or \eqref{eq-Hamiltonian-DE-IV}, then
 the coefficient of $s^i$ for $i\in\Z$ 
 is of the form 
 \begin{equation*}
 t^A e^B(a_0+a_1t^{-1}+a_2t^{-2}+\cdots).
 \end{equation*}
Because $a_i$ ($i=0,1,\ldots,$) 
are finite sums of the coefficients 
of $t^{-k}$ in the corresponding irregular 
conformal blocks, we can check 
that the first several $a_i$'s are zero.

\appendix
\section{Data of irregular vertex operators}

\subsection{Vertex operators from a Verma module to an irregular Verma module}
Let $\lambda=(\lambda_0,\ldots, \lambda_r)$, 
($\lambda_r\neq 0$),  
$\Lambda=(\Lambda_r,\ldots, \Lambda_{2r})$ ($\Lambda_{2r}\neq 0$).  
We conjecture that 
 the rank $r$ vertex operator $\Phi^{[r],\lambda}_{\Delta,\Lambda}(z)$: 
$V^{[0]}_{\Delta}\to V^{[r]}_{\Lambda}$  
 such that 
\begin{equation*}
\Phi^{[r],\lambda}_{\Delta,\Lambda}(z)|\Delta\rangle=z^\alpha \exp\left(\sum_{n=1}^r 
\frac{\beta_n}{z^n}\right)\sum_{m=0}^\infty v_m|\Lambda\rangle z^m
\end{equation*}
exists uniquely, 
where $v_0=1$, $v_m|\Lambda\rangle\in V^{[r]}_{\Lambda}$ ($m\ge 1$), 
 $\beta_r=a \lambda_r$ ($a\in\C$), $\alpha=\alpha(\lambda_0, a, \Delta)$, $\beta_i(\lambda_0,\ldots,\lambda_r, a, \Delta)$ ($i=1,\ldots, r-1$), and
\begin{align*}
&\Lambda_n=\frac{1}{2}\sum_{i=0}^r \lambda_i\lambda_{n-i}+\delta_{n,r}(-1)^{r+1}r\beta_r\lambda_r\quad (n=r,\ldots, 2r),
\\
&D_i(\beta_k)=(-1)^i(k+i)\beta_{k+i}\quad (k=0,1,\ldots,r) 
\end{align*}
with $D_i=\sum_{j=1}^{r-i}j\lambda_{j+i}\partial/\partial \lambda_j$.
Below, we present $\alpha$, $\beta_i$ ($i=1,\ldots,r-1$), a few terms of $v_m$.

\subsubsection{Rank zero case}
Set $\Delta_1=\Delta$, $\Delta_2=\lambda_0^2/2-\lambda_0\rho$ and $\Delta_3=\Lambda_0$. 
\begin{align*}
\alpha=&\Delta_3-\Delta_2-\Delta_1,
\\
v_1=&\frac{(-\Delta_1+\Delta_2+\Delta_3)}{2 \Delta_3}L_{-1},
\\
v_2=&\frac{ c (\Delta_1-\Delta_2)^2-\Delta_3 \left(c \Delta_3+8 \Delta_3^2+12 \Delta_2\right)-(\Delta_1-\Delta_2-\Delta_3) \left(2 c \Delta_3+c+16 \Delta_3^2-4 \Delta_3\right)+8 \Delta_3 (\Delta_1-\Delta_2)^2}{4 \Delta_3 \left(2 c \Delta_3+c+16 \Delta_3^2-10 \Delta_3\right)}L_{-1}^2
\\
&-\frac{3 (\Delta_1-\Delta_2)^2-3 \Delta_3^2-(\Delta_1+\Delta_2-\Delta_3) (2 \Delta_3+1)}{2 c \Delta_3+c+16 \Delta_3^2-10 \Delta_3}L_{-2}.
\end{align*}

\subsubsection{Rank one case}

\begin{align*}
\alpha=&a^2-2 \Delta+2a \rho+a\lambda_0,
\\
v_1=&\frac{(a^2+\alpha)(\alpha+2\Delta)}{2a\lambda_1}-\frac{a }{\lambda_1}L_0+L_{-1}
\\
v_2=&\frac{a^2 }{2 \lambda_1^2}L_0^2-\frac{a}{\lambda_1}L_{-1}L_0+\frac{b_{21}}{\lambda_1^2}+\frac{1}{2} L_{-1}^2
+\frac{a^2+ (a^2+\alpha)(\alpha+2\Delta)}{2a\lambda_1}L_{-1}-\frac{a^2+(a^2+\alpha)(1+\alpha+2\Delta)}{2\lambda_1^2}L_0,
\\
v_3=&\frac{1}{6}L_{-1}^3-\frac{a^3}{5\lambda_1^3}L_0^3+\frac{a^2}{2\lambda_1^2}
L_{-1}L_0^2-\frac{a}{3\lambda_1}L_{-2}-\frac{a}{2\lambda_1}L_{-1}^2L_0
+\frac{a\left(3\left( a^2+ \alpha\right) (\alpha+2 \Delta +2)+2 \left(3 a^2+1\right)\right)}{12\lambda_1^3}L_0^2
\\
&+\frac{b_{32}}{\lambda_1^3}L_0
-\frac{\left(2a^2+(a^2+\alpha)(1+\alpha+2\Delta)\right)}{2\lambda_1^2}L_{-1}L_0+\frac{b_{34}}{\lambda_1^2}L_{-1}
+\frac{\left(2a^2+(a^2+\alpha)(\alpha+2\Delta)\right)}{4a\lambda_1}L_{-1}^2
+\frac{b_{36}}{\lambda_1^3},
\end{align*}
where $b_{ij}$ are polynomials in $a$, $\lambda_0$, $\Delta$, $\rho$. 
\subsubsection{Rank two case}
\begin{align*}
\alpha=& 6a^2-3\Delta-6a\rho-2a\lambda_0,
\quad
\beta_1=-2a\lambda_1,
\\
v_1=&L_{-1}+\frac{2a}{\lambda_2}L_1
+\frac{(\alpha+2\Delta_1)\lambda_1}
{\lambda_2},
\\
v_2=&\frac{1}{2}L_{-1}^2
+\frac{2a^2}{\lambda_2^2}L_1^2
+\frac{a\lambda_1(1+2\alpha+4\Delta)}{\lambda_2^2}L_1
+\frac{\lambda_1^2(\alpha+2\Delta)(1+\alpha+2\Delta)}
{2\lambda_2^2}
\\
&+\frac{-12 a^4+a^2 (12 \alpha-c+36 \Delta +1)+\alpha^2+2 \alpha\Delta-3 \Delta^2}{8 a \lambda_2 }+\frac{a}{\lambda_2}L_0
\\
&+\frac{2a}{\lambda_2}L_{-1}L_1
+\frac{\lambda_1(\alpha+2\Delta)}{\lambda_2}
L_{-1},
\\
v_3=&\frac{1}{6}L_{-1}^3
+\frac{4a^3}{3\lambda_2^3}L_1^3
+\frac{2a^2\lambda_1(1+\alpha+2\Delta)}{\lambda_2^2}L_1^2
-\frac{2a^2}{\lambda_2^2}L_0L_1
+\frac{2a^2}{\lambda_2^2}L_{-1}L_1^2
-\frac{a}{\lambda_2}L_{-1}L_0
+\frac{a}{\lambda_2}L_{-1}^2L_1
\\
&+\left(b_{31}\frac{\lambda_1^2}{\lambda_2^3}+b_{32}\frac{1}{\lambda_2^2}\right)L_1
-\frac{a\lambda_1(2+3\alpha+6\Delta)}{3\lambda_2^2}L_0
+\frac{a\lambda_1(1+2\alpha+4\Delta)}{\lambda_2^2}
L_{-1}L_1
\\
&
+b_{33}\frac{\lambda_1^3}{\lambda_2^3}
+b_{34}\frac{\lambda_1}{\lambda_2^2}
+\left(b_{35}\frac{\lambda_1^2}{\lambda_2^2}
+b_{36}\frac{1}{\lambda_2}\right)L_{-1}
+\frac{\lambda_1(\alpha+2\Delta)}{2\lambda_2}L_{-1}^2,
\end{align*}
where $b_{ij}$ are polynomials in $a$, $\lambda_0$, $\Delta$, $\rho$.

\subsubsection{Rank three case}
\begin{align*}
\alpha=&18a^2-4\Delta+12a\rho+3a\lambda_0,
\quad
\beta_1=3a\lambda_1,\quad 
\beta_2=-\frac{3}{2}a\lambda_2,
\\
v_1=&L_{-1}-\frac{3a}{\lambda_3}L_2
+\frac{3a\lambda_1^2
+(2\alpha-9a^2+6\Delta)\lambda_2}{2\lambda_3},
\\
v_2=&\frac{1}{2}L_{-1}^2
-\frac{3a}{\lambda_3}L_{-1}L_2
+\frac{3a}{2\lambda_3}L_1
+\frac{9a^2}{2\lambda_3^2}L_2^2
+\frac{12a\lambda_1^2-4(9a^2-2\alpha-6\Delta)\lambda_2}{8\lambda_3}L_{-1}
\\
&-\frac{36a^2\lambda_1^2+12a(1-9a^2+2\alpha+6\Delta)\lambda_2}{8\lambda_3^2}L_2
+\frac{b_1\lambda_1^4+b_2\lambda_1^2\lambda_2
+b_3\lambda_2^2}{\lambda_3^2}
+\frac{b_4\lambda_1}{\lambda_3},
\\
v_3=&-\frac{9 a^3 }{2 \lambda_3^3}L_2^3-\frac{9 a^2 }{2 \lambda_3^2}L_1L_2+\frac{9 a^2 }{2 \lambda_3^2}L_{-1}L_2^2+\frac{3 a }{2 \lambda_3}L_{-1}L_1-\frac{3 a }{2 \lambda_3}L_{-1}^2L_2+\frac{1}{6} L_{-1}^3-\frac{a}{\lambda_3}L_0
\\
&+\frac{9a^2(3a\lambda_1^2+(2-9a^2+2\alpha+6\Delta)\lambda_2)}{4\lambda_3^3}L_2^2
+\left(\frac{b_{5}\lambda_1^4+b_{6}\lambda_1^2\lambda_2+b_{7}\lambda_2^2}{\lambda_3^3}+\frac{b_{8}\lambda_1}{\lambda_3^2}\right)L_2
\\
&+\frac{3a(-3a\lambda_1^2+(-1+9a^2-2\alpha-6\Delta)\lambda_2)}{2\lambda_3^2}L_{-1}L_2
+\frac{a(9a\lambda_1^2+(4+27a^2+6\alpha+18\Delta)\lambda_2)}{4\lambda_3^2}L_1
\\
&+\left(\frac{b_{9}\lambda_1^4+b_{10}\lambda_1^2\lambda_2+b_{11}\lambda_2^2}{\lambda_3^2}
+\frac{b_{12}\lambda_1}{\lambda_3}\right)L_{-1}
+\frac{3a\lambda_1^2+(-9a^2+2\alpha+6\Delta)\lambda_2}{4\lambda_3}L_{-1}^2
\\
&+\frac{b_{13}\lambda_1^6
+b_{14}\lambda_1^4\lambda_2
+b_{15}\lambda_1^2\lambda_2^2
+b_{16}\lambda_2^3}{\lambda_3^3}
+\frac{b_{17}\lambda_1^3+b_{18}\lambda_1\lambda_2}{\lambda_3^2}
\end{align*}
\subsection{Vertex operators from an irregular Verma module to an irregular Verma module}

Let $\Lambda=(\Lambda_r,\ldots, \Lambda_{2r})$ ($\Lambda_{2r}\neq 0$). 
The rank zero vertex operator $\Phi^{\Delta}_{\Lambda,\Lambda'}(z)$: 
$V^{[r]}_{\Lambda}\to V^{[r]}_{\Lambda'}$  
is defined by the commutation relations 
\eqref{comrel_rank0} and 
\begin{equation*}
\Phi^{\Delta}_{\Lambda,\Lambda'}(z)|\Lambda\rangle=z^\alpha \exp\left(\sum_{n=1}^r 
\frac{\beta_n}{z^n}\right)\sum_{m=0}^\infty v_m|\Lambda'\rangle z^m,
\end{equation*}
where $v_0=1$, $v_m|\Lambda'\rangle\in V^{[r]}_{\Lambda'}$ ($m\ge 1$). 
Below, we present $\alpha$, $\beta_n$ ($n=1,\ldots,r-1$), a few terms of $v_m$.

\subsubsection{Rank one case}

\begin{align*}
\alpha=&-\frac{\beta_1(\Lambda_1-\beta_1)}{2 \Lambda_2}-2 \Delta,
\\
v_1=&-\frac{\beta_1}{2 \Lambda_2}L_0
+\frac{4\Delta\Lambda_1\Lambda_2-4\Delta\beta_1\Lambda_2
-3\beta_1^2\Lambda_1+2\beta_1^3}{8\Lambda_2^2},
\\
v_2=&\frac{\beta_1^2 }{8 \Lambda_2^2}L_0^2-\frac{\beta_1}{4 \Lambda_2}L_{-1}
-\frac{-8\Delta\Lambda_2^2
-2(1-2\Delta)\beta_1\Lambda_1\Lambda_2
+2(3-2\Delta)\beta_1^2\Lambda_2
+\beta_1^2\Lambda_1^2
-3\beta_1^3\Lambda_1+2\beta_1^4}{16\Lambda_2^3}L_0
\\
&+\left(48\Delta(\Delta-2)\Lambda_1^2\Lambda_2^2
+8(c-1+30\Delta-12\Delta^2)\beta_1\Lambda_1\Lambda_2^2
+12(2\Delta-1)\beta_1\Lambda_1^3\Lambda_2
\right.
\\
&-8(c-1+18\Delta-6\Delta^2)\beta_1^2\Lambda_2^2
-24(4\Delta-3)\beta_1^2\Lambda_1^2\Lambda_2
+3\beta_1^2\Lambda_1^4
+120(\Delta-1)\beta_1^3\Lambda_1\Lambda_2
\\
&\left.-18\beta_1^3\Lambda_1^3
-12(4\Delta-5)\beta_1^4\Lambda_2
+39\beta_1^4\Lambda_1^2
-36\beta_1^5\Lambda_1+12\beta_1^6\right)
\frac{1}{384\Lambda_2^4},
\\
v_3=&-\frac{\beta_1}{6\Lambda_2}L_{-2}
+\frac{\beta_1^2}{8\Lambda_2^2}L_{-1}L_0
-\frac{\beta_1^3}{48\Lambda_2^3}L_0^3
+\frac{1}{\Lambda_2^3}\sum_{i+j+2k=4}c^{(-1)}_{ijk} \beta_1^i\Lambda_1^j\Lambda_2^kL_{-1}
+\frac{1}{\Lambda_2^4}\sum_{i+j+2k=5}c^{(0,0)}_{ijk} \beta_1^i\Lambda_1^j\Lambda_2^kL_{0}^2
\\
&+\frac{1}{\Lambda_2^5}\sum_{i+j+2k=7}c^{(0)}_{ijk} \beta_1^i\Lambda_1^j\Lambda_2^kL_{0}
+\frac{1}{\Lambda_2^6}\sum_{i+j+2k=9}c_{ijk} \beta_1^i\Lambda_1^j\Lambda_2^k,
\end{align*}
where $c_{ijk}^{(l)}$ are polynomials 
in $c$, $\Delta$. 
\subsubsection{Rank two case}
\begin{align*}
\alpha=&\frac{\beta_2 \Lambda_3^2}{4 \Lambda_4^2}+\frac{3 \beta_2^2}{\Lambda_4}-\frac{\beta_2\Lambda_2}{\Lambda_4}-3\Delta,
\quad
\beta_1=\frac{\beta_2 \Lambda_3}{\Lambda_4},
\\
v_1=&b_{10}-\frac{\beta_2}{\Lambda_4}L_1,
\\
v_2=&\frac{\beta_2^2}{2
\Lambda_4^2}L_1^2-\frac{\beta_2}{2 \Lambda_4}L_0
+\frac{\beta_2}
{\Lambda_4}\left(\frac{\Lambda_3}{4\Lambda_4}
-b_{10}\right)L_1+b_{20},
\\
v_3=&\frac{\beta_2^2}{2 \Lambda_4^2}L_0L_1
-\frac{\beta_2}{3 \Lambda_4}L_{-1}
-\frac{\beta_2^3}{6 \Lambda_4^3}L_1^3
-\frac{\beta_2^2}{4 \Lambda_4^3}
\left(\Lambda_3-2 b_{10}\Lambda_4\right)L_1^2
+\frac{\beta_2}{6 \Lambda_4^2}
\left(\Lambda_3-3 b_{10}\Lambda_4\right)L_0
\\
&+\left(\frac{b_{10}\beta_2\Lambda_3}{4\Lambda_4^2 }
-\frac{\beta_2(2\Lambda_2-\beta_2)}{6 \Lambda_4^2}+\frac{b_{10}}{\Lambda_3}
-\frac{b_{20}\beta_2}{\Lambda_4}
\right)L_1+b_{30}. 
\end{align*}
Here, 
\begin{align*}
b_{10}=&\frac{\Lambda_3}{\Lambda_4}\left(\frac{\Delta }{2 }-\frac{\beta_2 \Lambda_3^2}{8 \Lambda_4^2}-\frac{3 \beta_2^2 }{2 \Lambda_4}+\frac{\beta_2 \Lambda_2}{2 \Lambda_4}\right) ,
\\
b_{20}=&\frac{1}{\Lambda_4^6}\sum_{2i+2j+3k+4l=22}b^{(20)}_{ijkl}
\beta_2^i\Lambda_2^j\Lambda_3^k\Lambda_4^l,
\quad 
b_{30}=\frac{1}{\Lambda_4^9}\sum_{2i+2j+3k+4l=33}b^{(30)}_{ijkl}
\beta_2^i\Lambda_2^j\Lambda_3^k\Lambda_4^l,
\end{align*}
where $b^{(20)}_{ijkl}$, $b^{(30)}_{ijkl}$ are 
polynomials in $c$, $\Delta$. 

\subsubsection{Rank three case}
\begin{align*}
\alpha=&\frac{3\beta_3\Lambda_4\Lambda_5}{\Lambda_6^2}-\frac{3\beta_3\Lambda_5^3}{16\Lambda_6^3}-\frac{3\beta_3\Lambda_3}{2\Lambda_6}-4\Delta,
\quad
\beta_1=\frac{3\beta_3\Lambda_4}{2\Lambda_6}
-\frac{3\beta_3\Lambda_5^2}{8\Lambda_6^2},
\quad
\beta_2=\frac{3\beta_3\Lambda_5}{4\Lambda_6},
\\
v_1=&b_{10}-\frac{3\beta_3}{2\Lambda_6}L_2,
\\
v_2=&b_{20}-\frac{3\beta_3}{4\Lambda_6}L_1
+\frac{9\beta_3^2}{8 \Lambda_6^2}L_2^2
-\frac{3 \beta_3}{8 
\Lambda_6^2}\left(
4 b_{10}\Lambda_6-\Lambda_5\right)L_2,
\\
v_3=&b_{30}-\frac{\beta_3}{2\Lambda_6}L_0
+\frac{9\beta_3^2}{8 
\Lambda_6^2}L_1L_2
+\frac{9\beta_3^2}{16 
\Lambda_6^3}\left(2 b_{10}\Lambda_6-\Lambda_5\right)L_2^2
-\frac{9\beta_3^3}{16 
\Lambda_6^3}L_2^3-\frac{ \beta_3}{4
\Lambda_6^2}\left(
3 b_{10}\Lambda_6-\Lambda_5\right)L_1
\\
&-\frac{\beta_3}{16 
\Lambda_6^3}\left(3 \Lambda_5 (\Lambda_5 - 2 b_{10} \Lambda_6)
+4 \Lambda_6( 6 b_{20} \Lambda_6-\Lambda_4)\right)L_2.
\end{align*}
Here, 
\begin{align*}
b_{10}=&\frac{15 \beta_3 \Lambda_5^4}{128 \Lambda_6^4}-\frac{9 \beta_3 \Lambda_4 \Lambda_5^2}{16 \Lambda_6^3}-\frac{27 \beta_3^2 \Lambda_5}{8 \Lambda_6^2}+\frac{3 \beta_3 \Lambda_4^2}{8 \Lambda_6^2}+\frac{\Delta  \Lambda_5}{2 \Lambda_6}+\frac{3 \beta_3\Lambda_3 \Lambda_5 }{4 \Lambda_6^2},
\\
b_{20}=&\frac{1}{\Lambda_6^8}
\sum_{3i+3j+4k+5l+6m=46}b_{ijklm}^{(20)}
\beta_3^i\Lambda_3^j\Lambda_4^k\Lambda_5^l
\Lambda_6^m,
\quad 
b_{30}=\frac{1}{\Lambda_6^{12}}
\sum_{3i+3j+4k+5l+6m=69}b_{ijklm}^{(30)}
\beta_3^i\Lambda_3^j\Lambda_4^k\Lambda_5^l
\Lambda_6^m,
\end{align*}
where $b^{(20)}_{ijklm}$, $b^{(30)}_{ijklm}$ are 
polynomials in $c$, $\Delta$. 

%\section{Data of irregular conformal blocks}

%\subsection{Hermite-Weber}
%
%The second example is the irregular 
%conformal block having one irregular 
%singular point $0$ of rank $2$ 
%and one regular singular point $z$ 
%with null vector condition:
%\begin{equation}\label{eq-HW-CB}
%\langle 0 | \cdot 
%\left(
%\Phi^{1/4}_{\Lambda,\Lambda^{\pm}}(z)
%|\Lambda\rangle
%\right),
%\end{equation}
%where $c=1$, $\Lambda=(\Lambda_2,\Lambda_3,\Lambda_4)$, and $\Lambda^\pm=(\Lambda_2\pm\Lambda_4^{1/2},\Lambda_3,\Lambda_4)$.

%\subsection{Three regular singular points 
%with one null condition and one irregular 
%singular point of rank one}
%
%\subsection{Two regular singular points 
%with one null condition and one irregular 
%singular point of rank two}
%
%\begin{align*}
%&\langle 0|\cdot  
%\left(
%\Phi^{\Delta_{2,1}}_{(\Lambda_2,\Lambda_3,-4\beta_2^2/b^2),(\Lambda_2-2\beta_2,\Lambda_3,-4\beta_2^2/b^2)}(x)
%|(\Lambda_2,\Lambda_3,\Lambda_4)\rangle
% \right)=x^{b^2(\Lambda_3^2/64\beta_2-)
% -3\Delta_{2,1}} e\left(1+\sum_{n=1}^\infty a_mx^m\right)
% \\
% &
%\end{align*}

%

\bigskip

{\bf Acknowledgement.} 
The author is grateful to 
K.~Hiroe, M.~Jimbo, H.~Sakai and Y.~Yamada   
for suggestions and discussions.

\end{document}